\newif \ifjournal
\renewcommand{\P}{\mbox{P}}
\newcommand{\E}{\mbox{E}}
\newcommand{\Var}{\mathrm{Var}}
\newcommand{\X}{(X_1,X_2)}
\newcommand{\RV}{\mbox{RV}}
\newcommand{\pa}{\mathrm{pa}}
\newcommand{\an}{\mathrm{an}}
\newcommand{\An}{\mathrm{An}}
\newcommand{\G}{\Gamma}
\newcommand{\eps}{\varepsilon}
\newcommand{\oneto}[1]{\{1, \dots, #1\}}
\newcommand{\matr}[1]{\mathbf{#1}}
\newtheorem{prop}{Proposition}
\newtheorem{cor}{Corollary}
\newtheorem{definition}{Definition}
\newtheorem{lemma}{Lemma}
\newtheorem{theorem}{Theorem}
\theoremstyle{remark}
\newtheorem{example}{Example}
\theoremstyle{definition}
\newtheorem{remark}{Remark}
\newcommand{\myabstract}{
  \begin{abstract}
  Causal questions are omnipresent in many scientific problems. While much progress has been made in the analysis of causal relationships between random variables, these methods are not well suited if the causal mechanisms only manifest themselves in extremes. This work aims to connect the two fields of causal inference and extreme value theory. We define the causal tail coefficient that captures asymmetries in the extremal dependence of two random variables. In the population case, the causal tail coefficient is shown to reveal the causal structure if the distribution follows a linear structural causal model. This holds even in the presence of latent common causes that have the same tail index as the observed variables. Based on a consistent estimator of the causal tail coefficient, we propose a computationally highly efficient algorithm that estimates the causal structure. We prove that our method consistently recovers the causal order and we compare it to other well-established and non-extremal approaches in causal discovery on synthetic and real data. The code is available as an open-access \texttt{R} package.
  \end{abstract}
}
\newif \ifmyswitch
  \title{Causal discovery in heavy-tailed models}
  \author{
  Nicola Gnecco\\
  University of Geneva\\
  \url{nicola.gnecco@unige.ch}
  \and
  Nicolai Meinshausen\\
  ETH Zurich\\
  \url{meinshausen@stat.math.ethz.ch}
  \and
  Jonas Peters\\
  University of Copenhagen\\
  \url{jonas.peters@math.ku.dk}
  \and
  Sebastian Engelke\\
  University of Geneva\\
  \url{sebastian.engelke@unige.ch}}
  \date{\today}
\begin{document}

\ifjournal{
  \begin{frontmatter}

  \title{Causal discovery in heavy-tailed models}
  \runtitle{Causal discovery in heavy-tailed models}

  \begin{aug}
  \author[A]{\fnms{Nicola} \snm{Gnecco}\ead[label=e4,mark]{nicola.gnecco@unige.ch}},
  \author[B]{\fnms{Nicolai} \snm{Meinshausen}\ead[label=e2]{meinshausen@stat.math.ethz.ch}},\\
  \author[C]{\fnms{Jonas} \snm{Peters}\ead[label=e3]{jonas.peters@math.ku.dk}}
  \and
  \author[A]{\fnms{Sebastian} \snm{Engelke}\ead[label=e1,mark]{sebastian.engelke@unige.ch}}

  \address[A]{Research Center for Statistics,
  University of Geneva,
  \printead{e4};
  \printead{e1}
  }
  \address[B]{Department of Mathematics,
  ETH Zurich,
  \printead{e2}
  }
  \address[C]{Department of Mathematical Sciences,
  University of Copenhagen,
  \printead{e3}
  }
  \end{aug}

  \myabstract

  \begin{keyword}[class=MSC]
  \kwd[Primary ]{62Hxx}
  \kwd[; secondary ]{62G08, 62G32.}
  \end{keyword}
  
  \begin{keyword}
  \kwd{causality}
  \kwd{extreme value theory}
  \kwd{heavy-tailed distributions}
  \kwd{non-parametric estimation.}
  \end{keyword}
  
  \end{frontmatter}
}\else{
  \maketitle
  \myabstract
}\fi

\section{Introduction and background}
Reasoning about the causal structure underlying a data generating
process is a key scientific question in many disciplines.
In recent years, much progress has been made in the formalisation
of causal language \citep{Pearl2009, spirtes2000causation, Imbens2015}.
In several situations, causal
relationships
manifest themselves only in
extreme events.
As stated by~\citet[Sec.\ 8.7]{cox1996multivariate}, large interventions (also named natural
experiments) often carry information that is likely to be causal.
In this light, one can view extreme observations as natural experiments
that perturb a given system and facilitate causal analysis.
On the one hand, existing causal methodology,
focuses on moment related quantities of the distribution
and
is not tailored
to estimate causal relationships from extreme events.
On the other hand,
the statistics of univariate extremes
is relatively well understood
\citep{res2008} and there is a large set of tools for the analysis
of heavy-tailed distributions.
This work attempts to
bring
the fields of causality and extremes closer.

Let us first consider a
bivariate random vector $\X$
and assume that we are interested in the causal relationship between the two random variables, $X_1$ and $X_2$.
We consider a linear structural causal model \citep[Sec.\ 1.4]{Pearl2009}
over variables including $\X$ (without feedback mechanisms).
We can then distinguish between the
six scenarios of causal configurations
shown in Figure~\ref{fig:6cases}
that include $X_1$, $X_2$ and possibly a third unobserved random variable $X_0$.
This collection is complete in the following sense.
Any structural causal model
including $X_1$ and $X_2$
is interventionally equivalent
\citep[e.g.,][Sec.\ 6.8]{Peters2017book}
to one of the examples shown in Figure~\ref{fig:6cases}
when taking into account
interventions on $X_1$ or $X_2$ only.
The dashed edges can be interpreted as directed paths induced by a linear structural causal model (SCM);
see Section~\ref{sec:scms} for a formal definition.
\begin{figure}[!ht]
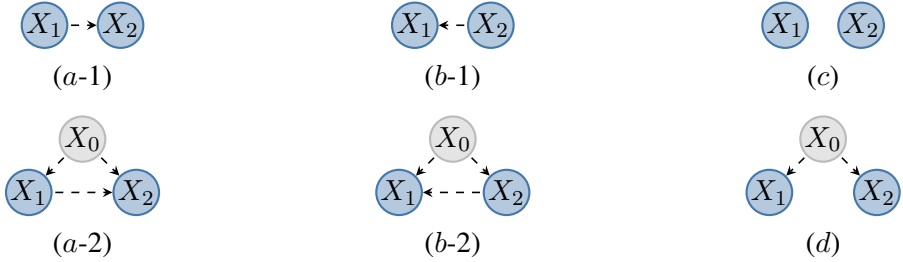

\begin{minipage}{0.3\linewidth}
  \centering
  \includegraphics{x1_x2.tex}\\
  ($a$-1)
\end{minipage}
\begin{minipage}{0.3\linewidth}
  \centering
  \includegraphics{x2_x1.tex}\\
  ($b$-1)
\end{minipage}
\begin{minipage}{0.3\linewidth}
  \centering
  \includegraphics{x1x2.tex}\\
  ($c$)
\end{minipage}

\vspace{3.00mm}

\begin{minipage}{0.3\linewidth}
  \centering
  \includegraphics{x1_xcx2.tex}\\
  ($a$-2)
\end{minipage}
\begin{minipage}{0.3\linewidth}
  \centering
  \includegraphics{x1xc_x2.tex}\\
  ($b$-2)
\end{minipage}
\begin{minipage}{0.3\linewidth}
  \centering
  \includegraphics{x1xcx2.tex}\\
  ($d$)
\end{minipage}
\caption{\label{fig:6cases}
The six possible causal configurations between $X_1$ and $X_2$, and possibly a third
unobserved variable $X_0$.
The variable $X_0$ will be referred to as a hidden confounder. Formal definitions are included in Section~\ref{sec:scms}.
We will see in Section~\ref{subsec:propGamma} that both configurations ($a$-1) and ($a$-2), for example,
show the same tail coefficient behaviour.
The enumeration letters ($a$)--($d$) visualise the cases in Table~\ref{tab:caus}.
}
\end{figure}
Here and in the sequel, we say that
``$X_1$ is the cause of $X_2$'' or
``$X_1$ causes $X_2$''
if there is a directed path from $X_1$ to $X_2$ in the
SCM's
underlying directed acyclic graph (DAG).
Assume that $X_1$ is the cause of $X_2$, and
that both variables are heavy-tailed.
Intuitively,
if the causal relationship
is monotonic, then an
extremely large value of $X_1$ should cause an
extreme value of $X_2$. The causal direction should, therefore,
be strongly visible in the largest absolute values of the random variables.
Here, ``extreme'' is to be seen in the respective scale of each
variable, so it will make sense to consider the copula $\{F_1(X_1), F_2(X_2)\}$, where
$F_j$ is the marginal distribution of $X_j$, $j=1,2$.
To exploit this intuition, we
define the \emph{causal tail coefficient} between variables $X_1$ and $X_2$ as
\begin{align}\label{ctc}
  \G_{12} := \lim_{u\to 1^-} \E\left[F_2(X_2)  \mid F_1(X_1) > u \right] \in [0,1],
\end{align}
if the limit exists. It reflects the causal relationship between
$X_1$ and $X_2$ since, intuitively, if $X_1$ has a
monotonically increasing causal influence on $X_2$,
we expect $\G_{12}$ to be close to one. Conversely,
extremes of $X_2$ will not necessarily lead to extremes of $X_1$ and therefore,
the coefficient $\G_{21}$, where the roles of $X_1$ and $X_2$ in \eqref{ctc} are reversed,
may be strictly smaller than one. This asymmetry
will be made precise in Theorem~\ref{thm:6cases} for linear structural causal models with heavy-tailed noise variables, and it forms the basis of our causal discovery
algorithm.

A different perspective of our approach goes beyond the usual way of defining causality in the bulk of distribution. Namely, by looking at the signal in the tails, we might recover an extremal causal mechanism that is not necessarily present in the central part of the distribution.
An example of this can be observed in financial markets.
During calm periods, it is not clear whether any causal relationship exists among the financial variables.
However, during turmoil, it is common to observe one specific stock or sector causing very negative (or positive) returns of other stocks (or sectors), displaying an extremal causal mechanism.
In the finance literature, the concept of extremal causal mechanism is explained in terms of contagion, i.e., the spread of shocks across different markets~\citep[see][]{forbes2002no}. For example, \citet{rodriguez2007measuring} uses a copula model with time-varying parameters to explain such contagion phenomena within countries in Latin America and Asia.
On the other hand, there are also applications where the causal mechanism is present in the bulk of the distribution but absent in the tails.
\citet{seneviratne2010investigating} present this type of causal relationship between air temperature and the evapotranspiration of soil moisture.
As the air temperature increases,
the evapotranspiration process increases, too. This continues until the soil moisture resources are reduced to the point that a further increase in the temperature has no causal effect on the evapotranspiration.

Heavy-tailed distributions are an example of non-Gaussian models, which have received some attention in the causal literature.
The
LiNGAM
algorithm \citep{shimizu2006linear} exploits
non-Gaussianity through independent component analysis \citep{comon1994independent} to estimate an underlying causal structure.
\citet{mis2016} consider stable noise variables in a Bayesian network and develop a
structure learning algorithm based on BIC.
The work of \citet{gissibl2019identifiability}
also
studies causal questions related to extreme events.
They consider max-linear models \citep{gissibl2018max} where only the largest effect
propagates to the descendants in a Bayesian network.
The work by~\citet{naveau2018revising} falls within the domain of attribution science. Namely, by studying extreme climate events, they try to answer
counterfactual questions such as ``what the Earth's climate might have been without anthropogenic interventions''.
\citet{mhalla2019causal} develop a method to estimate the causal relationships between bivariate extreme observations.
It
relies on the Kolmogorov complexity concept \citep[see][]{kolmogorov1965three} adapted to high conditional quantiles. \cite{eng2021} reviews recent work on causality and sparsity in extremes.

The rest of the paper is organised as follows.
Sections~\ref{sec:scms} and~\ref{sec:evt} briefly
review structural causal models and some important concepts from extreme value theory.
Section~\ref{sec:probaresults}
contains
a causal model for heavy-tailed distributions with {positive} coefficients.
We prove
that,
given the underlying distribution,
the causal tail coefficient
allows us
to distinguish between the causal scenarios shown in Figure~\ref{fig:6cases}.
In Section~\ref{sec:causaldisc},
we introduce an algorithm named \emph{extremal ancestral search} (EASE) that
can be applied to a matrix of causal tail coefficients and
that retrieves the causal order of the true graph, in the population case. We prove that our algorithm estimates a causal order even in the case where the causal tail coefficients are estimated empirically from data, as the sample size tends to infinity.
In Section~\ref{sec:extensions}, we first 
generalise the results of the previous sections to the case of a structural causal model with {real-valued} coefficients. To do that, we introduce a more general causal tail coefficient that is sensitive to both the upper and lower tail of the variables.
Second, we discuss the robustness properties of EASE in the presence of hidden confounders.
Third, we analyse the properties of the causal tail coefficient when the noise variables have different tail indices.
Section~\ref{sec:numerical_results} contains experiments on simulated data and real-world applications.
\ifjournal{
All proofs of the main paper are given in the Supplementary Material
\citep{gnesupp2019}.
Sections~\ref{app:regvar} to~\ref{app:financial_dynamics} are in the Supplementary Material.
}\else{
The Appendix consists of six sections.
  Section~\ref{app:regvar}	summarises important facts about regularly varying random
  variables.
  Section~\ref{app:proofs} contains the proofs of the results of the paper.
  Section~\ref{app:minimax_search} illustrates how the EASE algorithm retrieves a causal order of
  a DAG.
  Section~\ref{app:simulation_experiments} describes the settings used in the simulation study. 
  Section~\ref{app:additional_figs} contains additional figures and tables.
  Section~\ref{app:financial_dynamics} presents further results for Section~\ref{sec:financial}.
 
}\fi

\subsection{Structural causal models}\label{sec:scms}
A
linear
structural causal model, or
linear
SCM, (\citealp{Bollen1989}; \citealp[Sec.\ 1.4]{Pearl2009}) over variables $X_1, \ldots, X_p$ is a collection of~$p$ assignments
  \begin{align}\label{eq:linear-SCM}
  X_j := \sum_{k \in \pa(j)} \beta_{jk} X_k + \varepsilon_j, \quad j\in V,
  \end{align}
where $\pa(j) \subseteq V = \{1, \ldots, p\}$ and $\beta_{jk} \in \mathbb R\setminus\{0\}$,
together with a joint distribution over the noise variables $\varepsilon_1, \ldots, \varepsilon_p$.
Here, we assume that the noise variables are jointly independent and that
the induced graph $G = (V, E)$,
obtained by adding directed edges from the
parents $\pa(j)$ to $j$,
is a directed acyclic graph (DAG) with nodes $V$ and (directed) edges $E \subset V\times V$.
To ease notation, we adopt the convention to sometimes identify a node with its corresponding random variable.
To highlight the fact that $\pa(j)$ depends on a specific DAG $G$, we write $\pa(j, G)$, $j\in V$.

Structural causal models describe not only observational distributions but also interventional distributions.
An intervention on $X_j$, for example, is defined
as replacing the corresponding assignment~\eqref{eq:linear-SCM} while leaving the other equations as they were.
In practice, a causal model can be falsified
via randomised experiments \citep[see, e.g.,][Sec.\ 6.8]{Peters2017book}.

We define a directed path between node $j$ and $k$
as a sequence of distinct vertices such that successive pairs of vertices belong to the edge set $E$ of $G$.
If there is a directed path from $j$ to $k$, we say that $j$ is an ancestor of $k$ in $G$. The set of ancestors of $j$ is denoted by $\An(j, G)$, and we write $\an(j, G) = \An(j, G)\setminus\{j\}$ when we consider the ancestors of $j$ except itself.
A node $j$ that has no ancestors, i.e., $\an(j, G) = \varnothing$, is called a source node (or root node).
Given two nodes $j, k \in V$, we say that $X_j$ \emph{causes} $X_k$ if there is a directed path from $j$ to $k$ in $G$.
Furthermore, given nodes $i, j, k \in V$, we say that $X_i$ is a \emph{confounder} (or \emph{common cause}) of $X_j$ and $X_k$ if there is a directed path from node $i$ to node $j$ and $k$ that does not include $k$ and $j$, respectively. Whenever a confounder is unobserved, we say it is a  \emph{hidden} confounder or \emph{hidden} common cause.
Finally, if $\An(j, G)\cap\An(k, G) = \varnothing$,
then we say that there is \emph{no causal link} between $X_j$ and $X_k$.
A graph $G_1 = (V_1, E_1)$ is called a \emph{subgraph} of $G$ if $V_1 \subseteq V$ and $E_1 \subseteq (V_1 \times V_1)\cap E$. Recall that any subgraph of a DAG $G$ is also a DAG.
For details on graphical models, we refer to \citet{lau1996}.

\subsection{Regularly varying functions and random variables}\label{sec:evt}
A positive, measurable function $f$ is said to be
\emph{regularly varying} with index $\alpha \in \mathbb{R}$, $f\in\RV_\alpha$, if it is defined on some neighbourhood of infinity $[x_0, \infty)$, $x_0>0$, and if for all  $c > 0$,
$\lim_{x\to\infty} {f(c x)}/{f(x)} = c^\alpha$.
If $\alpha = 0$, $f$ is said to be slowly varying, $f\in \RV_0$.

A random variable $X$ is said to be \emph{regularly varying} with index $\alpha$ if
\begin{align*}
    \P(X > x) %
    \sim \ell(x)x^{-\alpha},\quad x\to\infty,
\end{align*}
for some $\ell \in \RV_0$, where for any function $f$ and $g$, we write $f \sim g$ if $f(x)/g(x) \to 1$ as $x\to\infty$.
If $X$ is regularly varying with index $\alpha$ then $cX$ is also regularly varying with the same index, for any $c > 0$.
For example, random variables with a Student's-$t$, Pareto, Cauchy, or Fr\'{e}chet distribution are regularly varying.

A characteristic property of regularly varying random variables is the  \textit{max-sum-equivalence}. The idea is that large sums of independent random variables tend to be driven by only one single large value. For this reason, the tail of the distribution of the maximum is equal to the tail of the distribution of the sum. For a rigorous formulation see
\ifjournal{Section~\ref{app:regvar} in the Supplementary Material.
}\else{Appendix~\ref{app:regvar}. }\fi
We refer to \citet[Sec.\ A3]{embrechts2013modelling} for further details on regular variation and max-sum-equivalence.

\section{The causal tail coefficient}\label{sec:probaresults}
To measure the causal effects in the extremes, we define the following parameter.
\begin{definition}\label{def:gamma}
Given two random variables $X_1$ and $X_2$, we define the \emph{causal tail coefficient}
  \begin{equation}\label{eq:gamma}
      \G_{jk} = \lim_{u\to 1^-} \E\left[F_k(X_k)  \mid F_j(X_j) > u \right],
  \end{equation}
if the limit exists, for $j, k = 1, 2$ and $j \neq k$.
\end{definition}
The coefficient $\G_{jk}$ lies between zero and one and is invariant under any marginal strictly increasing transformation
since it depends on the rescaled margins $F_j(X_j)$, for $j = 1, 2$.
 Below, we lay down the setup.

\subsection{Setup}\label{subsec:setup}
Consider a linear structural causal model (SCM) with an induced
directed acyclic graph
(DAG) $G$,
    \begin{align*}
  X_j := \sum_{k \in \pa(j, G)} \beta_{jk} X_k + \varepsilon_j, \quad j\in V,
  \end{align*}
where we assume that the coefficients $\beta_{jk}$ are strictly positive, $j, k \in V$. Let the independent noise variables $\eps_1, \ldots, \eps_p$ be real-valued and regularly varying with comparable tails, i.e., there exists a tail-index $\alpha > 0$ and $\ell \in \RV_0$ such that for all $j\in V$, there exists $c_j >0$ that satisfies
  \begin{align}\label{eq:noise}
  \P(\eps_j > x) \sim c_j \ell(x) x^{-\alpha},\quad x\to\infty.
  \end{align}
To simplify the notation, we rescale the variables $X_j$ such that $c_j = 1$, $j\in V$.
Furthermore, denote by $\beta_{k\to j}$ the sum of
distinct
weighted
directed
paths from node $k$ to node $j$, with $\beta_{j \to j} := 1$.
Since $G$ is acyclic,
we can express recursively each variable $X_j$, $j\in V$, as a \emph{weighted} sum of the noise terms $\eps_1, \dots, \eps_k$ that belong to the ancestors of $X_j$, that is,
  \begin{align}\label{eq:scm-recursive}
  X_j = \sum_{h\in \An(j, G)}\beta_{h\to j}\eps_h.
  \end{align}
The noise terms in \eqref{eq:scm-recursive} are independent and regularly varying with comparable tails as in~\eqref{eq:noise}. Therefore, by using Lemma~\ref{lemma:tailconvp}
\ifjournal{of the Supplementary Material,
}\else{of Appendix~\ref{app:regvar}, }\fi
we can write
  \begin{align*}
  \P(X_j > x) \sim \sum_{h\in\An(j, G)}\beta_{h\to j}^{\alpha}\ell(x)x^{-\alpha},\quad x\to\infty.
  \end{align*}

Any probability distribution induced by an SCM is Markov with respect to the induced DAG $G$, and thus we can read off statistical independencies
from it by $d$-separation (\citealp{Lauritzen1990}; \citealp[Sec.\ 1.2.3]{Pearl2009}).
Conversely, to infer dependencies directly from the graph, one needs to assume that the distribution is \emph{faithful} to the DAG $G$ \citep[see][Sec.\ 2.3.3]{spirtes2000causation}.
Most causal methods based on restricted SCMs, such as
LiNGAM~\citep[see][]{shimizu2006linear}, RESIT~\citep{peters2014causal}, and~\citet{peters2014identifiability}, do not assume faithfulness.
Similarly, in this work we require the milder assumption that $\beta_{j\to k}$ is non-zero if $j$ is an ancestor of $k$, i.e., $X_j$ causes $X_k$. 
This is automatically satisfied if 
the SCM has positive coefficients.
In the sequel, we refer to this model as a \textit{heavy-tailed linear SCM}.

\subsection{Causal structure and the causal tail coefficient}\label{subsec:propGamma}
In the setting of Section~\ref{subsec:setup}, the causal tail coefficient always exists and
carries information about the underlying causal structure. In particular, it can be expressed in closed form.
\begin{lemma}\label{prop:p01}
Consider a heavy-tailed linear SCM over $p$ variables. Then, for $j, k \in V$ and $j\neq k$,
  \begin{align*}
  \G_{jk} = \frac{1}{2} + \frac{1}{2}\frac{\sum_{h\in A_{jk}}\beta_{h \to j}^{\alpha}}{\sum_{h\in \An(j, G)}\beta_{h \to j}^{\alpha}},
  \end{align*}
where $A_{jk} = \An(j, G) \cap \An(k, G)$, and the sum over an empty index set equals zero.
\end{lemma}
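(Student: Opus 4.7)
The core idea is to exploit the \emph{max-sum-equivalence} for regularly varying summands: when a sum of independent, comparably heavy-tailed variables is extreme, asymptotically exactly one summand is the ``culprit'', and the conditional probability that summand is $\beta_{h\to j}\varepsilon_h$ is proportional to $\beta_{h\to j}^\alpha$. I would combine this with a case split on whether the culprit $\varepsilon_h$ also appears in the representation of $X_k$.

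First, using \eqref{eq:scm-recursive}, I would write $X_j = \sum_{h\in\An(j,G)}\beta_{h\to j}\varepsilon_h$ and $X_k = \sum_{h\in\An(k,G)}\beta_{h\to k}\varepsilon_h$, and set $S_j := \sum_{h\in\An(j,G)}\beta_{h\to j}^{\alpha}$ so that $\P(X_j > x)\sim S_j\,\ell(x)\,x^{-\alpha}$ by Lemma~\ref{lemma:tailconvp}. Let $u_x = F_j^{-1}(u)\to\infty$ as $u\to 1^-$. For each $h\in\An(j,G)$, consider the ``dominance'' event $E_h(x) = \{\beta_{h\to j}\varepsilon_h > x,\; \beta_{h'\to j}\varepsilon_{h'}\le x \text{ for all }h'\neq h\}$. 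Max-sum-equivalence gives
\begin{equation*}
\P(E_h(u_x)\mid X_j > u_x) \;\longrightarrow\; \frac{\beta_{h\to j}^{\alpha}}{S_j},\qquad \sum_{h\in\An(j,G)}\P(E_h(u_x)\mid X_j > u_x)\longrightarrow 1.
\end{equation*}

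Second, I would condition on which noise drives the extreme:
\begin{equation*}
\E[F_k(X_k)\mid F_j(X_j)>u] \;=\; \sum_{h\in\An(j,G)}\E[F_k(X_k)\mid E_h(u_x)]\,\P(E_h(u_x)\mid X_j>u_x) + o(1),
\end{equation*}
and analyse the two cases. If $h\in A_{jk}=\An(j,G)\cap\An(k,G)$, then $\varepsilon_h$ also appears in $X_k$ with a strictly positive coefficient $\beta_{h\to k}$, so on $E_h(u_x)$ we have $X_k \ge \beta_{h\to k}\varepsilon_h - |\text{rest}| \to \infty$, giving $\E[F_k(X_k)\mid E_h(u_x)]\to 1$ by dominated convergence (since $F_k(X_k)\le 1$). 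If instead $h\in\An(j,G)\setminus A_{jk}$, then $h\notin\An(k,G)$, so $\varepsilon_h$ does not enter the representation of $X_k$; by independence of the noise variables and because conditioning the non-dominant $\varepsilon_{h'}$ on $\{\beta_{h'\to j}\varepsilon_{h'}\le u_x\}$ becomes vacuous as $u_x\to\infty$, the conditional distribution of $X_k$ given $E_h(u_x)$ converges to the unconditional law of $X_k$. Hence $\E[F_k(X_k)\mid E_h(u_x)]\to \E[F_k(X_k)] = 1/2$.

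Third, collecting the two cases,
\begin{equation*}
\Gamma_{jk} \;=\; \sum_{h\in A_{jk}}\frac{\beta_{h\to j}^{\alpha}}{S_j}\cdot 1 \;+\; \sum_{h\in\An(j,G)\setminus A_{jk}}\frac{\beta_{h\to j}^{\alpha}}{S_j}\cdot\tfrac12 \;=\; \tfrac12 + \frac{1}{2S_j}\sum_{h\in A_{jk}}\beta_{h\to j}^{\alpha},
\end{equation*}
which is the claimed expression. The edge cases $A_{jk}=\emptyset$ and $A_{jk}=\An(j,G)$ follow from the convention that an empty sum vanishes.

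The main obstacle is the rigorous justification of the two limiting statements: (i) the conditional probability of the dominance events $E_h(u_x)$, which requires a careful application of the max-sum-equivalence together with the comparable-tails assumption so that $\beta_{h\to j}^{\alpha}$ appear as relative weights; and (ii) the limit interchange in the second case, where one must show that the tail conditioning on the dominant noise asymptotically leaves the other noise variables, and hence $X_k$, distributed as in the unconditional law. Both are standard consequences of regular variation but need to be carried out carefully because of the multiple shared ancestors across $X_j$ and $X_k$.
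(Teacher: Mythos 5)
Your proposal is correct and follows essentially the same route as the paper's proof: both attribute the extreme of $X_j$ to a single dominant noise term $\beta_{h\to j}\eps_h$ (the paper via inclusion--exclusion over the union of exceedance events plus the max--sum lemmas, you via disjoint single-big-jump dominance events) and then show that the conditional expectation of $F_k(X_k)$ tends to $1$ for $h\in A_{jk}$ and to $1/2$ for $h\notin A_{jk}$. The only cosmetic difference is that the paper obtains the factor $1/2$ exactly from the independence of $X_k$ and $\eps_h$, whereas your dominance events additionally constrain the remaining noises, so you need the extra (valid) observation that this conditioning becomes asymptotically vacuous.
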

For a proof see
\ifjournal{the Supplementary Material~\ref{proof:p01}.
}\else Appendix~\ref{proof:p01}. \fi
Lemma~\ref{prop:p01} provides a closed form expression for the causal tail coefficient $\G_{jk}$, which can be written as a sum of two terms. The first term
corresponds to the case when $X_j$ and $X_k$ are independent. The second term
is non-negative and depends on the coefficients of the SCM and the tail index
$\alpha > 0$.
By using matrix notation, it is possible to express the coefficient $\G_{jk}$ more compactly. Consider the matrix of coefficients $\matr{B}$ of the DAG $G$, where $\matr{B}_{jk} = \beta_{jk}$, $j, k \in V$,
and let $\matr{I}$ be the identity matrix. Furthermore, for any $\matr{M}\in\mathbb{R}^{p\times p}$, denote by $\matr{M}_{\alpha}$ the matrix where each entry of $\matr{M}$ is raised to the power $\alpha$. By applying the Neumann series, we obtain $\matr{H} = (\matr{I} - \matr{B})^{-1}$ where $\matr{H}_{jk} = \beta_{k \to j}$
for $j, k \in V$. Therefore, for $j,k \in V$ and $j \neq k$, we can write the causal tail coefficient as
  \begin{align}\label{eq:gamma_formula}
  \G_{jk} = \frac12 + \frac12\frac{e_j^T \matr{H}_{\alpha} e_{A_{jk}}}{e_j^T \matr{H}_{\alpha}e_{\An(j, G)}},
  \end{align}
where $e_j\in\mathbb{R}^p$ is the $j$-th standard basis vector, and $e_{C} = \sum_{j\in C} e_j \in \mathbb{R}^p$ for any set $C \subseteq \{1,\dots, p\}$.

\begin{example}
  Consider the ``diamond'' graph $G = (V, E)$ in Figure~\ref{fig:diamond_dag},
  with $V = \oneto{4}$.
\begin{figure}[!h]
\centering
\includegraphics{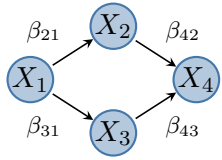}
\caption{Graphical representation of an SCM with an underlying ``diamond'' DAG $G$.}
\label{fig:diamond_dag}
\end{figure}
In this graph, for instance, it is easy to see that $\Gamma_{14} = 1$. To compute $\Gamma_{41}$, we list the weighted directed paths from the ancestors of node~4 to node~4 itself, i.e.,
  \begin{align*}
  \beta_{1\to 4} = \beta_{42}\beta_{21} + \beta_{43}\beta_{31},\
  \beta_{2\to 4} = \beta_{42},\
  \beta_{3\to 4} = \beta_{43},\
  \beta_{4\to 4} = 1.
  \end{align*}
Additionally, the set of common ancestors of node 1 and 4 is $A_{14} = \{1\}$. Putting everything together, by using Lemma~\ref{prop:p01}, or formula \eqref{eq:gamma_formula}, we obtain
  \begin{align*}
  \G_{41} = \frac{1}{2} + \frac{1}{2}\frac{\beta_{1\to 4}^{\alpha}}{\beta_{1\to 4}^{\alpha} + \beta_{2\to 4}^{\alpha} + \beta_{3\to 4}^{\alpha} + \beta_{4\to 4}^{\alpha}} < 1.
  \end{align*}
In this example, we see that it is possible to infer the causal relationship
between $X_1$ and $X_4$ because $\G_{14} > \G_{41}$.
\end{example}

Consider now a general,
heavy-tailed linear SCM over $p$ variables including $X_1$ and $X_2$,
and inducing graph $G$.
The following theorem shows that the \emph{causal tail coefficient},
which is computable from the bivariate distribution of $X_1$ and $X_2$,
see Equation~\eqref{eq:gamma}, encodes the causal relationship between the two variables.
\begin{theorem}\label{thm:6cases}
Consider a heavy-tailed linear SCM over $p$ variables including $X_1$ and $X_2$, as defined in Section~\ref{subsec:setup}.
Then, knowledge of
$\Gamma_{12}$ and
$\Gamma_{21}$ allows us to distinguish the following cases:
(a) $X_1$ causes $X_2$,
(b) $X_2$ causes $X_1$,
(c) there is no causal link between $X_1$ and $X_2$
(i.e., $\An(1, G) \cap \An(2, G) = \varnothing$),
(d) there is a node $j \not \in \{1, 2\}$, such that $X_j$ is a common cause of $X_1$ and $X_2$ and neither $X_1$ causes $X_2$ nor $X_2$ causes $X_1$.
The corresponding values for $\Gamma_{12}$ and $\Gamma_{21}$ are depicted in Table~\ref{tab:caus}.
  \begin{table}[h]
  \centering
  \caption{Summary of the possible values of $\Gamma_{12}$ and $\Gamma_{21}$ and the implications for causality.}
  \label{tab:caus}
  \begin{tabular}[t]{llll}
  \toprule
  & $\Gamma_{21} = 1$ & $\Gamma_{21} \in (1/2,1)$ & $\Gamma_{21} = 1/2$ \\
  \midrule
  $\Gamma_{12} = 1$ &   & ($a$) $X_1$ causes $X_2$ &  \\
  $\Gamma_{12} \in (1/2,1)$ & ($b$) $X_2$ causes $X_1$  & ($d$) common cause& \\
  $\Gamma_{12} = 1/2$ &   &   & ($c$) no causal link \\
  \bottomrule
  \end{tabular}
  \end{table}
\end{theorem}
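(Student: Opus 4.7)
The plan is to reduce the theorem to a direct case analysis using the closed-form expression from Lemma~\ref{prop:p01}, namely
\[
\Gamma_{jk} = \tfrac{1}{2} + \tfrac{1}{2}\cdot\frac{\sum_{h\in A_{jk}}\beta_{h \to j}^{\alpha}}{\sum_{h\in \An(j, G)}\beta_{h \to j}^{\alpha}},\qquad A_{jk} = \An(j,G)\cap\An(k,G).
\]
Since all $\beta_{h\to j}$ with $h\in\An(j,G)$ are strictly positive (the SCM has positive coefficients and $h\in\An(j,G)$ guarantees at least one directed path from $h$ to $j$), the behaviour of $\Gamma_{jk}$ is fully controlled by whether $A_{jk}$ is empty, equal to $\An(j,G)$, or strictly in between. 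Thus the theorem boils down to tracking set inclusions between $\An(1,G)$, $\An(2,G)$, and their intersection in each of the four configurations.

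For case (a), where $X_1$ causes $X_2$, every ancestor of node $1$ is also an ancestor of node $2$, so $\An(1,G)\subseteq\An(2,G)$ and hence $A_{12}=\An(1,G)$. The numerator and denominator of the ratio defining $\Gamma_{12}$ then coincide, giving $\Gamma_{12}=1$. For $\Gamma_{21}$, the inclusion is strict because $2\in\An(2,G)$ but $2\notin\An(1,G)$ (otherwise $G$ would contain a cycle), so the ratio is strictly below $1$; at the same time, $1\in A_{21}$ contributes $\beta_{1\to 2}^{\alpha}>0$ to the numerator, so the ratio is strictly above $0$, yielding $\Gamma_{21}\in(1/2,1)$. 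Case (b) follows by swapping the roles of $1$ and $2$.

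For case (c), the assumption $\An(1,G)\cap\An(2,G)=\emptyset$ gives $A_{12}=\emptyset$, so both second terms vanish and $\Gamma_{12}=\Gamma_{21}=1/2$. For case (d), the presence of a common cause $X_j$ with $j\notin\{1,2\}$ means $j\in A_{12}$, so $A_{12}\ne\emptyset$; on the other hand, $X_1$ does not cause $X_2$ and vice versa, so $1\notin\An(2,G)$ and $2\notin\An(1,G)$, which gives $A_{12}\subsetneq\An(1,G)$ (since $1\in\An(1,G)\setminus A_{12}$) and $A_{12}\subsetneq\An(2,G)$. Both ratios are therefore strictly between $0$ and $1$, placing $\Gamma_{12},\Gamma_{21}\in(1/2,1)$.

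The only real obstacle is bookkeeping: one must be careful with the convention that $j\in\An(j,G)$ (so $j$ contributes $\beta_{j\to j}^{\alpha}=1$ to the denominator of $\Gamma_{jk}$), and must verify, in case (a), that the existence of a directed path from $1$ to $2$ indeed forces $\beta_{1\to 2}>0$, which is built into the heavy-tailed linear SCM setup of Section~\ref{subsec:setup}. Once these points are explicit, the four exhaustive configurations of Figure~\ref{fig:6cases} map bijectively onto the four cells of Table~\ref{tab:caus}, completing the proof.
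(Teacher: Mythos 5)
Your forward implications are correct and use exactly the paper's strategy: reduce everything to the closed form of Lemma~\ref{prop:p01} and track whether $A_{jk}$ is empty, all of $\An(j,G)$, or strictly in between; the individual set-inclusion arguments (including the observation that $2\notin\An(1,G)$ by acyclicity and that $\beta_{1\to2}>0$ forces a strictly positive numerator) match the paper's. The one structural difference is that the paper proves \emph{both} directions for each case --- e.g.\ it separately shows that $\Gamma_{12}=1$ and $\Gamma_{21}\in(1/2,1)$ together imply $\An(1,G)\subsetneq\An(2,G)$, hence $1\in\an(2,G)$ --- whereas you prove only ``configuration $\Rightarrow$ cell'' and then invoke the claim that the four configurations are exhaustive and map bijectively onto four distinct cells. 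That shortcut is logically valid, but the exhaustiveness and mutual exclusivity of (a)--(d) are not free: you should add the one-line argument that if $\An(1,G)\cap\An(2,G)\neq\emptyset$ and neither variable causes the other, then any $j$ in the intersection satisfies $j\notin\{1,2\}$ and is a common cause in the paper's sense (its paths to $1$ and $2$ can be taken to avoid $2$ and $1$, respectively, since otherwise one of (a) or (b) would hold). With that made explicit, your argument is complete; the paper's two-directional version buys the same conclusion without appealing to exhaustiveness, at the cost of a slightly longer case analysis.
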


For a proof see
\ifjournal{the Supplementary Material~\ref{proof:6cases}.
}\else{Appendix~\ref{proof:6cases}. }\fi
This result will also play a key role when estimating causal relationships from finitely many data. 
As a first remark, condition (a) and (b) might also include the presence of a common cause $X_j$.
As a second remark,
the empty entries in Table~\ref{tab:caus} cannot occur under the assumptions made in Section~\ref{subsec:setup}. For example, $\Gamma_{12} = \Gamma_{21} = 1$ can only happen if some variables have different tail indices. One possibility is when the \emph{cause} has a heavier tail than the \emph{effect}. Another scenario is when a common cause $X_j$, for some $j\neq 1, 2$, has heavier tails than the confounded variables $X_1$ and $X_2$. For further discussion on different tail indices see
Section~\ref{subsec:diff_tails}.

\subsection{A non-parametric estimator}\label{G_estimator} Consider a heavy-tailed linear SCM over $p$ variables including $X_1$ and $X_2$, with distributions $F_1$ and $F_2$, as described in Section~\ref{subsec:setup}.
In order to construct a non-parametric estimator of $\Gamma_{12}$ and $\Gamma_{21}$ based on independent observations $(X_{i1}, X_{i2})$, $i=1,\dots, n$, of
$(X_1,X_2)$, we define the empirical distribution function of $X_{j}$ as
    \begin{align}
    \widehat F_{j}(x) = \frac1n \sum_{i = 1}^{n}\mathbf{1}\left\{X_{ij} \leq x\right\}, \quad x \in \mathbb{R},
    \end{align}
for $j=1,2$. Denote by $g^{\gets}$ the left continuous generalised inverse
    \begin{align*}
    g^{\gets}(y) = \inf\left\{x \in \mathbb{R}:\ g(x)\geq y\right\}, \quad y\in\mathbb{R}.
    \end{align*}
In addition, let the $(n-k)$-th order statistics be denoted by $X_{(n-k),j} = \widehat F_j^{\gets}(1-k/n)$, for all $k = 0, \dots, n-1$ and $j = 1,2$, such that $X_{(1),j} \leq \dots \leq X_{(n),j}$.
Replacing $F_1$ and $F_2$ in the definition of $\G_{12}$ in \eqref{eq:gamma}
by the empirical counterparts, and the threshold $u$ by $u_n = 1 - k/n$, for some integer $0 < k \leq n-1$, we define the estimator
    \begin{align}\label{eq:gammahat}
    \widehat \Gamma_{12} = \widehat\G_{12}^{(n)} = \frac{1}{k}  \sum_{i=1}^n \widehat F_2(X_{i2}) \mathbf{1} \{X_{i1} >  X_{(n-k),1} \}.
    \end{align}
For this estimator to be consistent, a classical assumption in extreme value theory
is that the number of upper order statistics $k = k_n$ depends on the sample size $n$ such that $k_n\to\infty$ and $k_n/n \to 0$
as $n\to\infty$.
The first condition is needed to increase the effective sample size, whereas the second condition eliminates the approximation bias. The estimator $\widehat \Gamma_{21} = \widehat \Gamma_{21}^{(n)}$ is defined in an analogous way as~\eqref{eq:gammahat}.

\begin{theorem}\label{thm:Gamma_consistency}
  Let  $X_{i1}$ and $X_{i2}$, $i=1,\dots, n$, be independent copies of $X_1$
  and $X_2$, respectively, where $X_1$ and $X_2$ are two of the $p$ variables of a heavy-tailed linear SCM.

  \begin{itemize}
  \item[(A1)] Assume that the density functions $f_j = F'_j$, $j=1,2$, exist and
satisfy the \emph{von Mises' condition}
    \begin{align}\label{eq:cdfcondition}
    \lim_{x\to\infty}\frac{x f_j(x)}{1 - F_j(x)} = \frac{1}{\gamma},\quad\text{ for some } \gamma > 0.
    \end{align}
  \item[(A2)] Let $k_n\in \mathbb N$ be an intermediate sequence with
  $$k_n \to \infty \quad \text{and} \quad k_n / n \to 0, \quad n\to\infty.$$
  \end{itemize}
  Then the estimators $\widehat \Gamma_{12}$ and $\widehat \Gamma_{21}$
  are consistent, as $n\to \infty$, i.e.,
    \begin{align*}
    \widehat \Gamma_{12} \stackrel{P}{\longrightarrow} \G_{12} \text{ and }\ \widehat \Gamma_{21} \stackrel{P}{\longrightarrow} \G_{21}.
  \end{align*}
\end{theorem}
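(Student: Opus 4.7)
The strategy is to use the probability integral transform to reduce $\widehat\Gamma_{12}$ to an average of bounded random variables over a \emph{deterministic} extremal subset, and then to apply Chebyshev's inequality. Define the copula variables $U_i := F_1(X_{i1})$ and $V_i := F_2(X_{i2})$; by assumption~(A1), $F_1$ and $F_2$ are continuous, so each of $U_i$ and $V_i$ is marginally $\mathrm{Unif}(0,1)$ (with dependence given by the bivariate copula) and ranks are almost surely unique. Writing $\widehat G_2$ for the empirical CDF of $V_1,\dots,V_n$, we have $\widehat F_2(X_{i2}) = \widehat G_2(V_i)$ and $\{X_{i1} > X_{(n-k),1}\} = \{U_i > U_{(n-k)}\}$, so
$$\widehat\Gamma_{12} \;=\; \frac{1}{k}\sum_{i=1}^n \widehat G_2(V_i)\,\mathbf{1}\{U_i > U_{(n-k)}\}.$$

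From here I would perform two approximation steps. First, replace $\widehat G_2$ by the identity: since exactly $k$ indicators are nonzero and $\|\widehat G_2 - \mathrm{id}\|_\infty = o_p(1)$ by Glivenko--Cantelli, the resulting error is bounded by $\|\widehat G_2 - \mathrm{id}\|_\infty = o_p(1)$. Second, replace the random threshold $U_{(n-k)}$ by the deterministic $u_n := 1 - k/n$. With $N_n := \#\{i : U_i > u_n\} \sim \mathrm{Bin}(n, k/n)$, the number of indices where $\mathbf{1}\{U_i > U_{(n-k)}\}$ and $\mathbf{1}\{U_i > u_n\}$ disagree equals $|k - N_n|$, which is $O_p(\sqrt{k})$ by the Binomial variance bound; since $V_i \in [0,1]$, the induced error in the sum divided by $k$ is $O_p(1/\sqrt{k}) = o_p(1)$ by~(A2).

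It remains to show that the ``ideal'' estimator $\widetilde\Gamma_{12} := k^{-1}\sum_{i=1}^n V_i\mathbf{1}\{U_i > u_n\}$ converges in probability to $\Gamma_{12}$. Its expectation equals $\E[V_i \mid U_i > u_n] = \E[F_2(X_2) \mid F_1(X_1) > u_n]$, which tends to $\Gamma_{12}$ as $u_n \to 1$ by Definition~\ref{def:gamma} (the limit exists under the heavy-tailed linear SCM by Lemma~\ref{prop:p01}). For the variance, boundedness gives $\Var\bigl(V_i\mathbf{1}\{U_i > u_n\}\bigr) \leq \P(U_i > u_n) = k/n$, so $\Var(\widetilde\Gamma_{12}) \leq n(k/n)/k^2 = 1/k \to 0$ by~(A2). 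Chebyshev's inequality then concludes; the argument for $\widehat\Gamma_{21}$ is symmetric.

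The main obstacle is the joint control of both substitutions in the extreme regime $k_n/n \to 0$: each only affords a $\sqrt{k}$-order miscount relative to the normalization by $k$, and the argument relies crucially on $V_i \in [0,1]$ to bound each misclassified index by at most a unit contribution. The von Mises condition in~(A1) is not used beyond ensuring continuity of the $F_j$; the heavy-tailed SCM structure enters only through the existence of the limit $\Gamma_{12}$, which Lemma~\ref{prop:p01} provides in closed form.
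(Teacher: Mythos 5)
Your proof is correct, but it follows a genuinely different and considerably more elementary route than the paper's. The paper works on the original scale of $X_2$: it decomposes $X_2$ into its ancestral noise contributions via \eqref{eq:scm-recursive}, uses inclusion--exclusion and the max--sum equivalence to isolate the single large noise term responsible for the exceedance, invokes the von Mises' condition through the asymptotic normality of intermediate order statistics \citep[Theorem 2.2.1]{deh2006a} to control the random threshold $X_{(n-k),2}$, and handles the dependence induced by the empirical ranks via Ros\'en's theorem on sampling without replacement. You instead pass immediately to the copula scale, where both nuisances become transparent: the substitution $\widehat G_2 \mapsto \mathrm{id}$ costs $\|\widehat G_2 - \mathrm{id}\|_\infty = o_p(1)$ uniformly over the (exactly $k$) selected indices by Glivenko--Cantelli, and the substitution $U_{(n-k)} \mapsto 1-k/n$ costs $|N_n - k|/k = O_p(k^{-1/2})$ because the two selected index sets are nested with symmetric difference of size $|N_n-k|$ and $\Var(N_n)\le k$. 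What remains is an i.i.d.\ sum of $[0,1]$-valued variables to which Chebyshev applies directly, with mean $\E[F_2(X_2)\mid F_1(X_1)>u_n] \to \G_{12}$ by Lemma~\ref{prop:p01}. Your approach buys simplicity and weaker hypotheses --- the von Mises' limit in (A1) is genuinely not needed, only continuity of $F_1,F_2$, and the SCM structure enters only through the existence of the limit in Definition~\ref{def:gamma} --- so the argument extends verbatim to any bivariate distribution for which that limit exists. What the paper's heavier machinery buys is a self-contained re-derivation of the value of $\G_{21}$ at the sample level, identifying the separate contributions of common and non-common ancestors; that finer decomposition (and the second-order control of the intermediate order statistics) is the natural starting point for the rates and asymptotic normality alluded to after Corollary~\ref{cor_consistency}, which a plain Chebyshev bound cannot deliver.
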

\begin{remark}
  The von Mises' condition in (A1) is a very mild assumption that is satisfied by most univariate regularly varying distributions of interest. In our case $\gamma = 1/\alpha$, where $\alpha$ is the common tail index of the noise variables.
\end{remark}
For a proof of Theorem~\ref{thm:Gamma_consistency} see
\ifjournal{the Supplementary Material~\ref{proof:Gamma_consistency}.
}\else{Appendix~\ref{proof:Gamma_consistency}. }\fi
It uses several results from tail empirical process theory \citep[e.g.,][Sec.\ 2.2]{deh2006a}.
The main challenge comes from the fact that the variables $X_1$ and $X_2$ are tail dependent, and that
the use of the empirical distribution function $\widehat F_2$ in \eqref{eq:gammahat}
introduces dependence between the terms corresponding to different observations $i=1,\dots, n$.
A related problem is studied in~\citet{cai2015}, where they derive asymptotic properties of
the empirical estimator of the expected shortfall when another dependent variable is extreme.
However, in contrast to~\citet{cai2015}, in the proof of Theorem \ref{thm:Gamma_consistency}, we work with a more explicit model and we consider the variables scaled to uniform margins, i.e., $\widehat F_j(X_j)$ instead of $X_j$, $j = 1, 2$.

\section{Causal discovery using extremes}\label{sec:causaldisc}

We would like to recover the causal information from a dataset of $p$ variables under the model specification of Section~\ref{subsec:setup}. We develop an algorithm named \emph{extremal ancestral search} (EASE) based on the \emph{causal tail coefficient} defined in~\eqref{eq:gamma}. We show that EASE can recover the causal order of the underlying graph in the population case (Section~\ref{subsec:learncausalorder}), and that it is consistent (Section~\ref{subsec:sample-prop}).

\subsection{Learning the causal order}\label{subsec:learncausalorder}
Our goal is to recover the causal order of a heavy-tailed linear SCM over $p$ variables (as defined in Section~\ref{subsec:setup}) by observing $n$ i.i.d. copies of the random vector $X\in\mathbb{R}^p$.
Given a DAG $G = (V, E)$, a permutation $\pi: \oneto{p} \rightarrow \oneto{p}$ is said to be a \emph{causal order} (or \emph{topological order}) of $G$ if $\pi(i) < \pi(j)$ for all $i$ and $j$ such that $i\in \an(j, G)$.
We denote by $\Pi_{G}$ the set of all causal orders of $G$.
For a permutation $\pi$ we sometimes use the notation $\pi = \left(\pi(1), \dots, \pi(p)\right)$.

A given causal order $\pi$ does not specify a unique DAG.
As an example, the causal order $\pi = (1, 2)$ comprises two DAGs: one where there is a directed edge between node 1 and node 2,
and one where the two nodes are unconnected.
On the other hand, there can be several causal orders for a given DAG. For example, a fully-disconnected DAG satisfies any causal order.
However, even if the causal order does not identify a unique DAG, it still conveys important information. In particular, each causal order defines a class of DAGs that agree with respect to the non-ancestral relations.
Therefore, once a causal order is available, one can estimate the complete DAG by using regularised regression methods. This idea has been exploited, e.g., by~\citet{shimizu2011directlingam} and~\citet{Buehlmann2014annals}.
In addition, \citet{Buehlmann2014annals} and~\citet{peters2015structural} argue that knowledge of a causal order is useful \emph{per se}. In fact, given a correct causal order, one can construct a fully-connected DAG that describes interventional distribution across the variables.

For any heavy-tailed linear SCM and induced DAG $G = (V, E)$,
we define the matrix
$\G\in\mathbb{R}^{p\times p}$
with entries $\G_{ij}$, the causal tail coefficients between all pairs of variables $X_i$ and $X_j$, $i,j\in V$; see Definition~\ref{def:gamma}.
Theorem~\ref{thm:6cases} tells us how the entries of $\Gamma$ encode the causal relationships between the random variables of the SCM.
To recover the causal order of the DAG $G$, we propose the algorithm named \emph{extremal ancestral search} (EASE).

\vspace{12pt}
\begin{tcolorbox}[width=\textwidth, colframe=black, colback=white, boxrule = .2mm]
\textbf{EASE algorithm}

\textsc{Input:}
A matrix $\G\in\mathbb{R}^{p\times p}$ of causal tail coefficients
related to a DAG $G = (V, E)$ with $V = \oneto{p}$.

\textsc{Returns:} Permutation of the nodes $\pi: V \to \oneto{p}$.

\begin{enumerate}
\setcounter{enumi}{-1}
\item Set $V_1 = V$.

\item \textsc{For} $s \in \{1,\dots, p\}$
\begin{enumerate}
  \item Let $M^{(s)}_i = \max_{j \in V_{s}\setminus\{i\}} \G_{ji}$, for all $i\in V_{s}$.
  \item Let $i_{s} \in \arg\min_{i\in V_{s}}M^{(s)}_i$.
  \item Set $\pi(i_{s}) = s$.
  \item Set $V_{s+1} = V_{s} \setminus \{i_{s}\}$.
\end{enumerate}
\item \textsc{Return} the permutation $\pi$.
\end{enumerate}
\textsc{Complexity:} $O(p^2)$.
\end{tcolorbox}
\vspace{12pt}

The above is a greedy algorithm that identifies root nodes of the current subgraph at each step.
In the first step, the algorithm finds a root node $i_1\in V$ as the one that minimises the score $M_{i}^{(1)} = \max_{j\neq i} \G_{ji}$, $i\in V$. In fact, by Theorem~\ref{thm:6cases}, $M_{i}^{(1)} < 1$ if and only if $i$ is a source node. Once the first node is selected, the algorithm searches for a second root node in the subgraph where $i_1$ is removed. The procedure continues until all nodes have been selected. In
\ifjournal{Section~\ref{app:minimax_search} of the Supplementary Material,
}\else{Appendix~\ref{app:minimax_search}, }\fi
one example illustrates how EASE finds a causal order for a
given DAG.

The next result states that, in the population case, the EASE
algorithm yields a correct causal order of the underlying DAG.
\begin{prop}\label{prop:minimax_oracle}
  Consider a heavy-tailed linear SCM over $p$ variables, as defined in Section~\ref{subsec:setup}, and let $G = (V, E)$ be the induced DAG. If the input $\Gamma$ is the matrix of causal tail coefficients associated with the SCM, then EASE returns a permutation $\pi$ that is a causal order of $G$.
\end{prop}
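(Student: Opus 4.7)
The plan is to prove, by induction on $s \in \oneto{p}$, the following invariant: the node $i_s$ chosen by EASE at step $s$ satisfies $\an(i_s, G) \cap V_s = \emptyset$. Once this is established, the proposition follows immediately. Take any $i, j \in V$ with $i \in \an(j, G)$ and set $s = \pi(j)$, so $j = i_s$; the invariant applied at step $s$ gives $i \notin V_s$, hence $i$ was removed at an earlier step, i.e., $\pi(i) < s = \pi(j)$, which is precisely the definition of a causal order.

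The main tool is Theorem~\ref{thm:6cases}, which under the assumptions of Section~\ref{subsec:setup} yields the clean dichotomy: for any two distinct variables $X_a, X_b$, one has $\Gamma_{ab} = 1$ if $X_a$ causes $X_b$, and $\Gamma_{ab} < 1$ otherwise, because the empty entries of Table~\ref{tab:caus} cannot occur under our assumptions. (This can also be read off directly from Lemma~\ref{prop:p01}: $\Gamma_{jk} = 1$ iff $\An(j, G) \subseteq \An(k, G)$, and since $j \in \An(j, G)$ this is equivalent to $X_j$ causing $X_k$.)

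Fix a step $s$ and consider the score $M_i^{(s)} = \max_{j \in V_s \setminus \{i\}} \Gamma_{ji}$ for $i \in V_s$. If $i$ has some ancestor $j \in V_s \setminus \{i\}$ in $G$, then $X_j$ causes $X_i$, so $\Gamma_{ji} = 1$ and therefore $M_i^{(s)} = 1$. Conversely, if $\an(i, G) \cap V_s = \emptyset$, then for every $j \in V_s \setminus \{i\}$ we have that $X_j$ does not cause $X_i$, hence $\Gamma_{ji} < 1$ and $M_i^{(s)} < 1$. To close the induction I still need at least one $i \in V_s$ with $\an(i, G) \cap V_s = \emptyset$: this exists because the transitive closure of $G$ restricted to $V_s$ is still a DAG and therefore admits a source, or equivalently because the $V_s$-element of smallest rank under any fixed causal order of $G$ has no ancestor in $V_s$. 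Combining, $\min_{i \in V_s} M_i^{(s)} < 1$, so the argmin $i_s$ chosen by EASE satisfies $M_{i_s}^{(s)} < 1$, i.e., $\an(i_s, G) \cap V_s = \emptyset$.

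The main conceptual point that needs care is the distinction between ``no parent in $V_s$'' and ``no ancestor in $V_s$'': when $V_s$ skips intermediate nodes along a directed path of $G$, a node can lack parents in $V_s$ yet still inherit an extremal signal $\Gamma_{ji} = 1$ from a more distant ancestor that remains in $V_s$. The argument above is arranged so that Theorem~\ref{thm:6cases} is applied to full-graph ancestral relations (which is what $\Gamma$ actually encodes), matching the transitive-closure argument that guarantees a source in $V_s$ at every step.
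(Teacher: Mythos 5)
Your proof is correct, and it shares the paper's overall skeleton (establish the greedy invariant that the node selected at step $s$ has no ancestors left in $V_s$, using the characterization from Theorem~\ref{thm:6cases} that $\Gamma_{ji}=1$ exactly when $j$ is an ancestor of $i$), but it resolves the one delicate point differently. The paper proves the invariant by \emph{strong induction}: it takes a source $\ell$ of the induced subgraph $G_s$ (no parents among $V_s$ with respect to the remaining edges $E_s$) and must then argue, using the correctness of all previously selected nodes plus a path-through-removed-nodes argument, that $\ell$ in fact has no \emph{ancestors} in $V_s$ with respect to the full graph $G$. You sidestep this entirely by choosing your candidate as a source of the transitive closure of $G$ restricted to $V_s$ (equivalently, the element of $V_s$ of minimal rank under a fixed causal order of $G$), which has no $G$-ancestors in $V_s$ by construction, for \emph{any} subset $V_s$ and independently of the algorithm's history. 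As a result your induction hypothesis is never actually invoked and the argument works step by step without induction; this is a mild but genuine simplification, and it is exactly the point you flag in your closing paragraph about the gap between ``no parent in $V_s$'' and ``no ancestor in $V_s$''. The paper's version has the minor advantage that its bookkeeping (the sets $H_s$, the subgraphs $G_s$, and the notion of partial causal order) is reused almost verbatim in the proof of Proposition~\ref{prop:minimax_bound}, whereas your cleaner argument would need to be adapted there.
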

For a proof see
\ifjournal{Supplementary Material~\ref{proof:minimax_oracle}.
}\else{Appendix~\ref{proof:minimax_oracle}.}\fi

\subsection{Sample properties for the EASE algorithm}\label{subsec:sample-prop}
For finite samples, the EASE algorithm will take an estimate of the
causal coefficient matrix $\G$ as input. Based on the empirical non-parametric estimator
$\widehat\G$ and its asymptotic properties, we assess the performance of the algorithm. Let $\widehat\G\in\mathbb{R}^{p\times p}$ denote the matrix where each entry $\widehat\G_{ij}$ is defined as in~\eqref{eq:gammahat} in Section~\ref{G_estimator}, for $i, j \in V$. We say that a procedure makes a mistake when it returns a permutation $\pi\notin\Pi_G$. We derive an upper bound for the probability that EASE makes a mistake when the matrix $\G$ is estimated by $\widehat\G$.

\begin{prop}\label{prop:minimax_bound}
  Consider a heavy-tailed linear SCM over $p$ variables $X = (X_1, \ldots, X_p)$, as defined in Section~\ref{subsec:setup}, with induced DAG $G$. Let $\widehat \Gamma$ be the estimated causal coefficient matrix related to $G$. Let $\widehat\pi$ denote the permutation returned by EASE based on $\widehat \G$. Then,
    \begin{align*}
    \P\left(\widehat\pi \notin \Pi_G\right) \leq p^2 \max_{i, j\in V:i\neq j} \P\left(\left|\widehat\G_{ij} - \G_{ij}\right| > \frac{1 - \eta}{2}\right),
  \end{align*}
where $\eta = \max_{u\notin \An(v, G)} \G_{uv} < 1$.
\end{prop}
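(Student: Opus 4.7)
The plan is to define a high-probability good event on which EASE correctly recovers a causal order, and then conclude via a union bound. Concretely, set
\[
\mathcal{A} = \bigcap_{i \neq j} \bigl\{|\widehat{\G}_{ij} - \G_{ij}| \leq (1-\upsilon)/2\bigr\},
\]
and aim to show $\mathcal{A} \subseteq \{\widehat\pi \in \Pi_G\}$, so that $\P(\widehat\pi \notin \Pi_G) \leq \P(\mathcal{A}^c)$.

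At each step $s$, call $i \in V_s$ a good root if $\an(i, G) \cap V_s = \emptyset$, and write $R_s$ for the set of good roots at step $s$. The first substantive step is a short induction showing that, whenever $\widehat i_1, \dots, \widehat i_{s-1}$ were good roots at their respective steps, the set $V \setminus V_s$ of already-removed nodes is ancestrally closed in $G$. Combined with the fact that the sub-DAG $G[V_s]$ admits a source $r$, ancestral closure forces $\an(r, G) \cap V_s = \emptyset$ and hence $r \in R_s$, so $R_s$ is nonempty at every step.

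Next, I would use Theorem~\ref{thm:6cases} to separate the EASE scores on $\mathcal{A}$. For $i \in R_s$, every $j \in V_s \setminus \{i\}$ is a non-ancestor of $i$, so $\G_{ji} \leq \upsilon$ by the definition of $\upsilon$, giving
\[
\widehat M_i^{(s)} = \max_{j \in V_s \setminus \{i\}} \widehat{\G}_{ji} \leq \upsilon + (1-\upsilon)/2 = (1+\upsilon)/2.
\]
For $i \in V_s \setminus R_s$, there exists $j^* \in V_s \cap \an(i, G)$ with $\G_{j^* i} = 1$ by Theorem~\ref{thm:6cases}, so $\widehat M_i^{(s)} \geq \widehat{\G}_{j^* i} \geq 1 - (1-\upsilon)/2 = (1+\upsilon)/2$. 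Using a strict version of $\mathcal{A}$ (at the cost of passing from $>$ to $\geq$ in the union bound, a minor technical nuance), these bounds become strict on opposite sides of $(1+\upsilon)/2$, so $\arg\min_{i \in V_s} \widehat M_i^{(s)} \subseteq R_s$. The induction on $s$ then shows that on $\mathcal{A}$, EASE picks a good root at every step, so $\widehat\pi$ is a topological order of $G$.

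Finally,
\[
\P(\widehat\pi \notin \Pi_G) \leq \P(\mathcal{A}^c) \leq \sum_{i \neq j} \P\bigl(|\widehat{\G}_{ij} - \G_{ij}| > (1-\upsilon)/2\bigr) \leq p^2 \max_{i, j \in V} \P\bigl(|\widehat{\G}_{ij} - \G_{ij}| > (1-\upsilon)/2\bigr),
\]
using $p(p-1) \leq p^2$. The main obstacle is the graph-theoretic induction: verifying that $R_s$ never becomes empty and that EASE's maxmin score cleanly separates $R_s$ from its complement by the population gap $1 - \upsilon$. Once that structural property is in hand, the probabilistic step reduces to a routine union bound over the $p(p-1)$ ordered off-diagonal pairs.
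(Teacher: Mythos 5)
Your proposal is correct and follows essentially the same route as the paper's proof: the same structural induction showing that the removed nodes stay ancestrally closed (so a root with no remaining ancestors always exists), the same use of Theorem~\ref{thm:6cases} to place $\widehat M_i^{(s)}$ on opposite sides of $(1+\upsilon)/2$ for roots versus non-roots, and the same union bound over the $p(p-1)\le p^2$ ordered pairs; the paper merely phrases this contrapositively, as a chain of inclusions of per-step mistake events into pairwise deviation events. The strict-versus-non-strict tie issue you flag is present in the paper's argument as well (its mistake event is defined with a strict inequality, implicitly breaking ties in favour of roots), so your handling of it is no worse than the original.
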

For a proof see
\ifjournal{Supplementary Material~\ref{proof:minimax_bound}.
}\else{Appendix~\ref{proof:minimax_bound}. }\fi
The bound for the probability of making a mistake in the estimated causal order
is expressed in terms of the distance between the true $\Gamma_{ij}$ and the estimated $\widehat\G_{ij}$. This bound in combination with the consistency result of Theorem~\ref{thm:Gamma_consistency}
yields the consistency of the EASE algorithm in the sample case.

\begin{cor}\label{cor_consistency}
  Let $\widehat\pi$ be the permutation computed by EASE under the assumptions of Proposition~\ref{prop:minimax_bound}. Let $k_n\in \mathbb N$ be an intermediate sequence with
  $$k_n \to \infty \quad \text{and} \quad k_n / n \to 0, \quad n\to\infty.$$
  If the von Mises' condition~\eqref{eq:cdfcondition} holds, then the EASE algorithm is consistent, i.e.,
\begin{align*}
\P\left(\widehat\pi \notin \Pi_G\right) \to 0, \quad \text{ as } n\to\infty.
\end{align*}
\end{cor}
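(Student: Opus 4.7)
The plan is to combine the finite-sample bound in Proposition~\ref{prop:minimax_bound} with the pointwise consistency of the empirical estimator $\widehat \Gamma_{ij}$ from Theorem~\ref{thm:Gamma_consistency}. First, I would invoke Proposition~\ref{prop:minimax_bound} to bound the error probability by
\begin{align*}
\P\!\left(\widehat\pi \notin \Pi_G\right) \leq p^2 \max_{i, j\in V} \P\!\left(\left|\widehat\G_{ij} - \G_{ij}\right| > \tfrac{1 - \upsilon}{2}\right),
\end{align*}
where $\upsilon = \max_{i\notin \an(j, G)} \G_{ij}$. The key observation is that $\upsilon < 1$, which is exactly what Proposition~\ref{prop:minimax_bound} asserts (and which in turn follows from the case analysis in Theorem~\ref{thm:6cases}, since $\G_{ij}=1$ requires $i$ to be an ancestor of $j$ under the heavy-tailed linear SCM). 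Therefore the threshold $\delta := (1-\upsilon)/2$ is a strictly positive constant that does not depend on the sample size.

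Next I would verify that the hypotheses of Theorem~\ref{thm:Gamma_consistency} are in force. Assumption (A2) on the intermediate sequence $k_n$ is given in the statement of the corollary, and assumption (A1) on the marginals is satisfied in our setup since each $X_j$ is a finite weighted sum of regularly varying noise variables (under the heavy-tailed linear SCM) and thus has a regularly varying tail admitting a density meeting von Mises' condition. Consequently, Theorem~\ref{thm:Gamma_consistency} applies to every pair of coordinates $(i,j)$ with $i\neq j$, yielding
\begin{align*}
\P\!\left(\left|\widehat\G_{ij} - \G_{ij}\right| > \delta\right) \longrightarrow 0, \quad n\to\infty.
\end{align*}

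Finally, since $p$ is fixed and the maximum is taken over the finite set of at most $p^2$ pairs, the maximum of finitely many null sequences is itself a null sequence, so
\begin{align*}
p^2 \max_{i, j\in V} \P\!\left(\left|\widehat\G_{ij} - \G_{ij}\right| > \delta\right) \longrightarrow 0,
\end{align*}
which gives $\P(\widehat\pi \notin \Pi_G) \to 0$ and completes the proof.

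There is really no serious obstacle here: the result is a direct composition of the two previous results, and the only subtlety worth flagging is the need for $\upsilon < 1$ (so that the threshold $\delta$ stays bounded away from zero uniformly in $n$) and the fact that $p$ is fixed (so that the union-type bound does not blow up). Both are immediate under the stated assumptions.
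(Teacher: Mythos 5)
Your proof is correct and is exactly the argument the paper intends: Proposition~\ref{prop:minimax_bound} gives the bound with the fixed threshold $\delta = (1-\upsilon)/2 > 0$, Theorem~\ref{thm:Gamma_consistency} sends each of the finitely many probabilities to zero, and the fixed dimension $p$ finishes the argument. The only caveat is your claim that (A1) holds automatically: regular variation of the tail of $X_j$ does not by itself imply the existence of a density satisfying the von Mises condition, and the paper simply carries (A1) as a standing assumption inherited from Theorem~\ref{thm:Gamma_consistency}, so you should state it as an assumption rather than derive it.
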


The result above is for fixed dimension $p$. To prove consistency in a regime
where $p$ scales with the sample size $n$ we would need to establish concentration inequalities
for $\widehat \Gamma$ or asymptotic normality in Theorem \ref{thm:Gamma_consistency}. Both
would require stronger assumptions on the tails of the noise variables and a second-order
analysis in line with the proof of Theorem \ref{thm:Gamma_consistency}.

\subsection{Computational complexity}\label{subsec:compcomp}
The EASE algorithm is
based on pairwise quantities and is
therefore computationally efficient.
To estimate the matrix $\widehat \G$ of causal tail coefficients, which is the input for EASE, first, we need to rank the $n$ observations for each of the variables, with a computational complexity of $O(pn\log n)$. Then we compute the coefficients $\widehat\G_{ij}$ for each pair $i, j\in V$, with a computational complexity of $O(k_n p^2)$.
The computational complexity of EASE grows with the square of the number of variables, i.e., $O(p^2)$.
The overall computational complexity of estimating the matrix $\widehat \G$ and running the EASE algorithm is therefore $O\left(\max(pn\log n, k_n p^2)\right)$.

\section{Extensions}\label{sec:extensions}

\subsection{Real-valued coefficients}\label{subsec:realvalbeta}
Until now, we have worked with a heavy-tailed linear SCM with positive coefficients (see Section~\ref{subsec:setup} for a detailed explanation of the model). In the current section,
we relax this assumption and let the coefficients of the SCM be real-valued, i.e., $\beta_{jk}\in \mathbb{R}$, $j, k\in V$.
Additionally, we assume that $\beta_{j \to k}$ is non-zero, if $j$ is an ancestor of $k$.
Given that the coefficients are real-valued, we want to consider both the upper and the lower tails of the variables. We assume that the noise variables $\eps_1, \dots, \eps_p$, of the SCM have comparable upper and lower tails, that is, as $x\to\infty$
  \begin{align*}
  \P(\eps_j > x) \sim c_j^{+}\ell(x)x^{-\alpha},\quad
  \P(\eps_j < -x) \sim c_j^{-}\ell(x)x^{-\alpha},
  \end{align*}
where $c_j^{+}, c_j^{-} > 0$, $j\in V$ and $\ell \in \RV_0$.
Furthermore, we define a causal tail coefficient that is sensitive to both tails as
  \begin{align}\label{eq:psi}
  \begin{split}
  \Psi_{jk}
  = &\ \lim_{u\to 1^-}\E\left[\sigma(F_k(X_k)) \mid \sigma(F_j(X_j)) > u\right], \quad j, k\in V,
  \end{split}
  \end{align}
if the limit exists, where $\sigma: x\mapsto |2x -1|$. Since $F_j(X_j) \sim \text{Unif}[0, 1]$, $j\in V$,
we can rewrite~\eqref{eq:psi} as
  \begin{align}\label{eq:psi-decomp}
  \begin{split}
  \Psi_{jk}
  = &\ \lim_{u\to 1^-}\frac{1}{2}\E\left[\sigma(F_k(X_k)) \mid F_j(X_j) > u\right]\\
  &\ + \lim_{u\to 0^+}\frac{1}{2}\E\left[\sigma(F_k(X_k)) \mid F_j(X_j) < u\right]\\
  = &\ \Psi_{jk}^{+} + \Psi_{jk}^{-},
  \end{split}
  \end{align}
where the first and second terms correspond to the cases where $X_j$ is extremely large and extremely small, respectively.

In the current setting, the coefficient defined in~\eqref{eq:psi}
always exists, and it has a closed form expression that
encodes causal relationships between the variables.
\begin{lemma}\label{lemma:psi-coeff}
Consider a heavy-tailed SCM over $p$ variables, where the coefficient $\beta_{jk}\in\mathbb{R}$, $j, k\in V$.
Assume that $\beta_{j \to k}\neq 0$  if $j$ is an ancestor of $k$.
Then, for $j, k\in V$ and $j\neq k$,
  \begin{align*}
  \Psi_{jk} = \frac{1}{2}
  + \frac{1}{4} \frac{\sum_{h\in A_{jk}} c_{hj}^{+}|\beta_{h\to j}|^{\alpha}}{\sum_{h\in \An(j, G)} c_{hj}^{+}|\beta_{h\to j}|^{\alpha}}
  + \frac{1}{4} \frac{\sum_{h\in A_{jk}} c_{hj}^{-}|\beta_{h\to j}|^{\alpha}}{\sum_{h\in \An(j, G)} c_{hj}^{-}|\beta_{h\to j}|^{\alpha}},
  \end{align*}
where $A_{jk} = \An(j, G) \cap \An(k, G)$, and
  \begin{align}\label{eq:sign-func}
  c_{hj}^{+} =
  \begin{cases}
  c_h^+, \quad \beta_{h\to j} > 0,\\
  c_h^-, \quad \beta_{h\to j} < 0,
  \end{cases}
  \quad\quad
  c_{hj}^{-} =
  \begin{cases}
  c_h^-, \quad \beta_{h\to j} > 0,\\
  c_h^+, \quad \beta_{h\to j} < 0.
  \end{cases}
  \end{align}
\end{lemma}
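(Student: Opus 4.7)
The plan is to use the additive decomposition $\Psi_{jk} = \Psi_{jk}^+ + \Psi_{jk}^-$ provided in \eqref{eq:psi-decomp}, reducing the task to computing two one-sided conditional limits that each parallel the quantity treated in Lemma~\ref{prop:p01}. I would compute $\Psi_{jk}^+$ in detail and obtain $\Psi_{jk}^-$ by a sign-flip argument.

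For $\Psi_{jk}^+$, I would start from the recursive representation $X_j = \sum_{h \in \An(j,G)} \beta_{h \to j} \eps_h$ and apply the max-sum-equivalence / single-big-jump principle (i.e., the appendix lemma on tail convergence) to the independent regularly varying summands $\beta_{h \to j}\eps_h$. Conditioning on $\{F_j(X_j) > u\}$ with $u \to 1^-$ is equivalent to conditioning on $\{X_j > t\}$ with $t \to \infty$, and asymptotically the latter is driven by exactly one ancestor noise $\eps_h$ being extreme in the direction that contributes positively to $X_j$: if $\beta_{h \to j} > 0$ the driver is $\eps_h \to +\infty$ (tail weight $c_h^+$), and if $\beta_{h \to j} < 0$ the driver is $\eps_h \to -\infty$ (tail weight $c_h^-$). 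In both cases the contribution to $\P(X_j > t)$ is $c_{hj}^+ |\beta_{h \to j}|^\alpha \ell(t) t^{-\alpha}$, so the conditional probability that the driver is $\eps_h$ tends to $c_{hj}^+ |\beta_{h \to j}|^\alpha / \sum_{h' \in \An(j,G)} c_{h'j}^+ |\beta_{h' \to j}|^\alpha$.

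Next, for each possible driver $h \in \An(j,G)$, I would evaluate the limit of $\E[\sigma(F_k(X_k)) \mid X_j > t, \text{driver} = \eps_h]$. If $h \in A_{jk}$, then $\eps_h$ also appears in $X_k$'s ancestral expansion with coefficient $\beta_{h \to k} \neq 0$, so max-sum equivalence gives $X_k \sim \beta_{h \to k}\eps_h$ and hence $|X_k| \to \infty$; thus $F_k(X_k) \to 0$ or $1$ depending on the sign of $\beta_{h \to k}\eps_h$, but $\sigma(F_k(X_k)) \to 1$ either way. If instead $h \in \An(j,G) \setminus \An(k,G)$, then $\eps_h$ does not enter the representation of $X_k$ at all, and since the noises are independent, conditioning on $|\eps_h|$ being large does not perturb the distribution of $X_k$; hence $F_k(X_k)$ retains its Unif$[0,1]$ marginal and $\E[\sigma(F_k(X_k))] \to \E[|2U-1|] = 1/2$. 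Mixing these contributions by the driver probabilities yields
\[
\lim_{u \to 1^-} \E[\sigma(F_k(X_k)) \mid F_j(X_j) > u] = \frac{1}{2} + \frac{1}{2}\frac{\sum_{h \in A_{jk}} c_{hj}^+ |\beta_{h \to j}|^\alpha}{\sum_{h \in \An(j,G)} c_{hj}^+ |\beta_{h \to j}|^\alpha},
\]
and halving gives the $c_{hj}^+$ term in the claimed formula.

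For $\Psi_{jk}^-$ the argument is the mirror image: conditioning on $X_j \to -\infty$ asks the driver $\eps_h$ to be extreme in the \emph{opposite} direction, which swaps $c_h^+ \leftrightarrow c_h^-$ in the weights and produces exactly $c_{hj}^-$ as in \eqref{eq:sign-func}. The common-ancestor analysis is identical because all that matters for $\sigma(F_k(X_k)) \to 1$ is $|X_k| \to \infty$, not its sign. Adding $\Psi_{jk}^+$ and $\Psi_{jk}^-$ produces the stated identity. The main obstacle is making the two asymptotic claims used inside the conditioning rigorous: (i) isolating the single driving noise conditional on the extreme event, which requires the max-sum equivalence stated in conditional form as in the proof of Lemma~\ref{prop:p01}; and (ii) justifying that when the driving $\eps_h \notin \An(k,G)$, the random variable $F_k(X_k)$ is asymptotically independent of the conditioning event, which follows from independence of the noise variables together with a routine continuity argument for the uniform marginal. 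Both steps are essentially those already carried out for the positive-coefficient case, with bookkeeping for signs as the only genuinely new element.
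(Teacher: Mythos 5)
Your proposal is correct and follows essentially the same route as the paper's proof: decompose $\Psi_{jk}=\Psi_{jk}^{+}+\Psi_{jk}^{-}$ via \eqref{eq:psi-decomp}, apply the single-big-jump/inclusion--exclusion machinery of Lemma~\ref{prop:p01} to identify the driving noise $\eps_h$ with the sign-dependent weight $c_{hj}^{\pm}|\beta_{h\to j}|^{\alpha}$, show $\sigma(F_k(X_k))\to 1$ for $h\in A_{jk}$ regardless of the sign of $\beta_{h\to k}/\beta_{h\to j}$ and $\E[\sigma(F_k(X_k))]=1/2$ by independence for $h\notin\An(k,G)$, then sum. The paper phrases this through unconditional expectations with indicators and sandwich bounds rather than your conditional "driver" language, but the mathematical content is identical.
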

A proof is provided in
\ifjournal{the Supplementary Material~\ref{proof:psi-coeff}.
}\else{Appendix~\ref{proof:psi-coeff}. }\fi
The interpretation of the result is
as follows. The baseline of the coefficient is $1/2$, which can be checked to be the value of $\Psi_{jk}$ when two variables are independent. The other two terms account for the equally weighted contribution from the lower and upper tail, respectively. The result stated in Lemma~\ref{lemma:psi-coeff} allows us to extend Theorem~\ref{thm:6cases} to the more general setting where the heavy-tailed SCM has
coefficients $\beta_{jk}\in\mathbb{R}$, $j, k\in V$.

\begin{theorem}\label{thm:6cases-psi}
Consider a heavy-tailed linear SCM over $p$ variables including $X_1$ and $X_2$, and assume that $\beta_{jk}\in\mathbb{R}$, $j, k\in V$.
In addition, assume that $\beta_{j \to k}\neq 0$ if $j$ is an ancestor of $k$.
Then, knowledge of
$\Psi_{12}$ and
$\Psi_{21}$ allows us to distinguish the following cases:
(a) $X_1$ causes $X_2$,
(b) $X_2$ causes $X_1$,
(c) there is no
causal link between $X_1$ and $X_2$,
(d) there is a node $j \not \in \{1, 2\}$, such that $X_j$ is a common cause of $X_1$ and $X_2$ and neither $X_1$ causes $X_2$ nor $X_2$ causes $X_1$.
The corresponding values for $\Psi_{12}$ and $\Psi_{21}$ are shown in Table~\ref{tab:caus_psi}.
  \begin{table}[h]
  \centering
  \caption{Summary of the possible values of $\Psi_{12}$ and $\Psi_{21}$ and the implications for causality.}
  \label{tab:caus_psi}
  \begin{tabular}[t]{llll}
  \toprule
  & $\Psi_{21} = 1$ & $\Psi_{21} \in (1/2,1)$ & $\Psi_{21} = 1/2$ \\
  \midrule
  $\Psi_{12} = 1$ &   & ($a$) $X_1$ causes $X_2$ &  \\
  $\Psi_{12} \in (1/2,1)$ & ($b$) $X_2$ causes $X_1$  & ($d$) common cause& \\
  $\Psi_{12} = 1/2$ &   &   & ($c$) no causal link \\
  \bottomrule
  \end{tabular}
  \end{table}
\end{theorem}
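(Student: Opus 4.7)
The plan is to lift the case analysis of Theorem~\ref{thm:6cases} to the $\Psi$-coefficient using the closed-form expression provided by Lemma~\ref{lemma:psi-coeff}. The crucial structural observation is that Lemma~\ref{lemma:psi-coeff} writes
\[
\Psi_{jk} = \tfrac{1}{2} + \tfrac{1}{4} R^{+}_{jk} + \tfrac{1}{4} R^{-}_{jk},
\]
where each ratio
\[
R^{\pm}_{jk} = \frac{\sum_{h\in A_{jk}} c_{hj}^{\pm}|\beta_{h\to j}|^{\alpha}}{\sum_{h\in \An(j,G)} c_{hj}^{\pm}|\beta_{h\to j}|^{\alpha}}
\]
is a ratio of a sum over $A_{jk}=\An(j,G)\cap\An(k,G)$ to a sum over $\An(j,G)\supseteq A_{jk}$ with strictly positive summands (recall $c_h^{\pm}>0$, $|\beta_{h\to j}|>0$ on the index set by the non-degeneracy assumption). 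Hence $R^{\pm}_{jk}\in[0,1]$, with $R^{\pm}_{jk}=1$ iff $A_{jk}=\An(j,G)$, and $R^{\pm}_{jk}=0$ iff $A_{jk}=\emptyset$.

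From this I would immediately derive the three-valued characterization: $\Psi_{jk}=1$ iff $\An(j,G)\subseteq\An(k,G)$, which (since $j\in\An(j,G)$ and $j\neq k$) is equivalent to $X_j$ causing $X_k$; $\Psi_{jk}=1/2$ iff $A_{jk}=\emptyset$; and $\Psi_{jk}\in(1/2,1)$ in all remaining cases. Note that both ratios $R^{+}$ and $R^{-}$ collapse to the extremes simultaneously since they are indexed by the same sets $A_{jk}$ and $\An(j,G)$, so the two terms never cancel partially.

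The four causal cases are then read off directly. In case (a) $X_1$ causes $X_2$, one has $\An(1,G)\subseteq\An(2,G)$ so $\Psi_{12}=1$, while $A_{21}=\An(1,G)$ is a proper subset of $\An(2,G)$ (since $2\in\An(2,G)\setminus\An(1,G)$ by acyclicity), giving $\Psi_{21}\in(1/2,1)$; case (b) is symmetric. In case (c) the two ancestor sets are disjoint so $A_{12}=A_{21}=\emptyset$ and both $\Psi$'s equal $1/2$. In case (d) with a common cause $X_j$, one has $j\in A_{12}\neq\emptyset$ but also $1\in\An(1,G)\setminus\An(2,G)$ and $2\in\An(2,G)\setminus\An(1,G)$, so both ratios are strictly between zero and one, yielding $\Psi_{12},\Psi_{21}\in(1/2,1)$.

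The only remaining check is that the empty cells of Table~\ref{tab:caus_psi} really are excluded: $\Psi_{12}=\Psi_{21}=1$ would force paths in both directions between $X_1$ and $X_2$, contradicting acyclicity of $G$; $\Psi_{12}=1$ together with $\Psi_{21}=1/2$ would require $1\in\An(2,G)$ (hence $1\in A_{21}$) and simultaneously $A_{21}=\emptyset$, a contradiction; the two symmetric entries are ruled out analogously. I do not expect any real obstacle beyond careful bookkeeping, since the whole argument is structurally identical to Theorem~\ref{thm:6cases}; the only genuine novelty was absorbed into Lemma~\ref{lemma:psi-coeff}, which already reduced the two-sided coefficient to an equally-weighted sum of two ratios of the same ``common-ancestors vs.\ all-ancestors'' type as the one-sided $\Gamma$.
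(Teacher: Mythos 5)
Your proposal is correct and follows exactly the route the paper takes: the paper's proof of Theorem~\ref{thm:6cases-psi} is simply the case analysis of Theorem~\ref{thm:6cases} with $\Gamma$ replaced by $\Psi$ and Lemma~\ref{prop:p01} replaced by Lemma~\ref{lemma:psi-coeff}. Your only addition is to make explicit the (correct and worthwhile) observation that the two ratios $R^{+}_{jk}$ and $R^{-}_{jk}$ are indexed by the same sets $A_{jk}$ and $\An(j,G)$ with strictly positive summands, so they attain the boundary values $0$ and $1$ simultaneously, which is precisely what justifies the paper's claim that the proof carries over verbatim.
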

The proof is identical to the proof of Theorem~\ref{thm:6cases}, replacing $\Gamma_{ij}$ with $\Psi_{ij}$ and by referring to Lemma~\ref{lemma:psi-coeff} instead of Lemma~\ref{prop:p01}.
Moreover, as in Theorem~\ref{thm:6cases}, condition (a) and (b) can also include the presence of a common cause $X_j$.
Theorem~\ref{thm:6cases-psi} implies that if we run the EASE algorithm based on the matrix $\Psi\in\mathbb{R}^{p\times p}$, containing the pairwise $\Psi_{ij}$, $i, j \in V$, then we retrieve a causal order of the underlying DAG. This is the analogue to Proposition~\ref{prop:minimax_oracle} for heavy-tailed linear SCM with real-valued coefficients.

We define an empirical estimator $\widehat \Psi_{ij}$ of $\Psi_{ij}$ in a similar fashion as the estimator $\widehat \Gamma_{ij}$ in~\eqref{eq:gammahat}. The proof of Lemma~\ref{lemma:psi-coeff} shows that the coefficient $\Psi_{ij}$ can be decomposed in the same way as  $\Gamma_{ij}$ in the proof of Lemma~\ref{prop:p01}. Therefore, following the lines of the proof of Theorem~\ref{thm:Gamma_consistency} with some minor modifications, we obtain the consistency $\widehat \Psi_{ij} \stackrel{P}{\longrightarrow} \Psi_{ij}$ as $n\to \infty$ for any intermediate sequence $k_n\to \infty$ and $k_n/n \to 0$, and under the assumption of the von Mises' condition for both the upper and the lower tail of $X_j$, $j\in V$.

We can then estimate a permutation $\widehat\pi$ by the EASE algorithm based on the matrix $\widehat\Psi \in \mathbb R^{p\times p}$ that contains the estimators $\widehat \Psi_{ij}$, $i,j \in V$, as entries. For this permutation we obtain the same bound for the probability that EASE makes a mistake as shown in Proposition~\ref{prop:minimax_bound} by replacing $\widehat \Gamma_{ij}$ and $\Gamma_{ij}$ by $\widehat \Psi_{ij}$ and $\Psi_{ij}$, respectively. This together with the consistency of $\widehat \Psi_{ij}$ yields the following result.

\begin{cor}\label{cor_consistency_psi}
Assume the general setup of the heavy-tailed linear SCM with real-valued coefficients of this section. Let $\widehat\pi$ be the permutation computed by EASE based on the matrix $\widehat\Psi$. Assume the von Mises' condition for the upper and the lower tail of $X_j$ and let $k_n\in \mathbb N$ be an intermediate sequence with
  $$k_n \to \infty \quad \text{and} \quad k_n / n \to 0, \quad n\to\infty.$$
  Then, the EASE algorithm is consistent, i.e.,
\begin{align*}
\P\left(\widehat\pi \notin \Pi_G\right) \to 0, \quad \text{ as } n\to\infty.
\end{align*}
\end{cor}

\subsection{Presence of hidden confounders}\label{subsec:confounders}

A frequent assumption in causality is that one can observe all the relevant variables.
However, in many real-world situations, it is hard, if not impossible, to do so.
When some of the hidden variables are confounders (i.e., common causes), the causal inference process might be compromised.
Therefore, an attractive property of a causal inference algorithm involves its robustness to hidden confounders.
In this section, we show that EASE is capable of dealing with hidden common causes and, under certain assumptions, it recovers the causal order of the observed graph both in the population and in the asymptotic case.

Consider a heavy-tailed
linear SCM with \emph{real-valued} coefficients, as defined in Section~\ref{subsec:realvalbeta}, consisting of both observed and hidden variables.
This SCM induces a DAG $G = (V, E)$, with $V = V_O \cup V_H$, $V_O \cap V_H =\varnothing$, where $V_O$ ($V_H$) denotes the set of nodes corresponding to the observed (hidden) variables. Our goal is to recover a causal order for the subset of the observed variables $X_j$, $j \in V_O$.
In particular, we say that the EASE algorithm recovers a causal order $\pi$ over the observed variables
if
  \begin{align}\label{eq:caus-ord-observed}
   \pi(i) < \pi(j) \implies j\notin \an(i, G),\quad \text{for all } i, j \in V_O.
  \end{align}
In fact, the results of the previous sections hold even in the presence of
hidden confounders.

Regarding the population properties,
Theorem~\ref{thm:6cases} and~\ref{thm:6cases-psi} still
apply:
they state that the causal tail coefficients $\Gamma$ and $\Psi$ reflect
the causal relationships between pairs of variables 
without taking into account other variables, e.g., by conditioning.
In addition, the result of Proposition~\ref{prop:minimax_oracle},
and the corresponding
extension in Section~\ref{subsec:realvalbeta}, are also valid.
The proof of Proposition~\ref{prop:minimax_oracle} depends only on the
assumption that the input matrix
contains the pairwise causal effects between the variables.
Therefore, if we use matrix $\Gamma$ (or $\Psi$) as input for the EASE algorithm,
we recover a causal order $\pi$ that satisfies~\eqref{eq:caus-ord-observed}.

Regarding the asymptotic properties, 
$\hat\Gamma$ and $\hat\Psi$ are consistent even in the presence of hidden common causes:
in the proof of Theorem~\ref{thm:Gamma_consistency}, 
the other variables do not appear.
In addition, we can still
find an upper bound for
the probability that the EASE algorithm makes a mistake.
To do so, one needs to adjust the proof of Proposition~\ref{prop:minimax_bound} by replacing
the full DAG $G$ with the subgraph $G_O = (V_O, E_O)$ containing only the observed variables, where $E_O = E \cap (V_O\times V_O)$.
Combining the two previous arguments, it follows that
Corollary~\ref{cor_consistency} and~\ref{cor_consistency_psi} hold even
in the presence of hidden confounders.

The ability to deal with hidden confounders is a property that, in general, is not shared by all methods in causality. For example, the PC algorithm~\citep[Sec.\ 5.4.2]{spirtes2000causation} might retrieve a Markov equivalence class that contains DAGs with a wrong causal order
if some of the variables are not included in the analysis.
Similarly, the standard version of the LiNGAM algorithm \citep{shimizu2006linear} might produce a wrong DAG in the presence of hidden common causes.
\citet{hoyer2008estimation},  \citet{entner2010discovering}, and~\citet{tashiro2014parcelingam} proposed extensions of LiNGAM that deal with hidden variables. While \citet{entner2010discovering}, and~\citet{tashiro2014parcelingam} show good performance in practice, all three methods suffer from some drawbacks. For example, the LiNGAM version of~\citet{hoyer2008estimation} requires \emph{a priori} the number of hidden variables in the SCM
(or needs to estimate it from data). 
The main limitation of~\citet{entner2010discovering} is that it recovers causal information only for subsets of variables that are not affected by hidden confounders. For some non-ancestral
graphs, the method by~\citet{tashiro2014parcelingam} does not identify all ancestral relationships~\citep{wang2020causal}.
In addition, both the work of~\citet{hoyer2008estimation} and~\citet{tashiro2014parcelingam} are computationally
intensive, with the latter showing a computational time that grows exponentially with the sample size and the number of observed variables.
Among the constraint-based methods, \citet[Sec.\ 6.7]{spirtes2000causation} proposed the FCI method, which is an extension to the PC algorithm that deals with arbitrarily many hidden confounders and produces a partial ancestral graph \citep[see][]{zhang2008causal}. Due to the high number of independence tests, the FCI algorithm can be slow when the number of variables is large. For this reason, \citet{claassen2013learning} proposed the FCI+ algorithm,
a faster version of FCI that is consistent in sparse high-dimensional settings with arbitrarily many hidden variables.
In general, FCI type algorithms produce an equivalence class of graphs. They are not guaranteed to recover the causal order of the variables.

Compared to the methods mentioned above, our algorithm has the advantage of being computationally fast,
and being able to produce a causal order
without assumptions on the number of hidden variables and the sparsity of the true underlying DAG.

\subsection{Noise variables with different tails}\label{subsec:diff_tails}

We have so far considered the case where the noise variables of a given SCM
share the same tail coefficient $\alpha > 0$ and the same slowly varying function $\ell$.
Consider now a heavy-tailed SCM over $p$ variables, as defined in
Section~\ref{subsec:setup}, with the difference that the noise variables have possibly different tail
indices $\alpha_1, \dots, \alpha_p > 0$ and slowly varying functions $\ell_1,\dots, \ell_p\in \RV_0$, i.e., for $j = 1, \dots, p$,
  \begin{align*}
  \P(\eps_j > x) \sim \ell_j(x)x^{-\alpha_j}, \quad x\to \infty.
  \end{align*}
We say that $\eps_j$ has heavier (upper) tail than $\eps_k$ if either $0< \alpha_j < \alpha_k$,  or $\alpha_j = \alpha_k$ and $\ell_j(x) / \ell_k(x) \to \infty$ as $x\to \infty$. 
Denote by $G = (V, E)$ the DAG induced by the SCM.
With similar arguments to the proof of Lemma~\ref{prop:p01}, the causal tail coefficient for $j, k \in V$
can then be expressed as
  \begin{align}\label{eq:diff_tails}
  \G_{jk} = \frac{1}{2} + \frac{1}{2} \lim_{x \to \infty}
   \frac{\sum_{h\in A_{jk}} \beta_{h\to j}^{\alpha_h}
   \P(\eps_h > x)}{\sum_{h\in\An(j, G)} \beta_{h\to j}^{\alpha_h}\P(\eps_h > x)},
  \end{align}
where $A_{jk} = \An(j, G) \cap \An(k, G)$, and the sum over an empty index set
equals zero.
From~\eqref{eq:diff_tails}, we can study the different constellations of $X_j$ and $X_k$ and the corresponding values of $\G_{jk}$. 
We obtain the following three statements.
\begin{enumerate}
\item If $X_j$ and $X_k$ are independent, the
causal tail coefficient satisfies, as before, $\G_{jk} = 1/2$.
\item
  If $X_j$ is an ancestor of $X_k$ then, as before, $\G_{jk} = 1$.
\item
  In all other scenarios, it holds that $\G_{jk} < 1$ as long as the noise variables $\eps_h$, $h\in A_{jk}$, of the common ancestors of $X_j$ and $X_k$ 
  have tails that are lighter than (or as light as) the one of $\eps_j$.
  On the other hand, if there is some common ancestor of $X_j$ and $X_k$ 
for which the noise variable's tail is heavier than the one of $\eps_j$, then $\G_{jk} = 1$.
\end{enumerate}
These statements help to understand in which cases the values of $\G_{jk}$ indicate a correct causal relation.
\begin{example}\label{ex_2}
  Suppose that $X_j$ is an ancestor of $X_k$, and there is possibly a common ancestor $X_0$ (which can also be a hidden confounder). Since $\Gamma_{jk}=1$, we will never 
  mistakenly detect the existence of a
  causal effect from $X_k$ to $X_j$. 
If either $X_j$ or $X_0$ has a heavier tail than $X_k$, then $\G_{kj}=1$ and we cannot detect the causal effect from $X_j$ to $X_k$.
  \end{example}

\begin{example}\label{ex_com_an}
  Suppose neither $X_j$ causes $X_k$ nor $X_k$ causes $X_j$, 
and $X_0$ is a common ancestor of $X_j$ and $X_k$.
   If $X_0$ has a tail that is lighter than (or as light as)
  the one of $X_k$, but heavier than the one of $X_j$,
  then $\G_{jk} = 1 > \G_{kj}$. Therefore, the causal tail coefficient indicates a wrong causal effect from $X_j$ to $X_k$.
\end{example}

Whenever there exists a causal effect
between two variables, we can, at worst, fail 
to detect it (that is, the causal tail coefficient does not indicate a causal effect in the wrong direction).
When there is no causal connection between
two variables, the causal tail coefficient might indicate a wrong causal effect.
However, this does not affect the 
correctness of the
EASE algorithm, on the population level. 
Indeed, if $\G_{jk} = 1 > \G_{kj}$, for $j, k \in V$, there are two possibilities.
If $X_j$ is an ancestor of $X_k$, then the algorithm correctly chooses $j$ before $k$.
If $X_j$ and $X_k$ share a common ancestor, but none of them is causing the other (see Example~\ref{ex_com_an}), then
any permutation of $j$ and $k$ yields a valid causal order.
Example~\ref{ex_2} shows that EASE could make mistakes when 
$X_j$ is an ancestor of $X_k$ and
$\G_{jk} = \G_{kj} = 1$, 
since
the causal tail coefficient does not indicate any causal effect.
In this case, one could remove one variable at a time to obtain a subset $\tilde V \subset V$ that satisfies $\G_{hm} < 1$ or $\G_{mh} < 1$ for each $h,m\in \tilde V$. 
By applying the EASE algorithm to the subset of the remaining variables, one would recover a correct causal order on such subset. 

For simplicity we only considered the causal tail coefficient $\G$,
but similar conclusions hold for $\Psi$.

\ifjournal{}\fi
\section{Numerical results}\label{sec:numerical_results}

\subsection{Simulation study}\label{sec:simulation_study}
We assess the performance of EASE in estimating a causal order of a graph
induced by a heavy-tailed SCM. We simulate the SCMs with real-valued
coefficients and different numbers of variables $p$ and samples $n$.
The noise variables have Student's~$t$ distributions with different degrees of
freedom $\alpha$ and we consider four different settings, including unobserved
confounders and model misspecification;
see
\ifjournal{Section~\ref{app:simulation_experiments} of the Supplementary Material
}\else{Appendix~\ref{app:simulation_experiments} }\fi
for details.
Since the coefficients in the SCM are real-valued, we use the causal tail
coefficient $\Psi$ defined in Section~\ref{subsec:realvalbeta} for our EASE algorithm.
Our code is available as an \texttt{R} package
at \url{https://github.com/nicolagnecco/causalXtreme}.
Scripts generating all our figures and results can be found at the same url.

\subsubsection{Competing methods and evaluation metric}
We compare our algorithm to three well-established methods in causality,
the Rank PC algorithm~\citep{harris2013pc}, ICA-LiNGAM~\citep{shimizu2006linear},
and Pairwise LiNGAM~\citep{hyvarinen2013pairwise}.

The classic PC algorithm \citep[Sec.\ 5.4.2]{spirtes2000causation} belongs to the class of constraint-based methods for causal discovery. It estimates the Markov equivalence class of a DAG, encoded as a completed partially directed acyclic graph (CPDAG).
The PC algorithm retrieves a CPDAG by performing conditional independence tests
on the variables.
The Rank PC algorithm, proposed by~\citet{harris2013pc}, is an extension of the
PC algorithm and uses the rank-based Spearman correlation
to perform the independence tests. This modification ensures that the method is
more robust to non-Gaussian data.

The algorithms that fit our problem best are ICA-LiNGAM and Pairwise LiNGAM.
ICA-LiNGAM, proposed by~\citet{shimizu2006linear},
leverages the results of independent component analysis (ICA) \citep{comon1994independent} to estimate the DAG of a linear SCM
under the only assumption that the noise is non-Gaussian.
Pairwise LiNGAM, proposed by~\citet{hyvarinen2013pairwise},
is a likelihood-ratio-based method to identify the exogenous variables within
the DirectLiNGAM framework.
DirectLiNGAM, introduced by~\citet{shimizu2011directlingam},
is an algorithm based on two iterative steps, namely, finding an
exogenous variable (i.e., a node in the DAG with no parents), and regressing this variable out of all the others.
In this simulation study, we let ICA-LiNGAM and Pairwise LiNGAM return only a causal order (and not a complete DAG structure), in order to make a fair comparison with EASE.

The algorithms return different types of causal information. On the one hand,
EASE, ICA-LiNGAM, and Pairwise LiNGAM estimate a causal order.
On the other hand,
the Rank PC algorithm computes a CPDAG
that represents a Markov equivalence class of DAGs.
Therefore, when it comes to evaluating the performance of the algorithms,
it becomes crucial to use a measure that is meaningful for all of them.
We choose the structural intervention distance (SID) proposed by \citet{peters2015structural}. The SID takes as input either a pair of DAGs or a DAG and a CPDAG and returns the number of falsely inferred interventional distributions \citep[Definition 3]{peters2015structural}. We standardise the SID to lie between zero and one.
For each method,  we compute the distance between the simulated DAG, i.e., the ground truth, and the estimated DAG or CPDAG.
An estimated causal order $\widehat \pi$ corresponds to a fully
connected DAG $G = (V, E)$, where $(i, j) \in E$ if
$\widehat\pi(i) < \widehat\pi(j)$.
As a caveat, we slightly adapt the SID in the case of hidden confounders (see Setting~2 of our simulations), since it is not designed to work in such a situation.

\subsubsection{Results}\label{subsec:sim_results}
In this simulation experiment we use the implementation of the Rank PC, and ICA-LiNGAM algorithm developed by \citet{kalisch2012causal}.
We implemented Pairwise LiNGAM in \texttt{C++} and included it in our software package.

Regarding the hyperparameter settings, for the Rank PC algorithm, we perform
a conditional independence test based on Spearman's correlation coefficient, as proposed by~\citet{harris2013pc}, and we set the level of the independence tests to 0.0005.

Concerning the choice of the number of exceedances $k_n$ in the EASE algorithm, we perform a small preliminary simulation. Figure~\ref{fig:k-robustness} shows the SID of EASE for $k_n =  \lfloor n^{\nu}\rfloor$ and different fractional exponents $\nu>0$.
The best fractional exponent in Figure~\ref{fig:k-robustness} seems to depend on the tail heaviness of the noise variables, and in particular it appears to be smaller for larger values of the degree of freedom $\alpha$ of the Student's $t$ distribution. Our estimators $\widehat \Gamma_{ij}$ and $\widehat \Psi_{ij}$ are similar in construction to Hill's estimator~\citep{hill1975simple}. For the latter, the optimal number $k_n^*$ of exceedances depends on the tail index and an index related to a second-order condition; see Section~3.2 in \cite{deh2006a} for details. For the Student's $t$ distribution with $\alpha$ degrees of freedom it can be shown that $k_n^* \sim C_\alpha n^{1/(\alpha+1)}$, where $C_\alpha >0$ is a constant. This intuitive explanation coincides well with the optimal fractional exponents in Figure~\ref{fig:k-robustness}.
In the sequel, we choose $k_n = \lfloor n^{0.4}\rfloor$ because it lies within
the best range for the fractional exponent.
This result also agrees with the assumptions of Theorem~\ref{thm:Gamma_consistency}, where $k_n \to \infty$ and  $k_n / n \to 0$, as $n\to\infty$.

Regarding the simulation settings, we let $n$ denote the number of observations, $p$ the number of variables, and $\alpha > 0$ the tail index of the simulated distribution.
For each combination of $n \in\{500, 1000, 10000\}$, $p \in \{4, 7, 10, 15, 20, 30,  50\}$ and $\alpha \in \{1.5, 2.5, 3.5\}$ we simulate 50 random SCMs under four different settings. The simulated data is independent of the data used to choose the best fractional exponent of $k_n$ (see Figure~\ref{fig:k-robustness}).
The first setting corresponds to linear SCMs with real-valued coefficients described in Section~\ref{subsec:realvalbeta}. In the second setting, we introduce hidden confounders. The third setting corresponds to nonlinear SCMs. In the fourth setting, we first generate linear SCMs and then transform each variable to uniform margins.
Further details on the generation of the SCMs are in
\ifjournal{Section~\ref{app:simulation_experiments} of the Supplementary Material.
}\else{Appendix~\ref{app:simulation_experiments}. }\fi
For each simulation and setting we evaluate the performance of EASE, ICA-LiNGAM, Pairwise LiNGAM, and Rank PC algorithm with the SID.
As a baseline, in each simulation, we also compute the SID of a randomly generated DAG, where we randomly choose the causal order, the sparsity and the edges of the graph.

Figure~\ref{fig:sid_plot} displays the results of the simulations when the tail index $\alpha = 1.5$.
We can observe that EASE is quite robust across the four different settings. We explain this finding as follows. In the presence of hidden confounders (Setting~2), EASE can retrieve a correct causal order, asymptotically. Furthermore, the nonlinear setting used in this simulation (Setting~3) is such that the relationships between the variables are kept linear in the tails. Therefore, our algorithm is only moderately affected by this model misspecification. Finally, EASE is not affected by the transformation to uniform margins (Setting~4) because the causal tail coefficient $\Psi$ is invariant under any strictly monotone increasing transformation.

Compared to the other methods, we observe that EASE performs better than Rank
PC across all settings, and better than ICA-LiNGAM in Setting~2 and~4.
Pairwise LiNGAM is overall the best performing method, except in Setting~4.
Also, Pairwise LiNGAM is less affected by misspecifications in the bulk of the
data distribution (Setting~3), compared to ICA-LiNGAM. One reason is that
Pairwise LiNGAM relies on ordinary least square regression that is sensitive
to high-leverage points.
In this particular setting, ICA-LiNGAM and Pairwise LiNGAM
gain in robustness if we discard the data in the bulk of the distribution; see Table~\ref{tab:lingam_experiment}
\ifjournal{in the Supplementary Material.
}\else{in Appendix~\ref{app:additional_figs}. }\fi
Furthermore, both ICA-LiNGAM and Pairwise LiNGAM are the algorithms with the best convergence for high dimension $p$, as $n$ increases.
This result is not surprising because EASE uses only the $k_n < n$ upper order statistics to recover the causal structure.
In addition, we notice that Pairwise LiNGAM outperforms ICA-LiNGAM, in agreement to the findings of~\citet{hyvarinen2013pairwise}.
Regarding the Rank PC algorithm, we can see that it is quite stable under different settings, but it performs only marginally better than the random method.
Moreover, in Figure~\ref{fig:rankpc}
\ifjournal{of the Supplementary Material,
}\else{in Appendix~\ref{app:additional_figs}, }\fi
we observe that the performance of the Rank PC algorithm is almost constant for  significance levels of the independence test between $5\cdot 10^{-4}$ and $0.5$.
The results do not change when we consider the standard PC algorithm, which is based on partial correlation as a conditional independence test. 
The results for tail indices $\alpha = 2.5, 3.5$ are in Figures~\ref{fig:sid_plot_alpha_25} and~\ref{fig:sid_plot_alpha_35}
\ifjournal{in the Supplementary Material.
}\else{in Appendix~\ref{app:additional_figs}. }\fi
Increasing values of $\alpha$ correspond to lighter tails, and we observe that
it becomes more challenging for EASE to recover the correct causal order.

In addition to the competitive performance in the simulations, a further advantage of EASE is its computational efficiency. The algorithm performs computations only on the tails of the dataset
and relies on simple non-parametric estimators of the causal tail coefficient; see Section~\ref{subsec:compcomp}.
Figure~\ref{fig:time} 
\ifjournal{in the Supplementary Material
}\else{in Appendix~\ref{app:additional_figs} }\fi
 shows that EASE can be up to two orders of
magnitude faster than the other methods.

\begin{figure}[h!]
\centering
\includegraphics[scale=0.55]{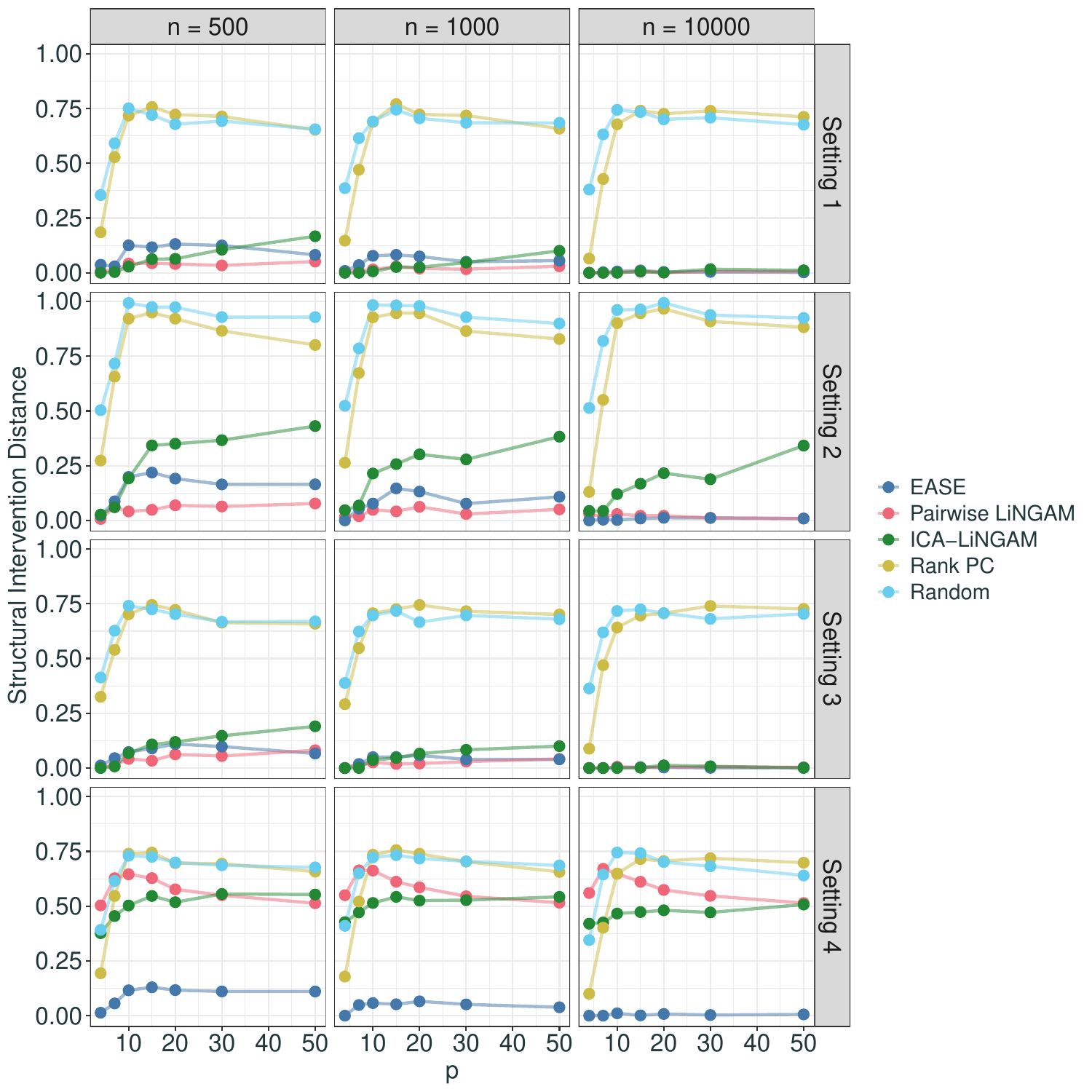}
\caption{The figure refers to Section~\ref{subsec:sim_results}. It shows the SID averaged over 50 simulations, for each method, setting, sample size $n$ and dimension $p$, when the tail index is $\alpha = 1.5$.
Each row of the figure corresponds to one setting. In order, the settings are: (1)~Linear SCM; (2)~Linear SCM with hidden confounders; (3)~Nonlinear SCM; (4)~Linear SCM where each variable is transformed to a uniform margin.}
\label{fig:sid_plot}
\end{figure}

\subsection{Financial application}\label{sec:financial}
In general, one cannot easily reason about causality in financial markets.
Several factors influence financial returns, and most of them are unobserved.
In addition, the effect of these factors varies in time.
However, under particular circumstances, it is possible to conjecture the existence
of a specific causal relationship, with a reasonable
degree of confidence.
For example, in the Swiss financial market, one can argue that very large (both
positive and negative) changes to the Euro Swiss franc exchange rate (EURCHF)
induce
changes in the Swiss Market Index (SMI), the main stock index in Switzerland.
This is due to  multiple reasons such as the multinational nature and the high export dependency of the Swiss economy.
Consider, for instance, the 
  decision of the
Swiss National Bank (SNB) to discontinue
the minimum exchange rate between Swiss franc and Euro, on January 15, 2015.
This event can be deemed as a \emph{large intervention} 
  with a plausible causal
interpretation (in the spirit of \citet[Sec.\ 8.7]{cox1996multivariate}).
Following the SNB decision, the EURCHF plummeted
more than 30 standard deviations, and all the stocks included in the
SMI dropped in value on the same day.

For this reason, we consider the returns of the Euro Swiss franc exchange rate
(EURCHF) and the three largest Swiss stocks in terms of market capitalisation,
namely, Nestl\'{e} (NESN), Novartis (NOVN) and Roche (ROG).
We choose to analyse three individual stocks instead of the SMI for three reasons.
First, we deem it more appropriate to test our assumptions on more than
two variables.
Moreover, the three stocks make up 50\% of the SMI composition, and thus they are
representative of the index itself.
Furthermore, all three companies are multinational corporations with a homogeneous
exposure to foreign markets and a negligible fraction of revenues coming from the
Swiss market \citep[see][]{nestle2019, novartis2019, roche2019}.
The last point suggests that the effect (if any) of EURCHF on these stocks
does not depend on the idiosyncrasies of each firm.

The dataset consists of daily returns spanning from January 2005 to September 2019 and includes $n = 3832$ observations.
The goal is to assess whether EASE can retrieve a correct causal order for the set
of four variables.
As ground truth, we conjecture that large changes in EURCHF (both positive and negative)
will cause large changes in the stock returns, but not \emph{vice versa}. Figure~\ref{fig:finance_dag} shows the causal structure corresponding to our hypothesis.

\begin{figure}[H]
\centering
\includegraphics[scale=1]{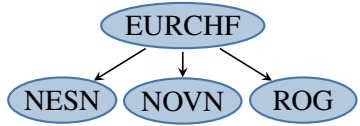}
\caption{A DAG representing a plausible causal structure among the daily returns of Euro Swiss franc exchange rate (EURCHF), Nestl\'{e} (NESN), Novartis (NOVN) and Roche (ROG).}
\label{fig:finance_dag}
\end{figure}

Before running the EASE algorithm, we assess the tail behaviour of each variable by estimating the shape parameter $\xi$ of a generalised Pareto distribution \citep[see][Sec.\ 3.4]{embrechts2013modelling} on the threshold data. Recall that $\xi$ is the reciprocal of the tail index $\alpha = 1/\xi$, if $\xi > 0$. For each variable and each tail (upper and lower) we estimate the $\xi$ parameter using 200 observations, corresponding to the 95\%-quantile, approximately.
The estimated parameters and their standard errors for the upper tails are 0.31 (0.08) for EURCHF, 0.25 (0.08) for NESN, 0.16 (0.07) for NOVN, and 0.25 (0.09) for ROG. Regarding the lower tail, the estimated parameters and their standard errors are 0.27 (0.08) for EURCHF, 0.12 (0.08) for NESN, 0.17 (0.08) for NOVN, and 0.23 (0.10) for ROG.
By adding and subtracting two standard errors from each point estimate of $\xi$,
we observe that the lower shape parameter of Nestl\'{e} and Novartis is not significantly
different from zero. Moreover, the shape parameters for the Euro Swiss franc
exchange rate and for Roche are significantly different from zero in both tails.
In addition, since the confidence intervals of all estimates are overlapping, the assumption of a common shape parameter seems reasonable. It seems however that the returns of Nestl\'{e} and Novartis have slightly lighter tails compared to the other two variables.

With the goal of recovering a causal order
with EASE,
first we estimate the $\widehat\Psi$ matrix from the full dataset,
by setting the number of exceedances to $k = 10$ (this corresponds to
$\lfloor n ^{0.3}\rfloor$, approximately).
We run the EASE algorithm on the matrix $\widehat\Psi$, and we obtain
the causal order $\widehat\pi^{-1}$ = (EURCHF, NOVN, ROG, NESN); this
agrees with the proposed ground truth of Figure~\ref{fig:finance_dag}.
As a comparison, also ICA-LiNGAM and Pairwise LiNGAM recover a causal order
that agrees with our hypothesis.

Since our results are based on the $k=10$ upper order statistic,
we assess the variability of the estimates $\widehat\Psi$
for different values of $k_n = \lfloor n ^ \nu \rfloor$,
with $\nu \in [0.2, 0.7]$.
Figure~\ref{fig:financial_robk} shows the estimated coefficients $\widehat \Psi$
for the pairs (EURCHF, NESN), (EURCHF, NOVN), and (EURCHF, ROGN),
with the corresponding
90\% bootstrap confidence intervals.
In the three plots, the black (blue) line corresponds to the estimated coefficient
$\widehat\Psi_{\text{EURCHF}, i}$ ($\widehat\Psi_{i, \text{EURCHF}}$),
with $i = $ NESN, NOVN, ROG.
We can interpret the difference between the black and blue lines
as a causal signal, since $\Psi_{ij} - \Psi_{ji} > 0$ if variable $i$ causes
variable $j$, for $i, j \in V$ (see Section~\ref{subsec:realvalbeta}).
For the pairs (EURCHF, NESN) and (EURCHF, NOVN) the blue and the black lines overlap
for all values of the upper order statistic $k$, and therefore any possible causal
effect is not identified by the estimated coefficient $\widehat\Psi$.
This result agrees with Example~\ref{ex_2} of
Section~\ref{subsec:diff_tails} which shows that the causal tail coefficients
do not identify a causal signal when the ancestor has a heavier tail than its
descendant --- as is the case for the pairs (EURCHF, NESN) and (EURCHF, NOVN).
In contrast, if we consider the pair (EURCHF, ROGN) we notice that the difference
$\widehat\Psi_{\text{EURCHF}, \text{ROGN}} - \widehat\Psi_{\text{ROGN}, \text{EURCHF}}$
is positive for all fractional exponents $\nu \leq 0.4$. This can be explained
by the fact that EURCHF and ROGN have comparable tail indices, and therefore it is easier for 
the coefficient $\Psi$ to detect a possible causal effect between the variables.
\ifjournal{In Section~\ref{app:financial_dynamics} of	the Supplementary Material,
}\else{In Appendix~\ref{app:financial_dynamics}, }\fi we show the dynamic evolution of the $\widehat\Psi$ coefficient across time. 

Given the highly complex nature of financial markets, we do not take the conclusion of this experiment as a definite answer but rather consider it as an indication for a possible causal relationship in this data.

\subsection{River data}\label{sec:river}
We apply the EASE algorithm to the average daily discharges of the rivers located in the upper Danube basin. This dataset has been studied in~\citet{asadi2015extremes}, \citet{eng2018a} and \citet{mhalla2019causal}, and it is made available by the Bavarian Environmental Agency (\url{http://www.gkd.bayern.de}).
We consider average daily discharges for 12 stations along the basin, representing the different tributaries and different sections of the Danube, and 11 of them are a subset of the 31 stations selected by~\citet{asadi2015extremes}.
We exclude some of the 31 stations that are spatially very close since those are highly dependent and almost indistinguishable.
For convenience, we name the stations with the same numbers used in~\citet{asadi2015extremes}.
While~\citet{asadi2015extremes} decluster the data prior to their analysis in order to obtain independent samples, we use all observations despite the possible temporal dependence.
In fact, for extreme value copulas, \cite{zou2019} show that the use of a larger but possibly dependent dataset can decrease the asymptotic estimation error.
Moreover, \citet{fawcett2007improved} argue that considering all exceedances over a high threshold reduces the bias of the maximum likelihood estimators compared to a declustered analysis. To account for the time dependence of the exceedances, they adjust the standard errors using methods presented by \citet{smith1990regional}. In this experiment, we compute the standard errors according to the adjustment proposed by~\citet{fawcett2007improved}.

The dataset spans from 1960 to 2009, where we consider only the summer months, i.e., June, July, and August. The rationale is that most of the extreme observations occur in summer due to heavy rainfall. The final dataset contains $n = 4600$ observations.
The rivers have an average volume that ranges between 20~$m^3/s$ (for the upstream rivers) and 1400~$m^3/s$ (for the downstream rivers). A map of the basin can be seen in Figure~\ref{fig:river_map} 
\ifjournal{of the Supplementary Material.
}\else{in Appendix~\ref{app:additional_figs}.}\fi
In order to implement our method, we first assess whether the equal tail index
assumption is satisfied.
To do so, we consider a regional model similar to the one
presented by~\citet{asadi2015extremes}. We split the stations into four separate
regions. Region 1 contains three stations in the southwest of the upper Danube basin
and the catchment areas are located at mid-altitude;
region 2 includes three stations in the Inn-Salzach basin whose tributaries are located
in high-altitude alpine regions; region 3 contains four stations along the main Danube
with large average water volume; region 4 comprises two stations in the north of the
Danube. For each region, we fit a Poisson point process likelihood~\citep[Chap.\ 7]{coles2001introduction} by considering exceedances over the 90\% quantile and
by constraining the shape parameter $\xi$ to be equal across the stations within the
same region.
To address the presence of temporal dependence in the exceedances, 
we adjust the standard errors as shown by~\citet{fawcett2007improved} and, based on these, we compute approximate confidence intervals.
For each region, the estimated shape parameter and the corresponding confidence
intervals are 0.167 (0.062, 0.273), 0.145 (0.047, 0.242), 0.133 (0.027, 0.239) and 0.229 (0.099, 0.358), respectively.
The fact that the confidence intervals overlap suggests that our assumption of
equal tail index across the variables is satisfied.
Moreover, all confidence intervals do not include the zero value, and therefore the data can be deemed to be heavy-tailed.

We perform two separate analyses to identify causal structures both in space and time.
Regarding the spatial structure, the goal is to recover the causal order of the network flow of the 12 stations on the rivers.
This is a non-trivial task for two reasons.
First, the water discharges at the stations can be confounded by rainfall that spreads out across the region of interest. Second, the large difference in water volume between the stations can further mask the causal structure of the water flow.
The true DAG of the spatial disposition of the stations is shown in Figure~\ref{fig:river_dag}. We run the EASE algorithm based on the $\G$ coefficient, considering the upper tails, and setting the number of exceedances to $k = 29 = \lfloor n ^ {0.4} \rfloor$. The estimated causal order $\widehat \pi^{-1} = $ (23, 32, 26, 28, 19, 21, 11, 9, 7, 14, 13, 1) is correct, and the corresponding fully connected DAG has an SID equal to 0. We also run ICA-LiNGAM and Pairwise LiNGAM on the same dataset, and we obtain an SID of 0 and 0.053, respectively. Clearly, in this example, the causal structure in the bulk of the distribution is the same as in the extremes.
\begin{figure}[h]
\centering
\includegraphics[scale=1]{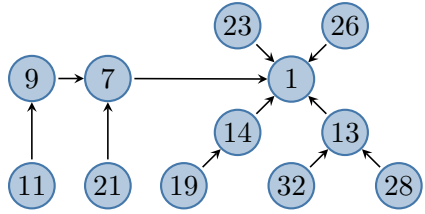}
\caption{DAG representing the spatial configuration of the stations across the upper Danube basin.}
\label{fig:river_dag}
\end{figure}
To assess the variability of our results, we compute the average SID of EASE over 50 bootstrap samples,
for different values of the threshold parameter $k = \lfloor n ^ \nu\rfloor$,
$\nu\in[0.2, 0.7]$ --- see Figure~\ref{fig:river_robk}.
From this figure, we observe that the fractional exponent $\nu \approx 0.4$
yields a good performance both in terms of SID and variability.
The value for the optimal fractional exponent agrees with the empirical
findings of Section~\ref{sec:simulation_study}.

Concerning the time-series analysis, we consider each station individually and try to recover
the direction of time from the lagged data. Consider an autoregressive (AR) process of order $p \geq 1$,
$$X_t = \sum_{j = 1}^{p} \beta_{j} X_{t - j} + \eps_j, \qquad t \geq 0,$$
where the $\eps_j$ are regularly varying with comparable tails. For a detailed discussion of such time series models we refer to~\citet{basrak2009regularly} and~\citet[Chapter~7]{embrechts2013modelling}.
\citet{Peters2009icml} prove that an AR($p$) process is time reversible, i.e., can be represented by an AR($p$) process in the reversed time direction,
if and only if the noise is Gaussian.
This means that for heavy-tailed random variables, one can in principle detect
the direction of time from the data.
For each station, we construct a dataset $D$ where the rows correspond to different
days and the columns to different lags. We denote by $X_0, X_1, \dots, X_6$
the columns containing
the current and lagged values of the station discharge.
We then run EASE on the dataset $D$ and recover a causal order $\widehat\pi$
over the seven variables $X_0, \dots, X_6$.
We say that the direction of time is correctly inferred if the estimated
causal order places the lags in the correct position, i.e.,
$\widehat\pi(j) < \widehat\pi(i)$ if $j < i$, for $i, j = 0, \dots, 6$.
EASE successfully recovers the direction of time for 11 out of the 12 stations.
As a comparison, ICA-LiNGAM, and Pairwise LiNGAM find the correct order for 9
and 11 stations, respectively.

So far, we have not considered a multivariate time series analysis of the dataset. 
On the one hand, in this particular application of the river data,
the effects among the stations are almost instantaneous --- due
to the closeness of the water catchment areas, the fact that we have daily values and the river speed, which is about ten kilometres per hour.
On the other hand, time information usually helps in estimating causal relations. For this 
reason, we apply multivariate Granger causality~\citep{granger1969investigating} to the river data, considering one day lag. 
For each pair of stations $(i, j)$, we say that $i$ Granger-causes $j$ if the 
corresponding $p$-value is significant at a 0.05 level, after the Bonferroni 
correction.
We sort the significant $p$-values in ascending order,
and we sequentially add the directed edge $(i, j)$ if nodes $i$ and $j$ are not
connected.
We continue until the skeleton of the resulting graph is connected, or all the $p$-values have been selected.
The resulting directed tree achieves an SID of 0.083.
Alternatively, if we sequentially add directed edges $(i, j)$ that do not create cycles (until all $p$-values have been selected) we obtain a DAG with an SID of 0.196.

\section{Discussion and future work}

In several real-world phenomena, the causal mechanisms between the variables appear more clearly during extreme events. Moreover, there are situations where the causal relationship in the bulk of the distribution differs from the structure in the tails.
We have introduced an algorithm, named extremal ancestral search (EASE), that is shown to consistently recover the causal order of a DAG from extreme observations only. EASE has the advantage of relying on the pairwise causal tail coefficient, and therefore it is computationally efficient. In addition, our algorithm can deal with the presence of hidden confounders and performs well
for small sample sizes and high dimensions.
The EASE algorithm is robust to model misspecifications,
such as nonlinear relationships in the bulk of the distribution,
and strictly monotone increasing transformations applied marginally to each variable.

This work sheds light on a connection between causality and extremes, and thereby opens new directions of research.
In particular, it might be interesting to study the properties of the causal tail coefficient
under
more general conditions.
This includes high-dimensional settings where the dimension grows with the sample size,
more general SCMs where the functional relations between the variables can be nonlinear,
and settings where the noise variables have lighter tails. For example, in future research,
one could combine EASE with regression techniques to obtain the complete DAG structure, 
compute the residuals and then test them for the assumption of common tail indices.

Another possible extension may consider multivariate time series data,
where the temporal order of cause and effect could help to estimate causal
relationships among variables, see, e.g.,~\citet{granger1969investigating}.
Future work might study how to combine our approach with the
Granger causality framework. For instance, one could first perform a
Granger causality analysis,
and then, apply EASE to the residuals of the vector autoregression model.
For a careful study in this direction,
it would also be necessary to investigate
the statistical properties of the residuals.

\section*{Acknowledgements}
We thank Cesare Miglioli, Stanislav Volgushev and Linbo Wang for helpful discussions. We are grateful to the
editorial team and two anonymous referees for constructive comments that improved the paper.
JP was supported by research grants from VILLUM FONDEN and the Carlsberg Foundation,
and SE was supported by the Swiss National Science Foundation.

\ifjournal{
  \begin{supplement}
  \stitle{Supplement to ``Causal Discovery in Heavy-Tailed Models''}
  \slink[doi]{??}
  \sdatatype{.pdf}
  \sfilename{??.pdf}
  \sdescription{
  The Supplementary Material~\citep{gnesupp2019} consists of six sections.
  Section~\ref{app:regvar}	summarises important facts about regularly varying random
  variables.
  Section~\ref{app:proofs} contains the proofs of the results of the paper.
  Section~\ref{app:minimax_search} illustrates how the EASE algorithm retrieves a causal order of
  a DAG.
  Section~\ref{app:simulation_experiments} describes the settings used in the simulation study. 
  Section~\ref{app:additional_figs2} contains additional figures and tables.
  Section~\ref{app:financial_dynamics} presents further results for Section~\ref{sec:financial}.
  }
  \end{supplement}
}\fi

\begin{appendix}

\ifjournal{
	\section{Figures}\label{app:additional_figs}
	
\begin{figure}[h!]
\centering
\includegraphics[scale=0.45]{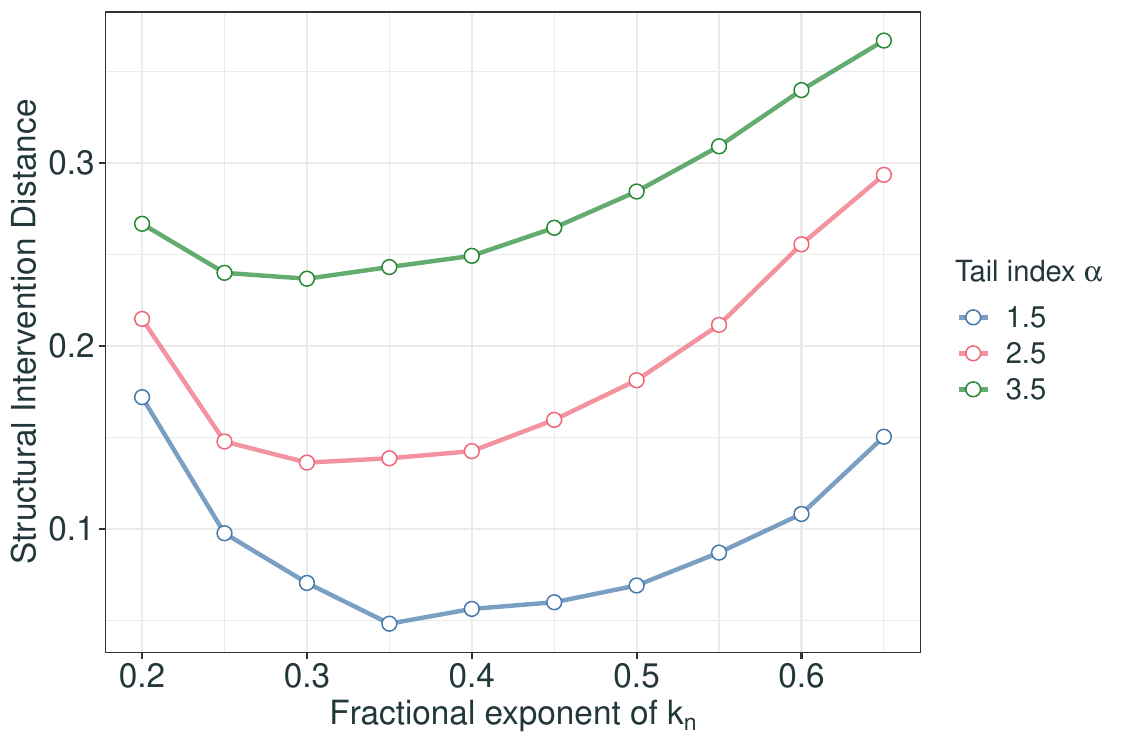}
\caption{The figure refers to Section~\ref{subsec:sim_results}. It shows the structural intervention distance (SID) of the EASE algorithm for different fractional exponents $\nu \in [0.2, 0.7]$ of $k_n = \lfloor n^\nu \rfloor$ and different tail indices $\alpha \in\{1.5, 2.5, 3.5\}$. Each point represents the SID measure averaged over 10 random samples for different sample sizes $n \in \{500, 1000, 10000\}$ and dimensions $p \in \{4, 7, 10, 15, 20, 30, 50\}$ in a linear SCM. 
In the experiments of Section~\ref{sec:simulation_study}, we set $k_n = \lfloor n^{0.4}\rfloor$.}
\label{fig:k-robustness}
\end{figure}

\begin{figure}[h!]
\centering
\includegraphics[scale=.45]{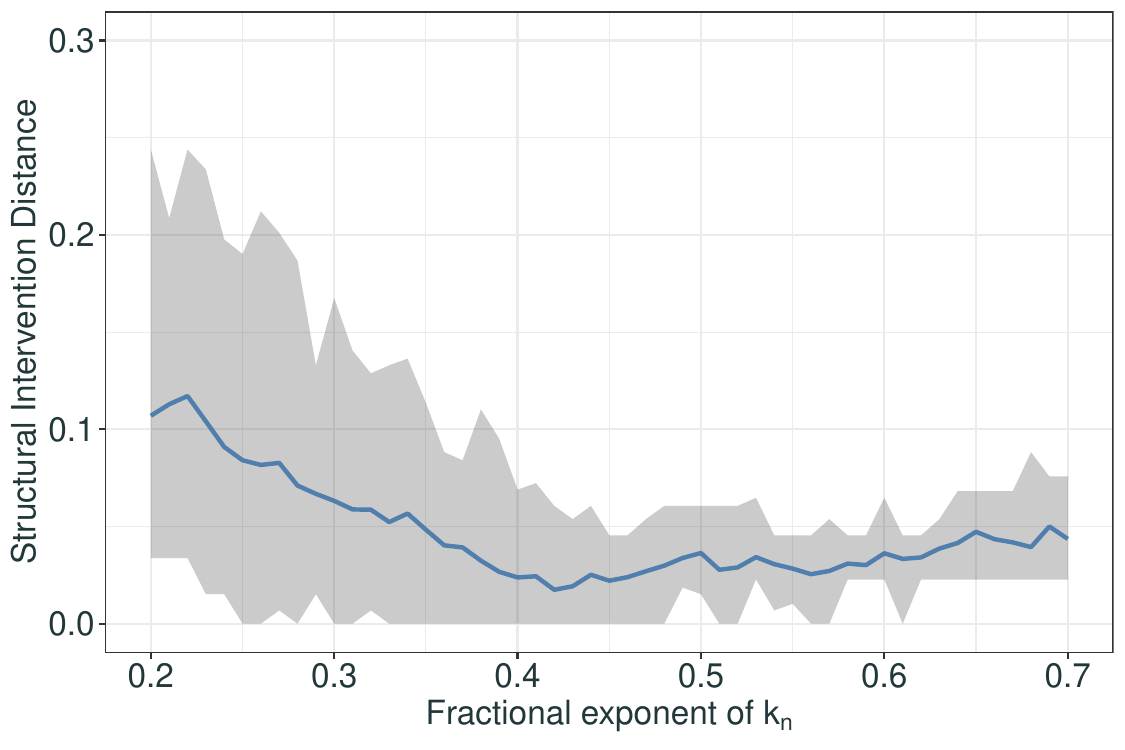}
\caption{The figure refers to Section~\ref{sec:river}. It shows the robustness of the structural intervention distance (SID) for varying fractional exponents $\nu \in [0.2, 0.7]$ of $k_n = \lfloor n ^ \nu\rfloor$.
Each point represents an average over 50 SID evaluations for the EASE algorithm, after bootstrapping the original dataset. The shaded interval corresponds to the 90\% bootstrap confidence interval.}
\label{fig:river_robk}
\end{figure}

\begin{figure}[h!]
\centering
\includegraphics[scale=0.6]{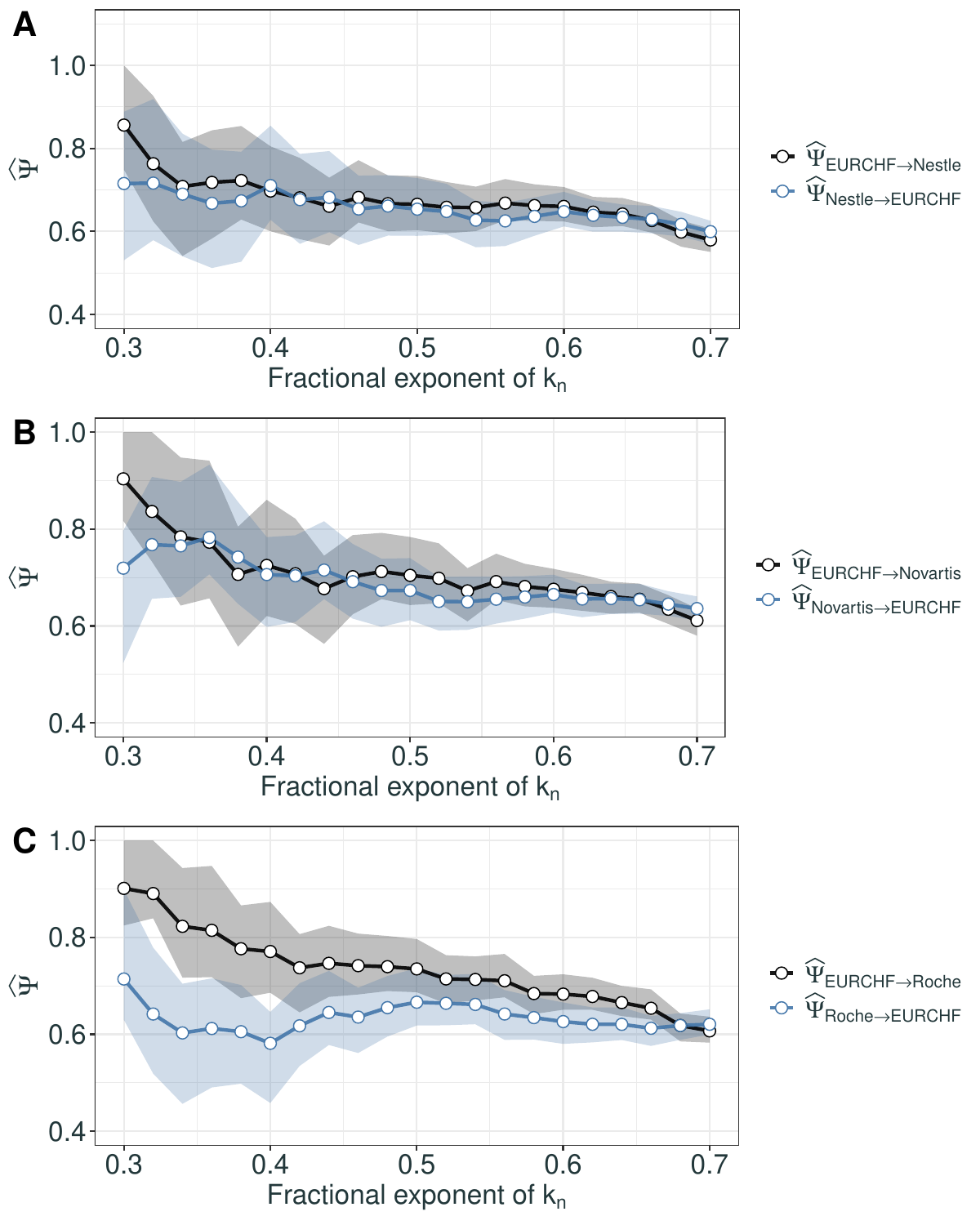}
\caption{The figure refers to Section~\ref{sec:financial}. It shows the variability of the estimated coefficients $\widehat\Psi$ for different fractional exponents $\nu \in [0.2, 0.7]$ of $k_n = \lfloor n ^ \nu \rfloor$. Each point represents the estimates of $\widehat\Psi$
based on the full dataset.
The shaded intervals correspond to the 90\% bootstrap confidence intervals
over 1000 repetitions.}
\label{fig:financial_robk}
\end{figure}

}\else{

\section{Some facts about regular variation}\label{app:regvar}
In the sequel, for any two functions $f$, $g: \mathbb{R} \rightarrow \mathbb{R}$,
we write $f \sim g$ if $f(x)/g(x) \to 1$ as $x\to\infty$. Also, we write $S_p := Y_1 + \dots + Y_p$, and $M_p := \max(Y_1, \dots, Y_p)$.

Consider independent random variables $Y_1,\dots, Y_p$ and assume that they have comparable upper tails, i.e., there exist  $c_j, \alpha > 0$ and $\ell \in \RV_0$ such that for all $j \in \{1,\dots, p\}$
  \begin{align}\label{eq:comp_tail1}
  \P(Y_j > x) \sim c_j \ell(x) x^{-\alpha},\quad x\to\infty.
  \end{align}

\begin{lemma}\label{lemma:tailconvp}
Let $Y_1, \dots, Y_p$ be real-valued independent regularly varying random variables with comparable tails. Then,
  $$\P(S_p > x) \sim \sum_{h = 1}^{p} \P(Y_h > x), \quad x\to\infty.$$
\end{lemma}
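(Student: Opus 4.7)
The plan is to prove this standard max-sum-equivalence statement by induction on $p$, reducing the general case to the base case $p=2$.

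For the base case $p=2$, I would establish matching upper and lower bounds on $\P(Y_1+Y_2>x)/[\ell(x)x^{-\alpha}]$ that both tend to $c_1+c_2$. For the upper bound, I would use the inclusion
\[
\{Y_1+Y_2>x\} \;\subseteq\; \{Y_1>(1-\delta)x\} \cup \{Y_2>(1-\delta)x\} \cup \{Y_1>\delta x,\,Y_2>\delta x\},
\]
valid for any $\delta\in(0,1/2)$: if both $Y_j\le(1-\delta)x$ and at least one of them is $\le\delta x$, then $Y_1+Y_2\le x$. Union bound, independence, and regular variation then give
\[
\P(Y_1+Y_2>x) \;\le\; \P(Y_1>(1-\delta)x)+\P(Y_2>(1-\delta)x)+\P(Y_1>\delta x)\P(Y_2>\delta x),
\]
where the cross term is of order $\ell(x)^2 x^{-2\alpha}=o(\ell(x)x^{-\alpha})$ and the first two terms satisfy $\P(Y_j>(1-\delta)x)\sim c_j(1-\delta)^{-\alpha}\ell(x)x^{-\alpha}$. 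Dividing by $\ell(x)x^{-\alpha}$ and letting $\delta\downarrow 0$ yields $\limsup \le c_1+c_2$.

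For the lower bound, I would use
\[
\{Y_1+Y_2>x\} \;\supseteq\; \bigl\{Y_1>(1+\delta)x,\,Y_2>-\delta x\bigr\} \cup \bigl\{Y_1>-\delta x,\,Y_2>(1+\delta)x\bigr\},
\]
and apply inclusion-exclusion. The intersection of the two events is $\{Y_1>(1+\delta)x,Y_2>(1+\delta)x\}$, whose probability is $o(\ell(x)x^{-\alpha})$ by independence. Since $\P(Y_j>-\delta x)\to 1$ and $\P(Y_j>(1+\delta)x)\sim c_j(1+\delta)^{-\alpha}\ell(x)x^{-\alpha}$, dividing through and letting $\delta\downarrow 0$ gives $\liminf\ge c_1+c_2$. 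This settles the base case.

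For the inductive step, assuming the claim for $p-1$, set $S'=Y_2+\cdots+Y_p$. By the inductive hypothesis, $\P(S'>x)\sim(c_2+\cdots+c_p)\ell(x)x^{-\alpha}$, so $S'$ is itself regularly varying with the same index $\alpha$ and the same slowly varying component $\ell$, and it is independent of $Y_1$. Applying the base case $p=2$ to the pair $(Y_1,S')$ (whose tails are comparable in the required sense) gives
\[
\P(Y_1+S'>x) \;\sim\; \P(Y_1>x)+\P(S'>x) \;\sim\; \sum_{h=1}^{p}\P(Y_h>x),
\]
which completes the induction.

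The main delicate point is the $p=2$ case, specifically finding a decomposition of the event $\{Y_1+Y_2>x\}$ sharp enough that the ``both $Y_j$ moderately large'' cross term $\P(Y_1>\delta x)\P(Y_2>\delta x)$ is the only residual, which decays at order $x^{-2\alpha}$ under independence and is negligible. The other subtle but routine ingredient is the use of regular variation (via Potter bounds) to justify $\P(Y_j>(1\pm\delta)x)\sim(1\pm\delta)^{-\alpha}\P(Y_j>x)$ uniformly enough to pass to the limit, and checking that the slowly varying function $\ell$ is common to $S'$ in the inductive step.
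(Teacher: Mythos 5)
Your proof is correct and follows the same route the paper takes: the paper handles $p=2$ by citing Feller (1971, p.~278) and notes that the general case follows by induction, and your argument simply writes out that Feller-style $\delta$-decomposition for the base case and carries out the induction explicitly (correctly noting that $S'=Y_2+\cdots+Y_p$ inherits regular variation with the same index and slowly varying function, so the two-variable case applies to the pair $(Y_1,S')$). One small remark: since $\delta$ is fixed before letting $x\to\infty$, the asymptotics $\P(Y_j>(1\pm\delta)x)\sim(1\pm\delta)^{-\alpha}\P(Y_j>x)$ follow directly from the definition of regular variation, so Potter bounds are not actually needed.
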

The proof
for $p=2$
of Lemma~\ref{lemma:tailconvp} can be found in \citet[][p.~278]{feller1971} and can be extended
to a general $p$
using induction.

An important property of regularly varying random variable is the max-sum-equivalence presented in the following lemma \citep[][Sec.\ 1.3.1]{embrechts2013modelling}.
\begin{lemma}\label{prop:maxsum}
Let $Y_1, \dots, Y_p$ be real-valued, independent regularly varying random variables with comparable tails. Then, as $x\to\infty$,
    \begin{equation*}
    \P(M_p > x) \sim \P(S_p > x).
    \end{equation*}
\end{lemma}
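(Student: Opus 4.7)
The plan is to prove the max-sum-equivalence by sandwiching both sides between the same first-order asymptotic expression, namely $\sum_{h=1}^{p}\P(Y_h > x)$. The right-hand side of the claim is already controlled by Lemma~\ref{lemma:tailconvp}, which gives $\P(Y_1 + \dots + Y_p > x) \sim \sum_{h=1}^{p}\P(Y_h > x)$ as $x\to\infty$. So the real task is to show the analogous statement for the maximum, i.e.\ $\P(\max_h Y_h > x) \sim \sum_{h=1}^{p}\P(Y_h > x)$, and then chain the two equivalences.

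For the max side, I would write the event $\{\max(Y_1,\dots,Y_p) > x\} = \bigcup_{h=1}^{p}\{Y_h > x\}$ and apply inclusion-exclusion together with independence:
\[
\P\bigl(\max_h Y_h > x\bigr) = \sum_{h=1}^{p}\P(Y_h > x) - \sum_{h<k}\P(Y_h > x)\P(Y_k > x) + \dots + (-1)^{p+1}\prod_{h=1}^{p}\P(Y_h > x).
\]
Alternatively (and slightly cleaner), use the complement and independence to get
\[
\P\bigl(\max_h Y_h > x\bigr) = 1 - \prod_{h=1}^{p}\bigl(1 - \P(Y_h > x)\bigr),
\]
and then expand the product.

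The key observation is that, by the comparable-tails assumption \eqref{eq:comp_tail1}, each $\P(Y_h > x) = c_h \ell(x) x^{-\alpha}(1+o(1))$, which tends to $0$. The leading term of the expansion is $\sum_{h=1}^{p}\P(Y_h > x) \sim (\sum_h c_h)\ell(x)x^{-\alpha}$, and every higher-order term is a product of at least two factors of order $\ell(x)x^{-\alpha}$, hence $O(\ell(x)^2 x^{-2\alpha}) = o(\ell(x)x^{-\alpha})$. Dividing through by $\sum_h \P(Y_h > x)$ therefore yields $\P(\max_h Y_h > x)/\sum_h \P(Y_h > x) \to 1$.

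Combining the two steps, $\P(\max_h Y_h > x) \sim \sum_{h=1}^{p}\P(Y_h > x) \sim \P(Y_1 + \dots + Y_p > x)$, which is the claim. There is no genuine obstacle here: the only thing to be careful about is that $\ell(x)x^{-\alpha}\to 0$ as $x\to\infty$ (which is automatic since $\alpha > 0$ and $\ell$ is slowly varying), so that products of two or more marginal tail probabilities are indeed negligible relative to the leading sum. A small bookkeeping remark: since the $Y_h$ are real-valued rather than nonnegative, one should note that negative values of individual $Y_h$ cannot make $\max_h Y_h$ or the sum exceed $x$ on their own, and the comparable-upper-tail assumption is exactly what is being used; the lower tail plays no role in either asymptotic.
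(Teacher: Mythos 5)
Your proposal is correct and follows essentially the same route as the paper's proof: write the max event as a union (equivalently, via the complement and independence as $1-\prod_h(1-\P(Y_h>x))$), expand by inclusion--exclusion, observe that by independence and the comparable-tails assumption all cross terms are products of at least two factors of order $\ell(x)x^{-\alpha}$ and hence negligible, and then invoke Lemma~\ref{lemma:tailconvp} to identify the leading sum $\sum_h\P(Y_h>x)$ with $\P(Y_1+\dots+Y_p>x)$. No gaps.
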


\begin{proof}
We can write, as $x\to\infty$,
  \begin{align*}
  \begin{split}
  \P\{M_p > x\}
  = &\ 1 - \P\left\{M_p \leq x\right\} = 1 - P\left(Y_1 \leq x, \dots,  Y_p \leq x\right)\\
  = &\  \sum_{h = 1}^{p} \P(Y_h > x) - \sum_{1\leq h < h' \leq p} \P(Y_h > x, Y_{h'} > x) + \Delta(x),
  \end{split}
  \end{align*}
where $\Delta(x)$ contains terms of higher order interactions of the sets $\{Y_j > x\}$, $j = 1, \dots, p$.
Because of independence, the probability
  \begin{align*}
  \P(Y_j > x, Y_{j'} > x) = o\{\P(S_p > x)\}.
  \end{align*}
Similarly, this holds for the terms in $\Delta(x)$. Recalling, by Lemma~\ref{lemma:tailconvp}, that $\P(Y_1 > x) + \dots + \P(Y_p > x) \sim \P(S_p > x)$, the result follows.
\end{proof}

\begin{lemma}\label{lemma:regvar1}
Let $Y_1, \dots, Y_p$ be real-valued independent regularly varying 
random variables with comparable tails.
Then, for $j = 1, \dots, p$,
  $$\P(Y_j > x, S_p > x) \sim \P(Y_j > x), \quad x\to\infty.$$
\end{lemma}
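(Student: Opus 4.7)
The plan is to establish $\P(Y_j > x,\ Y_1 + \dots + Y_p > x)/\P(Y_j > x) \to 1$ directly. The upper bound $\leq 1$ is immediate from the set inclusion $\{Y_j > x,\ \mathrm{sum} > x\} \subseteq \{Y_j > x\}$, so all the work lies in the matching lower bound.

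Setting $S_j := \sum_{i \neq j} Y_i$, which by hypothesis is independent of $Y_j$, I would condition on $Y_j$ to write
$$\P(Y_j > x,\ Y_j + S_j > x) = \int_x^{\infty} \P(S_j > x - t)\, dF_j(t).$$
For any fixed $M > 0$, restricting the integration domain to $t \geq x + M$ and using the monotonicity of $t \mapsto \P(S_j > x - t)$ yields
$$\P(Y_j > x,\ Y_j + S_j > x) \;\geq\; \P(S_j > -M)\, \P(Y_j > x + M).$$
Dividing by $\P(Y_j > x)$ then gives
$$\liminf_{x\to\infty} \frac{\P(Y_j > x,\ Y_1 + \dots + Y_p > x)}{\P(Y_j > x)} \;\geq\; \P(S_j > -M)\cdot \liminf_{x\to\infty}\frac{\P(Y_j > x+M)}{\P(Y_j > x)}.$$
Letting finally $M \to \infty$ and using $\P(S_j > -M) \to 1$ (since $S_j$ is a real-valued, hence almost surely finite, random variable) closes the lower bound, provided the last ratio tends to $1$.

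The only step requiring real justification is therefore $\P(Y_j > x + M)/\P(Y_j > x) \to 1$ for each fixed $M$. Since $\P(Y_j > x) \sim c_j \ell(x) x^{-\alpha}$, the ratio is asymptotically equal to $\bigl(\ell(x+M)/\ell(x)\bigr)\cdot\bigl((x+M)/x\bigr)^{-\alpha}$; the second factor trivially tends to $1$, and the first factor tends to $1$ by the Uniform Convergence Theorem for slowly varying functions applied to $c_x := 1 + M/x \to 1$, which for $x$ large lies in a compact subset of $(0,\infty)$. This is the main obstacle in the proof; notably, neither Lemma~\ref{lemma:tailconvp} nor Lemma~\ref{prop:maxsum} enters the argument, and only independence of the $Y_i$ and the single-margin regular variation of $F_j$ are used.
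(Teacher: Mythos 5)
Your proof is correct, but it follows a genuinely different route from the paper's. The paper works with the set inclusion
\begin{align*}
\Big\{Y_j > x + (p-1)\delta x\Big\} \cap \bigcap_{h\neq j}\{Y_h > -\delta x\} \subset \{Y_j > x,\ Y_1+\dots+Y_p > x\},
\end{align*}
controls the correction term $\P\big(Y_j > x+(p-1)\delta x,\ \bigcup_{h\neq j}\{Y_h\le -\delta x\}\big)$ via independence as $o\{\P(Y_j>x)\}$, and only then invokes regular variation through the \emph{multiplicative} limit $\P\{Y_j>(1+(p-1)\delta)x\}/\P(Y_j>x)\to(1+(p-1)\delta)^{-\alpha}$, sending $\delta\downarrow 0$ at the very end. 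You instead condition on $Y_j$, exploit the independence of $Y_j$ and $S_j=\sum_{i\neq j}Y_i$ to factor the lower bound as $\P(S_j>-M)\,\P(Y_j>x+M)$, and use an \emph{additive} shift, which only requires the long-tail property $\P(Y_j>x+M)\sim\P(Y_j>x)$ (correctly deduced from regular variation via the Uniform Convergence Theorem) together with almost-sure finiteness of $S_j$. What your version buys is economy of hypotheses: you never use the comparable-tails assumption, nor indeed any tail condition on the $Y_i$ with $i\neq j$, so the statement you actually prove holds for any long-tailed $Y_j$ and arbitrary independent real-valued companions. What the paper's version buys is uniformity with the surrounding toolkit: the same $\delta x$-perturbation and inclusion--exclusion bookkeeping recur in Lemma~\ref{lemma:regvar2} and in the proof of Lemma~\ref{prop:p01}, so the argument there is deliberately phrased in the multiplicative, regular-variation-native language. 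The only cosmetic point in your write-up is the distinction between $\P(Y_j\ge x+M)$ and $\P(Y_j>x+M)$ when you restrict the integration domain, which is harmless since the inequality goes the right way.
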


\begin{proof}
For any $x > 0$ and $\delta \in (0, 1/(2p - 2))$, we have, for $j = 1, \dots, p$,
  \begin{align*}
  \Bigg\{Y_j > x + (p - 1)\delta x, \bigcap_{h\neq j} \{Y_h > -\delta x\}\Bigg\}
  \subset  \{Y_j > x, S_p > x\}\subset \{Y_j > x\}.
  \end{align*}
Considering the upper bound, it holds, $j = 1, \dots, p$,
  \begin{align*}
  \P(Y_j > x, S_p > x) \leq \P(Y_j > x).
  \end{align*}
Regarding the lower bound, we get, as $x\to\infty$, $j = 1, \dots, p$,
  \begin{align*}
  \begin{split}
  \P(Y_j > x, S_p > x) 
  \geq &\ \P\left(Y_j > x + (p - 1)\delta x, 
  \bigcap_{h\neq j} \{Y_h > -\delta x\}\right)\\
  = &\ \P\left(Y_j > x + (p - 1)\delta x\right)
  \prod_{h\neq j}\P\left(Y_h > -\delta x\right).
  \end{split}
  \end{align*}
Dividing everything by $\P(Y_j > x)$, and letting first $x\to\infty$ and then
$\delta\downarrow 0$, we get the desired result.
\end{proof}

\begin{lemma}\label{lemma:regvar2}
Let $Y_1, \dots, Y_p$ be real-valued independent regularly varying random variables with comparable tails. Then, as $x\to\infty$,
  $$\P\{S_p > x, M_p \leq x\} = o\{\P(S_p > x)\}.$$
\end{lemma}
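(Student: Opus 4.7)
The plan is to decompose
\begin{align*}
\P(Y_1+\dots+Y_p > x) = \P(Y_1+\dots+Y_p > x,\ \max_j Y_j > x) + \P(Y_1+\dots+Y_p > x,\ \max_j Y_j \leq x),
\end{align*}
and show that the first summand is asymptotically equivalent to $\P(Y_1+\dots+Y_p > x)$. This immediately forces the second summand to be $o\{\P(Y_1+\dots+Y_p > x)\}$, which is precisely the claim.

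To show $\P\{\sum_j Y_j > x,\ \max_j Y_j > x\} \sim \P(\sum_j Y_j > x)$, the upper bound is trivial and I would obtain a matching lower bound via inclusion--exclusion on $\{\max_j Y_j > x\} = \bigcup_{j=1}^p \{Y_j > x\}$. Concretely,
\begin{align*}
\P\Bigl\{\textstyle\sum_j Y_j > x,\ \max_j Y_j > x\Bigr\} \geq \sum_{j=1}^p \P\Bigl(Y_j > x,\ \textstyle\sum_j Y_j > x\Bigr) - \sum_{1\leq j<k \leq p} \P(Y_j > x,\ Y_k > x).
\end{align*}
By Lemma~\ref{lemma:regvar1}, each term in the first sum satisfies $\P(Y_j > x,\ \sum_j Y_j > x) \sim \P(Y_j > x)$, while by Lemma~\ref{lemma:tailconvp} we have $\sum_{j=1}^p \P(Y_j > x) \sim \P(\sum_j Y_j > x)$. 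By independence, the pairwise intersections factor as $\P(Y_j > x,\ Y_k > x) = \P(Y_j > x)\P(Y_k > x)$, which is of order $\ell(x)^2 x^{-2\alpha}$ and therefore $o\{\P(\sum_j Y_j > x)\}$. Combining the two estimates yields
\begin{align*}
\P\Bigl\{\textstyle\sum_j Y_j > x,\ \max_j Y_j > x\Bigr\} \geq \bigl(1+o(1)\bigr)\P\Bigl(\textstyle\sum_j Y_j > x\Bigr),
\end{align*}
and with the trivial upper bound this gives the desired asymptotic equivalence.

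There is no serious obstacle here: the two preceding lemmas in Appendix~\ref{app:regvar} already encode the ``one big jump'' principle in the right form, so once the inclusion--exclusion step is carried out, the result falls out immediately. The only minor point to be careful about is that the $Y_j$ are real-valued (not necessarily nonnegative), but the argument above uses only independence, regular variation of the upper tails, and the two cited lemmas, all of which are stated in that generality.
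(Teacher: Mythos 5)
Your proposal is correct and takes essentially the same route as the paper's own proof: the same decomposition of $\P(\sum_j Y_j>x)$, inclusion--exclusion over $\{\max_j Y_j>x\}=\bigcup_j\{Y_j>x\}$, and Lemmas~\ref{lemma:regvar1} and~\ref{lemma:tailconvp} used in exactly the same roles. The only cosmetic difference is that you use the two-term Bonferroni lower bound together with the trivial upper bound, whereas the paper carries the full inclusion--exclusion expansion and bounds the higher-order remainder separately.
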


\begin{proof}
We first write $\P\{S_p > x, M_p \leq x\} = \P(S_p > x) - \P\{S_p > x, M_p > x\}.$
Let $I := \oneto{p}$. By definition of the maximum function 
and using the inclusion-exclusion principle, it follows that
  \begin{align*}
  \begin{split}
  \P\{S_p > x, M_p > x\}
  = &\ \P\left(S_p > x, \bigcup_{h\in I} \{Y_h > x\}\right)\\
  = &\ \sum_{h \in I} \P\left(S_p > x, Y_h > x\right)\\
   &\ - \sum_{1\leq h<h'\leq p}  \P\left(S_p > x, Y_h > x, Y_{h'} > x\right) + \Delta(x).
  \end{split}
  \end{align*}
Regarding the summands in first term, it holds by Lemma~\ref{lemma:regvar1}, $h\in I$,
  \begin{align*}
  \P\left(S_p > x, Y_h > x\right) \sim \P(Y_h > x),\quad x\to\infty.
  \end{align*}
The summands in the second term can be upper bounded by, $1 \leq h < h' \leq p$,
  \begin{align*}
  \P(Y_h > x, Y_{h'} > x) = o\{\P(S_p > x)\},\quad x\to\infty.
  \end{align*}
The same holds for $\Delta(x)$ which contains terms of higher order interactions of the sets $\{Y_j > x\}$, $j \in I$.
Putting everything together, we obtain
  \begin{align*}
  \begin{split}
  \P\{S_p > x, M_p \leq x\}
  \sim \P(S_p > x) - \sum_{h\in I} \P(Y_h > x) + o\{\P(S_p > x)\}
  = o\{\P(S_p > x)\},
  \end{split}
  \end{align*}
where in the last equality we used Lemma~\ref{lemma:tailconvp}.
\end{proof}

\section{Proofs}\label{app:proofs}
\subsection{Proof of Lemma~\ref{prop:p01}}\label{proof:p01}
\begin{proof}
Let $j, k \in V$ and $j\neq k$.
Recall that each variable $X_h$, $h\in V$, can be expressed as a weighted sum of the noise terms $\eps_1, \dots, \eps_p$ belonging to the ancestors of $X_h$, as shown in~\eqref{eq:scm-recursive}.
Therefore, we can write $X_j$ and $X_k$ as
  \begin{align*}
  \begin{split}
  X_j =  \sum_{h \in A_{jk}} \beta_{h\to j}\eps_h + \sum_{h \in A_{jk^{*}}} \beta_{h\to j}\eps_h,\\
  X_k = \sum_{h \in A_{jk}} \beta_{h\to k}\eps_h + \sum_{h \in A_{kj^{*}}} \beta_{h\to k}\eps_h,\\
  \end{split}
  \end{align*}
  where $A_{jk} = \An(j, G)\cap\An(k, G)$, $A_{jk^{*}} =  \An(j, G)\cap\An(k, G)^c$ and similarly for $A_{kj^{*}}$.
  We have
  \begin{align*}
  \E \left[F_k(X_k) \mathbf{1}\{ X_j > x \}\right] 
  = &\ \E \left[F_k(X_k) \mathbf{1}\left\{ X_j > x ,\bigcup_{h\in \An(j, G)} \left\{\beta_{h\to j}\eps_h > x\right\} \right\}\right] \\
  &\ +  \E \left[F_k(X_k) \mathbf{1}\left\{ X_j > x , \max_{h\in \An(j, G)} \left\{\beta_{h\to j}\eps_h\right\} \leq x\right\}\right].
  \end{align*}
The second summand can be bounded by
  \begin{align*}
  \P \left[  X_j > x , \max_{h\in \An(j, G)} \left\{\beta_{h\to j}\eps_h\right\} \leq x\right] = o\{\P(X_j > x)\},
  \end{align*}
by Lemma~\ref{lemma:regvar2}.
 For the first term, we use the inclusion-exclusion principle to write
 \begin{align*}
   \mathbf{1}\Bigg\{ X_j > x ,  &  \bigcup_{h\in \An(j, G)} \left\{\beta_{h\to j}\eps_h > x\right\} \Bigg\}
   = \sum_{h\in \An(j, G)} \mathbf{1}\left\{ X_j > x , \beta_{h\to j}\eps_h > x \right\} \\
   &\ - \sum_{h,h'\in \An(j, G), h<h'} \mathbf{1}\left\{ X_j > x , \beta_{h\to j}\eps_h > x, \beta_{h'\to j}\eps_{h'} > x \right\} + \Delta(x),
 \end{align*}
 where $\Delta(x)$ contains terms of higher order interactions of the sets $\{ \beta_{h\to j}\eps_h > x\}$, $h\in \An(j, G)$. The probability
 \begin{align*}
   &\ \P \left( X_j > x , \beta_{h\to j}\eps_h > x, \beta_{h'\to j}\eps_{h'} > x \right)\\
   \leq &\ \P\left(\beta_{h\to j}\eps_h > x, \beta_{h'\to j}\eps_{h'} > x \right)= o\{\P(X_j > x)\}.
 \end{align*}
The same holds for all finitely many terms in $\Delta(x)$. We further note
 that for all $h\in \An(j, G)$, by Lemma~\ref{lemma:regvar1},
  \begin{align*}
   \P(X_j > x , \beta_{h\to j}\eps_h > x) = \P(\beta_{h\to j}\eps_h > x) + o\{\P(X_j > x)\}.
 \end{align*}
 Putting everything together, we can rewrite
 \begin{align*}
   \E \left[F_k(X_k) \mathbf{1}\{ X_j > x \}\right] = & \sum_{h\in \An(j, G)} \E \left[F_k(X_k) \mathbf{1}\left\{ \beta_{h\to j}\eps_h > x \right\}\right] + o\{\P(X_j > x)\}\\
   = &\ \sum_{h \in A_{jk}} \E \left[F_k(X_k) \mathbf{1}\left\{ \beta_{h\to j}\eps_h > x \right\}\right]\\
   &\ +  \sum_{h \in A_{jk^{*}}} \E \left[F_k(X_k) \mathbf{1}\left\{ \beta_{h\to j}\eps_h > x \right\}\right] + o\{\P(X_j > x)\}.
 \end{align*}
 For $h \in A_{jk}$, let $c = \beta_{h\to j}/\beta_{h\to k}>0$, and note that
 for every $x > 0$,
  \begin{align*}
    \P\left( \beta_{h\to j}\eps_h > x \right) 
    \geq &\ \E \left[F_k(X_k) 
    \mathbf{1}\left\{\beta_{h\to j}\eps_h > x \right\}\right] \\
    \geq &\  \E \left[F_k(X_k) 
    \mathbf{1}\left\{ \beta_{h\to k}\eps_h > cx, X_k > cx\right\}\right] \\
    \geq &\ F_k(cx) \P \left(\beta_{h\to k}\eps_h > cx, X_k >  cx\right).
    \end{align*}
Therefore, using Lemma~\ref{lemma:regvar1} and that $F_k(cx) \to 1$ as $x\to \infty$, it follows that
  \begin{align*}
  \E \left[F_k(X_k) 
  \mathbf{1}\left\{\beta_{h\to j}\eps_h > x \right\}\right]
  \sim \P\left( \beta_{h\to j}\eps_h > x \right).
  \end{align*}
  On the other hand, for $h \in A_{jk^{*}}$, we have that $X_k$ and $\eps_h$ are independent,
  and therefore
  \begin{align*}
    \E \left[F_k(X_k) \mathbf{1}\left\{ \beta_{h\to j}\eps_h > x \right\}\right]
    = \frac12  \P\left( \beta_{h\to j}\eps_h > x \right),
    \quad x > 0.
  \end{align*}

  Consequently,
  \begin{align*}
    \G_{jk}
    = &\lim_{x\to \infty} \E \left[F_k(X_k) \mid  X_j > x \right]\\
    = &\  \lim_{x\to \infty} \sum_{h\in A_{jk}} \frac{\P\left(\beta_{h\to j}\eps_h > x \right)}{\P(X_j > x)} + \lim_{x\to \infty} \frac12 \sum_{h\in A_{jk^{*}}} \frac{\P\left(\beta_{h\to j}\eps_h > x \right)}{\P(X_j > x)}\\
    = &\ \frac{1}{2} + \frac{1}{2} \sum_{h\in A_{jk}}\lim_{x\to\infty}\frac{\P(\beta_{h\to j}\eps_h > x)}{\P(X_j > x)}
    = \frac{1}{2} + \frac{1}{2} \lim_{x \to \infty} \frac{\sum_{h\in A_{jk}} \beta_{h\to j}^{\alpha}\P(\eps_h > x)}{\sum_{h\in\An(j, G)} \beta_{h\to j}^{\alpha}\P(\eps_h > x)}\\
    = &\ \frac{1}{2} + \frac{1}{2} \frac{\sum_{h\in A_{jk}} \beta_{h\to j}^{\alpha}}{\sum_{h\in\An(j, G)} \beta_{h\to j}^{\alpha}},
 \end{align*}
where the second last equality follows from the fact that $\beta_{h\to j} \eps_h$, $h \in A_{jk}$, are independent regularly varying random variables,
see Lemma~\ref{lemma:tailconvp},
and
the last equality holds because we assume that the noise variables $\eps_j$, $j\in V$, have comparable tails; see Section~\ref{subsec:setup}.
\end{proof}

\subsection{Proof of Theorem~\ref{thm:6cases}}\label{proof:6cases}
\begin{proof}
Recall that
$\an(j, G) = \An(j, G)\setminus\{j\}$
and
define $A_{12} = \An(1, G) \cap \An(2, G)$.
\begin{description}
\item[$(a).$]
Suppose $X_1$ causes $X_2$, i.e., $1 \in \an(2, G)$. This implies that $\An(1, G) \subset \An(2, G)$ and thus $A_{12} = \An(1, G) \subset \An(2, G)$. By applying Lemma~\ref{prop:p01} we obtain $\G_{12} = 1$ and $\G_{21} \in (1/2, 1)$.

\setlength{\parindent}{3ex}
Conversely, suppose that $\G_{12} = 1$ and $\G_{21} \in (1/2, 1)$. If $\G_{12} = 1$ then the numerator and denominator of the second term in Lemma~\ref{prop:p01} must be equal and strictly positive. This implies that $A_{12} = \An(1, G) \cap \An(2, G) = \An(1, G) \neq \varnothing$. It follows that $\An(1, G) \subseteq \An(2, G)$, i.e., $1 \in \An(2, G)$. At the same time, if $\G_{21} \in (1/2, 1)$, then the numerator of the second term in Lemma~\ref{prop:p01} must be positive and smaller than the denominator. This means that $A_{12} \neq \varnothing$ and $A_{12} = \An(1, G) \cap \An(2, G) \subset \An(2, G)$. Thus, it follows that $\An(1, G) \subset \An(2, G)$. Therefore, $1 \in \an(2, G)$, that is, $X_1$ causes $X_2$.

\item[$(b).$] By symmetry, as case ($a$).

\item[$(c).$] Suppose there is no causal link between $X_1$ and $X_2$, i.e., $\An(1, G) \cap \An(2, G) = \varnothing$. Then, $A_{12} = \An(1, G) \cap \An(2, G) = \varnothing$ and by Lemma~\ref{prop:p01}, we obtain $\G_{12} = \G_{21} = 1/2$.

\setlength{\parindent}{3ex}
Suppose now that $\G_{12} = \G_{21} = 1/2$. This means that the numerator of the second term in Lemma~\ref{prop:p01} must be equal to zero. This implies that $A_{12} = \varnothing$ and therefore $\An(1, G) \cap \An(2, G) = \varnothing$, that is, there is no causal link between $X_1$ and $X_2$.

\item[$(d).$]
Suppose there is a node $j\notin\{1, 2\}$ such that $X_j$ is a common cause of $X_1$ and $X_2$, i.e., $j\in \an(1, G)$ and $j\in\an(2, G)$. Then $A_{12} = \An(1, G) \cap \An(2, G)$ is non-empty. Since $\An(1, G) \neq \An(2, G)$, it follows that $A_{12} \subset \An(i, G)$, for $i = 1, 2$. Thus, according to Lemma~\ref{prop:p01} we have $\G_{12}, \G_{21}\in (1/2, 1)$.

\setlength{\parindent}{3ex}
Conversely, suppose that $\G_{12}, \G_{21}\in (1/2, 1)$. If $\G_{12} \in (1/2, 1)$, then the numerator of the second term in Lemma~\ref{prop:p01} must be positive and smaller than the denominator. This implies that $A_{12} \neq \varnothing$ and $A_{12} = \An(1, G) \cap \An(2, G) \subset \An(1, G)$. Similarly, if $\G_{21} \in (1/2, 1)$, it follows that $A_{21} = A_{12} = \An(1, G) \cap \An(2, G) \subset \An(2, G)$. This implies that $\An(1, G) \neq \An(2, G)$ and they are not disjoint. Therefore, there exists a node $j\notin\{1, 2\}$ such that $j\in \an(1, G)$ and $j\in\an(2, G)$, i.e., $X_j$ is a common cause of $X_1$ and $X_2$.

\end{description}
\end{proof}

\subsection{Proof of Theorem~\ref{thm:Gamma_consistency}}\label{proof:Gamma_consistency}
\begin{proof}
  For simplicity we will write $k = k_n$ in the sequel. We only show the result for $\widehat \Gamma_{21}$, the proof for $\widehat \Gamma_{12}$ follows by symmetry. Recall that each variable $X_h$, $h\in V$, can be expressed as a weighted sum of the noise terms $\eps_1, \dots, \eps_p$ belonging to the ancestors of $X_h$, as shown in~\eqref{eq:scm-recursive}.
Therefore, we can write $X_1$ and $X_2$ as follows,
  \begin{align*}
  \begin{split}
  X_1 = &\ \sum_{h \in A} \beta_{h\to 1}\eps_h + \sum_{h \in A_{1}} \beta_{h\to 1}\eps_h,\\
  X_2 = &\ \sum_{h \in A} \beta_{h\to 2}\eps_h + \sum_{h \in A_{2}} \beta_{h\to 2}\eps_h,\\
  \end{split}
  \end{align*}
where $A = A_{12} = \An(1, G) \cap \An(2, G)$ and  $A_j = A_{jk^{*}} = \An(j, G) \setminus A$, for $j, k = 1, 2$. Thus, the estimator $\widehat \Gamma_{21}$ can be rewritten as
    \begin{align}\label{2terms}
    \begin{split}
    \widehat\G_{21}
    = &\ \frac{1}{k} \sum_{i = 1}^{n} \widehat F_1(X_{i1})\ \mathbf{1}\left\{X_{i2} > X_{(n-k), 2}, \bigcup_{h\in\An(2, G)} \left\{\beta_{h\to 2}\eps_{ih} > X_{(n-k), 2}\right\}\right\}\\
    &\ + \frac{1}{k} \sum_{i = 1}^{n} \widehat F_1(X_{i1})\ \mathbf{1}\left\{X_{i2} > X_{(n-k), 2}, \max_{h\in\An(2, G)} \beta_{h\to 2}\eps_{ih} \leq X_{(n-k), 2}\right\}\\
    = &\ S_{1, n} + S_{2, n}.
    \end{split}
    \end{align}
    Define the theoretical quantile function as
    \begin{align}\label{eq:U}
      U(x) = F^\leftarrow(1 - 1/x),\quad x > 1.
    \end{align}
    Recall that $X_{(n-k),2} = \widehat F_2^{\gets}(1 - k /n)$ is the $(n-k)$-th order statistic of $X_{12},\dots ,X_{n2}$, and as such an approximation to the theoretical quantile $U_2(n/k)$.
    Under the von Mises' condition \eqref{eq:cdfcondition} this convergence of the intermediate order statistics can be made rigorous \citep[][Theorem 2.2.1]{deh2006a}, namely
  \begin{align}\label{AN_intermediate}
    \sqrt{k} \left(\frac{X_{(n-k),2}}{ U_2\left(\frac{n}{k}\right)} - 1\right) \stackrel{d}{\longrightarrow} N(0,1/\alpha^2), \quad n\to\infty,
  \end{align}
  where $\alpha > 0$ is the tail index of the variables in the SCM. This implies in particular that $X_{(n-k),2} \to\infty$, $n \to \infty$. For any $\delta_1 > 0$ define the event
    \begin{align}\label{eq:setA}
           B_{n\delta_1} = \left\{\left| {X_{(n-k),2}}/{U_2\left(\frac{n}{k}\right)} -1 \right| < \delta_1 \right\},
    \end{align}
      and note that by \eqref{AN_intermediate} it holds that $\P(B_{n\delta_1}) \to 1$ as $n\to \infty$.

  Since the noise terms $\eps_h$ are independent regularly varying random variables with comparable tails, then, by Lemma~\ref{lemma:tailconvp}, we have
    \begin{align*}
    \P(X_2 > x) \sim \left(\sum_{h\in \An(2, G)}\beta_{h\to 2}^{\alpha}\right) \ell(x) x^{-\alpha}  =: c_2 \ell(x) x^{-\alpha}.
    \end{align*}
  Furthermore, from \citet[][Prop.~0.8]{res2008} it holds, for all $h\in \An(2, G)$ and $x > 0$,
    \begin{align}\label{U}
      \P\{\eps_h > x U_2(t)\} \sim x^{-\alpha}(c_2 t)^{-1}, \quad t\to \infty,
    \end{align}
    where $U_2$ is defined as in~\eqref{eq:U} for the distribution function $F_2$.

\label{pg:second-term}
  We treat the two terms in \eqref{2terms} separately. We can upper bound the absolute value of the second term, for any $\tau,\delta_1 > 0$, by
    \begin{align*}
    \begin{split}
    \P(|S_{2, n}|>\tau)
    \leq &\ \P(B_{n\delta_1}^{c})
    + \P\bigg(\frac{1}{k}\sum_{i = 1}^{n} \mathbf{1}\Big\{X_{i2} > U_2\left(\frac{n}{k}\right)(1 - \delta_1), \\
    &\ \max_{h\in\An(2, G)} \beta_{h\to 2} \eps_{ih} \leq U_2\left(\frac{n}{k}\right)(1 + \delta_1)\Big\} > \tau\bigg)\\
    \end{split}
    \end{align*}
   where $\P(B_{n\delta_1}^{c})\to 0$ as $n\to \infty$ by \eqref{AN_intermediate}. The limit superior of the second term, as $n\to\infty$, can be bounded with Markov's inequality by
    \begin{align*}
    \lim_{n\to \infty}
    &\ \frac{n}{\tau k} P\left\{X_2 > (1-\delta_1)U_2\left(\frac{n}{k}\right), \max_{h\in\An(2, G)} \beta_{h\to 2} \eps_{h} \leq  (1+\delta_1)U_2\left(\frac{n}{k}\right) \right\} \\
    \leq &\ \lim_{n\to \infty}  \frac{1}{\tau} \left(\P\left\{X_2 > U_2\left(\frac{n}{k}\right)\right\}\right)^{-1} \Bigg[\P\left\{X_2 > (1-\delta_1)U_2\left(\frac{n}{k}\right)  \right\}\\
    &\ - \P\left\{X_2 > (1 + \delta_1)U_2\left(\frac{n}{k}\right), \max_{h\in\An(2, G)} \beta_{h\to 2} \eps_{h} >  (1+\delta_1)U_2\left(\frac{n}{k}\right) \right\} \Bigg]\\
    = &\ \frac{ (1-\delta_1)^{-\alpha} - (1+\delta_1)^{-\alpha}}{\tau}.
    \end{align*}
  The last equality holds because $X_2$ is regularly varying with index $\alpha$ and because, by Lemma~\ref{lemma:regvar2}, we have
    \begin{align*}
    \P\left(X_2 > x, \max_{h\in\An(2, G)} \beta_{h\to 2}\eps_h > x\right) \sim \P(X_2 > x),\quad x\to\infty.
    \end{align*}
  Since $\delta_1, \tau > 0$ are arbitrary, it follows that $S_{2, n} = o_P(1)$.

  For the first term, we use the inclusion-exclusion principle to write
    \begin{align*}
    S_{1, n}
    = &\ \sum_{h\in \An(2, G)} \frac{1}{k}\sum_{i=1}^{n} \widehat F_1(X_{i1}) \mathbf{1}\left\{ X_{i2} > X_{(n-k), 2} , \beta_{h\to 2}\eps_{ih} > X_{(n-k), 2} \right\}\\
    &\ - \sum_{h, h' \in\An(2, G),h<h'} \frac{1}{k}\sum_{i=1}^{n} \widehat F_1(X_{i1}) \mathbf{1}\big\{ X_{i2} > X_{(n-k), 2} , \beta_{h\to 2}\eps_{ih} > X_{(n-k), 2},\\
    &\ \beta_{h'\to 2}\eps_{ih'} > X_{(n-k), 2} \big\} + \frac{1}{k}\sum_{i=1}^{n} \widehat F_1(X_{i1}) \Delta_{i}\left(X_{(n-k), 2}\right)\\
    = &\ T_{1, n} + T_{2, n} + T_{3, n},
    \end{align*}
    where $\Delta_{i}(X_{(n-k), 2})$ contains terms of higher order interactions of the sets $\{ \beta_{h\to 2}\eps_{ih} > X_{(n-k), 2}\}$, $h\in \An(2, G)$, $i = 1, \dots, n$. First, we show that the terms $T_{2, n}$ and $T_{3, n}$ are $o_P(1)$. Considering $T_{2, n}$, for each $h, h' \in\An(2, G)$, $h < h'$, define
    \begin{align*}
    T_{2, n}^{(h, h')}
    = &\ \frac{1}{k}\sum_{i=1}^{n} \widehat F_1(X_{i1}) \mathbf{1}\big\{ X_{i2} > X_{(n-k), 2} , \beta_{h\to 2}\eps_{ih} > X_{(n-k), 2},\\
    &\ \beta_{h'\to 2}\eps_{ih'} > X_{(n-k), 2} \big\}.
    \end{align*}
  We can upper bound its absolute value, for any $\tau,\delta_1 > 0$, by
    \begin{align*}
    \P(&|T_{2, n}^{(h, h')}| > \tau)
    \leq \P(B_{n\delta_1}^{c})\\
    &\ + \P\bigg(\frac{1}{k}\sum_{i=1}^{n} \mathbf{1}\big\{\beta_{h\to 2}\eps_{ih} > (1-\delta_1)U_2\left(\frac{n}{k}\right), \beta_{h'\to 2}\eps_{ih'} > (1-\delta_1)U_2\left(\frac{n}{k}\right) \big\}\bigg).
    \end{align*}
  By Markov's inequality,
    \begin{align*}
    \limsup_{n\to\infty} &\ \P(|T_{2, n}^{(h, h')}| > \tau)
    \leq \lim_{n\to\infty} \P(B_{n\delta_1}^{c})\\
    &\ + \lim_{n\to\infty}\frac{n}{\tau k} \P\bigg\{\beta_{h \to 2}\eps_h > (1-\delta_1)U_2\left(\frac{n}{k}\right)\bigg\} \P\bigg\{\beta_{h' \to 2}\eps_{h'} > (1-\delta_1)U_2\left(\frac{n}{k}\right)\bigg\}\\
    = &\ \lim_{n\to\infty} \frac{k(1 - \delta_1)^{-2\alpha}p_{2h}p_{2h'}}{n\tau} = 0,
    \end{align*}
  where in the last line we used property~\eqref{U} and the fact
  that $k / n \to 0$ 
  as $n\to\infty$, and $p_{h2}, p_{2h'}$ are defined as
    \begin{align}\label{eq:constants}
    \quad p_{2h} =\frac{\beta_{h\to 2}^{\alpha}}{c_2}, \quad h\in\An(2, G).
    \end{align}
  Since $\delta_1, \tau$ are arbitrary, putting together the finitely many terms $h < h'$ where $h, h'\in\An(2, G)$, it follows that $T_{2, n} \stackrel{P}{\longrightarrow} 0$. Using a similar argument as the one for $T_{2, n}$, one can show that $T_{3, n} \stackrel{P}{\longrightarrow} 0$.

  We want to show that $T_{1, n} \stackrel{P}{\longrightarrow} \G_{21}$. Rewrite
    \begin{align*}
    T_{1, n}
    = &\ \sum_{h\in A} \frac{1}{k}\sum_{i=1}^{n} \widehat F_1(X_{i1}) \mathbf{1}\left\{\beta_{h\to 2}\eps_{ih} > X_{(n-k), 2} \right\}\\
    &\ + \sum_{h\in A_2} \frac{1}{k}\sum_{i=1}^{n} \widehat F_1(X_{i1}) \mathbf{1}\left\{\beta_{h\to 2}\eps_{ih} > X_{(n-k), 2} \right\}\\
    &\ - \sum_{h\in \An(2, G)} \frac{1}{k}\sum_{i=1}^{n} \widehat F_1(X_{i1}) \mathbf{1}\left\{\beta_{h\to 2}\eps_{ih} > X_{(n-k), 2},  X_{i2} \leq X_{(n-k), 2} \right\}\\
    = &\ U_{1, n} + U_{2, n} + U_{3, n}.
    \end{align*}
    Using an argument similar to the one for $S_{2, n}$ on page~\pageref{pg:second-term}, 
    one can show that $U_{3, n} \stackrel{P}{\longrightarrow} 0$.

  Regarding $U_{1, n}$, for each $h\in A$, define
    \begin{align*}
    U_{1, n}^{(h)} = \frac{1}{k}\sum_{i=1}^{n} \widehat F_1(X_{i1}) \mathbf{1}\left\{\beta_{h\to 2}\eps_{ih} > X_{(n-k), 2} \right\}.
    \end{align*}
  Note that, for $h\in A$, we have that both $\beta_{h \to 2} > 0$ and $\beta_{h\to 1} > 0$, therefore we can bound
    \begin{align}\label{eq:Un-bound}
    \widehat F_1\left(cX_{(n-k), 2}\right) V_n^{(h)} - W_n^{(h)}\leq U_{1, n}^{(h)} \leq V_n^{(h)},
    \end{align}
  where $c = \beta_{h\to 1}/\beta_{h\to 2} > 0$ and
    \begin{align}\label{eq:Un-bound-def}
    \begin{split}
    V_n = &\ V_n^{(h)} = \frac{n}{k}\left[1 - \widehat F_{\eps_h}\left(\frac{X_{(n-k), 2}}{\beta_{h\to 2}}\right)\right],\\
    W_n = &\ W_n^{(h)} = \frac{1}{k}\sum_{i=1}^{n} \mathbf{1}\left\{\beta_{h\to 1}\eps_{ih} > cX_{(n-k), 2}, X_{i1}\leq cX_{(n-k), 2}\right\}.
    \end{split}
    \end{align}
    We show first that $V_n$ converges in probability to $p_{2h}$, defined in~\eqref{eq:constants}. This is motivated by the fact that $V_n$ is the empirical version of
    \begin{align*}
    t\P\left\{\beta_{h\to 2}\eps_h >  U_2(t)\right\} \to \beta_{h\to 2}/c_2 = p_{2h},\quad t\to\infty,
    \end{align*}
  where the limit follows from property~\eqref{U}. We will study the asymptotic properties of $V_n$. For $x > 0$ define
    \begin{align*}
    v_n(x) = \frac{n}{k}\left[1 - \widehat F_{\eps_h}\left(\frac{x U_2\left(\frac{n}{k}\right)}{\beta_{h\to 2}}\right)\right] = \frac{1}{k}\sum_{i=1}^{n}\mathbf{1}\left\{\beta_{h\to 2}\eps_{ih} > x U_2\left(\frac{n}{k}\right)\right\},
    \end{align*}
  which is a nonincreasing function of $x$. Observe that on the set $B_{n\delta_1}$ defined in~\eqref{eq:setA}, we may bound the random variable $V_n$ by
    \begin{align}\label{eq:Vn-bound}
    v_n(1 + \delta_1) \leq V_n = v_n\left(\frac{X_{(n-k), 2}}{U_2\left(\frac{n}{k}\right)}\right) \leq v_n(1 - \delta_1).
    \end{align}
  We further compute the limits as $n\to \infty$
    \begin{align*}
    \E\{v_n(x)\} &\ = \frac{n}{k} \P\left[\beta_{h\to 2}\eps_h > x U_2\left(\frac{n}{k}\right)\right] \to p_{2h} x^{-\alpha},\\
    \Var \left\{v_n(x)\right\} &=   \frac{n}{k^2} \P\left[\beta_{h\to 2}\eps_h > x U_2\left(\frac{n}{k}\right)\right] \P\left[\beta_{h\to 2}\eps_h \leq x U_2\left(\frac{n}{k}\right)\right] = O(1/k) \to 0.
    \end{align*}
  An application of Chebyshev's inequality yields
    \begin{align}\label{eq:vn}
    v_n(1+\delta_1)  \stackrel{P}{\longrightarrow} p_{2h} (1+\delta_1)^{-\alpha}, \quad v_n(1-\delta_1) \stackrel{P}{\longrightarrow} p_{2h} (1-\delta_1)^{-\alpha}, \quad n \to \infty.
    \end{align}
  For some $\tau > 0$, choose $\delta_1>0$ such that $p_{2h} (1+\delta_1)^{-\alpha} > p_{2h} - \tau$ and $p_{2h} (1-\delta_1)^{-\alpha} < p_{2h} + \tau$. Then with \eqref{eq:Vn-bound}, \eqref{eq:vn} and the fact that $\P(B_{n\delta_1}^c) \to 0$ for $B_{n\delta_1}$ in \eqref{eq:setA}, we conclude
    \begin{align}\label{eq:Vn}
    \begin{split}
    \P(| V_n - p_{2h}| > \tau)
    \leq &\ \P(B_{n\delta_1}^c) + \P\{ v_n(1+\delta_1)  < p_{2h} - \tau\}\\
    &\ + \P\{ v_n(1-\delta_1)  > p_{2h} + \tau\} \to 0, \quad n\to\infty,
    \end{split}
    \end{align}
  that is, $V_n$ converges in probability to $p_{2h}$, as $n\to\infty$.

  Furthermore, using~\eqref{eq:Un-bound} we will now show
    \begin{align}\label{eq:Fn}
    \widehat F_1\left(cX_{(n-k), 2}\right) \stackrel{P}{\longrightarrow} 1.
    \end{align}
Indeed,
  for any $\tau, \delta_1 > 0$, as $n\to \infty$
    \begin{align*}
    \P\left\{\left|\widehat F_1\left(cX_{(n-k), 2}\right) - 1\right| > \tau \right\}
    \leq &\  \P\left(B_{n\delta_1}^{c}\right)
    + \P\left[\widehat F_1\left\{c(1 + \delta_1)U_2\left(\frac{n}{k}\right)\right\} > 1 + \tau \right]\\
    &\ +  \P\left[\widehat F_1\left\{c(1 - \delta_1)U_2\left(\frac{n}{k}\right)\right\} < 1 - \tau \right]\to 0,
    \end{align*}
  since $\widehat F_1(x)$ converges in probability to $F_1(x)$ for all $x \in \mathbb R$, and $U_2(n/k)\to\infty$, as $n\to\infty$.
  Moreover, with a similar argument as for $S_{2, n}$, one can show that $W_{n}$ defined in~\eqref{eq:Un-bound-def} converges in probability to 0, as $n\to\infty$.

  Putting everything together, using~\eqref{eq:Un-bound}, \eqref{eq:Vn} and \eqref{eq:Fn} we conclude that $U_{1, n}^{(h)} \stackrel{P}{\longrightarrow} p_{2h}$, $h\in A$, and thus
    \begin{align}\label{eq:U1n}
    U_{1, n} \stackrel{P}{\longrightarrow} \sum_{h\in A} p_{2h} = \frac{\sum_{h\in A} \beta_{h\to 2}^{\alpha}}{\sum_{h\in\An(2, G)} \beta_{h\to 2}^{\alpha}}.
    \end{align}

  Considering the term $U_{2, n}$, for each $h\in A_2$, define
    \begin{align*}
    U_{2, n}^{(h)} = \frac{1}{k}\sum_{i = 1}^{n} \widehat F_1(X_{i1})\mathbf{1}\left\{\beta_{h\to 2} \eps_{ih} > X_{(n - k), 2}\right\}.
    \end{align*}
  On the event $B_{n\delta_1}$, we can bound $U_{2, n}^{(h)}$ by
    \begin{align*}
    u_{2, n}(1+\delta_1) \leq U_{2, n}^{(h)} \leq u_{2, n}(1-\delta_1),
    \end{align*}
  where we let, for all $x > 0$,
    \begin{align*}
     u_{2, n}(x) =  \frac{1}{k}  \sum_{i=1}^n  \widehat F_1(X_{i1}) \mathbf{1} \left\{\beta_{h\to 2} \eps_{ih} > x U_2\left(\frac{n}{k}\right)\right\}.
    \end{align*}
  Since $X_1$ is independent of $\eps_h$, for $h\in A_2$, the values in the sum can be seen as $M(x) = M_n(x)$ random samples out of $\{1/n,\dots, 1\}$ without replacement, where $M(x)$ is Binomial with success probability
    \begin{align*}
    \P\left\{ \beta_{h\to 2} \eps_{h} > x U_2\left(\frac{n}{k}\right)\right\} \sim \frac{k p_{2h} x^{-\alpha}}{n}, \quad n \to \infty.
    \end{align*}
    Let $Z_{ni}$ 
    be random samples out of $\{1/n,\dots, 1\}$ 
    without replacement, for all $i=1,\dots, n$, $n\in \mathbb N$. Then $u_{2, n}(x)$
    has the same distribution as
    \begin{align*}
      \frac{1}{k} \sum_{i=1}^{M(x)} Z_{ni}.
    \end{align*}
  By a similar argument as in \eqref{eq:vn} the distribution of $M(x)$ satisfies for any fixed $x\in (0,\infty)$
    \begin{align}\label{eq:binomasy}
    \frac{M(x)}{m(x)} \stackrel{P}{\longrightarrow} 1, \quad n\to\infty,
    \end{align}
  where $m(x) = m_n(x) = \lceil{k p_{2h} x^{-\alpha}}\rceil$.
  Thus, for any $\delta_2 > 0$ and any $x >0$, the probability of the event
    \begin{align*}
    C_{n\delta_2,x} = \left\{\left| \frac{M(x)}{m(x)} -1 \right| < \delta_2 \right\}
    \end{align*}
    converges to 1 as $n\to\infty$. 
    Consider the quantity
    \begin{align}\label{Vin}
      \widetilde u_{2, n}(x) = \frac{1}{k} \sum_{i=1}^{m(x)} Z_{ni}.
    \end{align}
   Theorem 5.1 in \cite{ros1965} states that the limit in probability of this sum of samples without replacement is the same as the corresponding sum of samples with replacement. Therefore, for any $x\in(0,\infty)$, we have the convergence in probability  
    \begin{align}\label{eq:asynorm}
    \widetilde u_{2, n}(x)  = \frac{m(x)}{k} \frac{1}{m(x)} \sum_{i=1}^{m(x)} Z_{ni} \stackrel{P}{\longrightarrow}  \frac{1}{2} p_{2h}x^{-\alpha}.
    \end{align}
  For $\tau > 0$, choose $\delta_1, \delta_2>0$ small enough such that
    \begin{align}\label{d1d2}
     p_{2h}(1-\delta_1)^{-\alpha}(1+\delta_2)/2 < p_{2h}/2 + \tau.
    \end{align}
  Then we can bound the probability
    \begin{align}\label{eq:boundS2}
    \begin{split}
    \limsup_{n\to\infty} &\ \P\left(U_{2, n}^{(h)} - p_{2h}/2  > \tau\right)\\
    \leq &\ \limsup_{n\to\infty}\P\left\{u_{2, n}(1 - \delta_1) > p_{2h}/2 + \tau\right\} + \P(B_{n\delta_1}^{c})\\
    \leq &\ \lim_{n\to\infty}\P\left[ \widetilde u_{2, n}\{(1-\delta_1)(1+\delta_2)^{-1/\alpha}\} > p_{2h}/2 + \tau\right]
     + \P(B_{n\delta_1}^{c}) + \P(C_{n\delta_2,1-\delta_1}^{c})\\
    = &\ 0,
    \end{split}
    \end{align}
  since $\P(B_{n\delta_1}^{c})$ and $\P(C_{n\delta_2, 1-\delta_1}^{c})$ converge to 0 as $n\to\infty$, and the last term converges to $0$ as a consequence of \eqref{eq:asynorm} and \eqref{d1d2}. Similarly, we can show that
    \begin{align*}
    \limsup_{n\to\infty} \P\left(U_{2, n}^{(h)} - p_{2h}/2  < -\tau \right) = 0,
    \end{align*}
  and since $\tau > 0$ is arbitrary, $U_{2, n}^{(h)}\stackrel{P}{\longrightarrow} p_{2h}/2$, for $h\in A_2$. Therefore,
    \begin{align*}
    U_{2, n} \stackrel{P}{\longrightarrow} \sum_{h\in A_2} \frac{p_{2h}}{2} = \frac{1}{2}\frac{\sum_{h\in A_2} \beta_{h\to 2}^{\alpha}}{\sum_{h\in\An(2, G)} \beta_{h\to 2}^{\alpha}},
    \end{align*}
  and $\widehat \G_{21}  \stackrel{P}{\longrightarrow} \G_{21}$.

\end{proof}

\subsection{Proof of Proposition~\ref{prop:minimax_oracle}}\label{proof:minimax_oracle}
\begin{proof}
Let $s \in S := \{1, \dots, p\}$ and denote by $i_{s} \in V$ the node chosen by the algorithm \emph{at} step $s$.
Denote by
  \begin{align*}
  H_s =
  \begin{cases}
  \varnothing,           &\ s = 1,\\
  \{i_1, \dots, i_{s-1}\}, &\ s > 1,\\
  \end{cases}
  \end{align*}
the set of nodes chosen by the algorithm \emph{before} step $s$.
Let $G_{s} = (V_{s}, E_{s})$ be the subgraph of $G$ obtained by removing the nodes
$H_s$
that are
already chosen, i.e., $V_{s} = V\setminus H_s$ and $E_{s} = E \cap (V_{s}\times V_{s})$. Furthermore, define the score minimized by the algorithm to choose the node at step $s$,
  \begin{align*}
  M_i^{(s)} = \max_{j\in V_{s}\setminus\{i\}}\G_{ji},\quad \forall i \in V_{s}.
  \end{align*}
We want to show that EASE is a procedure that, for all $s \in S$, satisfies the statement
  \begin{align}\label{eq:minimax_orac}
  \Xi(s) := \left(i_{s} \in \arg\min_{i \in V_s} M_i^{(s)} \implies \an(i_{s}, G) \subseteq H_s\right).
  \end{align}
We use strong induction. Namely, we prove that if $\Xi(s')$ holds for \emph{all} natural numbers $s' < s$, then $\Xi(s)$ holds, too.

Fix $s \in S$ and suppose that for all $s' \in S$, with $s' < s$, $\Xi(s')$ holds. Assume $\varphi:= i_{s} \in \arg\min_{i \in V_s} M_i^{(s)}$ and $\an(\varphi, G) \not\subseteq H_s$. Then, there exists a node $j \in V_s$ such that $j\in \an(\varphi, G)$, and by Theorem~\ref{thm:6cases}, $\G_{j\varphi} = 1$. It follows $M_{\varphi}^{(s)} = 1$.

Also, since $G_s$ is a DAG, there exists a node $\ell \in V_s$ such that $\an(\ell, G_s) = \varnothing$. If $\Xi(s')$ holds for every natural number $s' < s$, then $\an(\ell, G) \subseteq H_s$. Suppose not. Then, there exists a node $j\in V_s$ such that $j\in \an(\ell, G)$. Note that since $j\in \an(\ell, G)$ and $j\notin \an(\ell, G_s)$, there exists a directed path from $j$ to $\ell$ in $G$ that is absent in $G_s$. Thus, there exists a node $h \in H_s$ that lies on such path, and it follows that $j \in \an(h, G)$, which is a contradiction. Since $\an(\ell, G)\subseteq H_s$, by Theorem~\ref{thm:6cases} it holds $M_{\ell}^{(s)} < 1 = M_{\varphi}^{(s)}$, which is a contradiction.

Since $s\in S$ was arbitrary, we have proved that $\Xi(s)$ holds for all $s \in S$. Furthermore, note that $\Xi(1)$ holds as a special case of the argument above.
Hence, we conclude that, for all $s', s \in S$,
  \begin{align*}
  s' < s \implies i_{s} \notin \an(i_{s'}, G)
  \end{align*}
and therefore $\pi(i_{s}) = s$ is a causal order of $G$.
\end{proof}

\subsection{Proof of Proposition~\ref{prop:minimax_bound}}\label{proof:minimax_bound}
\begin{proof}
If $\widehat \pi\not\in\Pi_G$,
then there exists a node $i\in V$ that is chosen before one of
its ancestors $u\in\an(i, G)$, i.e., $\widehat\pi(i) < \widehat\pi(u)$.
Therefore, there exists a non-empty set $\tilde V \subseteq V$, with $i, u \in \tilde V$, such that
  \begin{align}\label{eq:ibest}
  i\in\arg\min_{i'\in \tilde V}\max_{u'\in\tilde V\setminus\{i'\}} \widehat\G_{u'i'}.
  \end{align}
Furthermore, since $G$ is a DAG, there exists a node $j\in \tilde V$ with no ancestors in $\tilde V$. Let $v \in \tilde V\setminus\{j\}$, and note that $v\notin\An(j, G)$. Thus, by~\eqref{eq:ibest}, it follows that $\widehat\G_{vj} - \widehat\G_{ui} \geq 0$.
Define $\widehat\Delta_{ij} := |\widehat\G_{ij} - \G_{ij}|$ and note that the event $\widehat\G_{vj} - \widehat\G_{ui} \geq 0$ can be bounded by
  \begin{align*}
  \begin{split}
  &\ \left\{\widehat\Delta_{vj}
  \geq \frac{\G_{ui} - \G_{vj}}{2} \right\}  
  \cup
  \left\{\widehat\Delta_{ui}
  \geq \frac{\G_{ui} - \G_{vj}}{2} \right\}
  \subseteq
  \left\{\widehat\Delta_{vj}
  \geq \frac{1 - \eta}{2} \right\}  
  \cup
  \left\{\widehat\Delta_{ui}
  \geq \frac{1 - \eta}{2} \right\},
  \end{split}
  \end{align*}
where $\G_{ui} = 1$ because $u\in\an(i, G)$, and 
$\G_{vj} \leq \eta < 1$.
Therefore,
  \begin{align*}
  \begin{split}
  \P(\widehat\pi\not\in\Pi_G)
  \leq &\ \sum_{j\in V, \ v \notin \An(j, G)}
  \P\left(\widehat\Delta_{vj} > \frac{1 - \eta}{2}\right)
  + \sum_{i\in V,\ u \in \an(i, G)}
  \P\left(\widehat\Delta_{ui} > \frac{1 - \eta}{2}\right)\\
  \leq &\ p^2 \max_{i, j\in V:i\neq j}\P\left(\widehat\Delta_{ij}
  > \frac{1 - \eta}{2}\right).
  \end{split}
  \end{align*}
This completes the proof of Proposition~\ref{prop:minimax_bound}.
\end{proof}

\subsection{Proof of Lemma~\ref{lemma:psi-coeff}}\label{proof:psi-coeff}
\begin{proof}
Recall that each variable $X_h$, $h\in V$, can be expressed as a weighted sum of the noise terms $\eps_1, \dots, \eps_p$ belonging to the ancestors of $X_h$, as shown in~\eqref{eq:scm-recursive}.
Therefore, we can write $X_j$ and $X_k$ as follows,
 \begin{align*}
  \begin{split}
  X_j =  \sum_{h \in A_{jk}} \beta_{h\to j}\eps_h + \sum_{h \in A_{jk^{*}}} \beta_{h\to j}\eps_h,\\
  X_k = \sum_{h \in A_{jk}} \beta_{h\to k}\eps_h + \sum_{h \in A_{kj^{*}}} \beta_{h\to k}\eps_h,\\
  \end{split}
  \end{align*}
where $A_{jk} = \An(j, G)\cap\An(k, G)$, $A_{jk^{*}} =  \An(j, G)\cap\An(k, G)^c$ and similarly for $A_{kj^{*}}$.

We treat the two terms of~\eqref{eq:psi-decomp} separately.
Consider the first term,
  \begin{align*}
  \Psi_{jk}^{+} = \lim_{x\to \infty}\frac{1}{2}\E\left[\sigma(F_k(X_k)) \mid X_j > x\right],\quad j, k\in V.
  \end{align*}
Since $X_j$, $j\in V$, are regularly varying with index $\alpha > 0$, using similar arguments as in Lemma~\ref{prop:p01}, we can write
  \begin{align}\label{eq:ext-1}
  \begin{split}
  \E [\sigma(F_k(X_k)) & \mathbf{1}\{ X_j > x \}]\\
  = &\ \sum_{h\in \An(j, G)} \E \left[\sigma(F_k(X_k)) \mathbf{1}\left\{ \beta_{h\to j}\eps_h > x \right\}\right] + o\{\P(X_j > x)\}\\
  = &\ \sum_{h \in A_{jk}} \E \left[\sigma(F_k(X_k)) \mathbf{1}\left\{ \beta_{h\to j}\eps_h > x \right\}\right]\\
  &\ + \sum_{h \in A_{jk^{*}}} \E \left[\sigma(F_k(X_k)) \mathbf{1}\left\{ \beta_{h\to j}\eps_h > x \right\}\right] + o\{\P(X_j > x)\}.
  \end{split}
  \end{align}
Recall that $\sigma$ is defined as
  \begin{align*}
  \sigma(x) = |2x - 1| =
  \begin{cases}
  2x - 1,\quad x \geq 1/2\\
  1 - 2x, \quad x < 1/2,
  \end{cases}
  \end{align*}
and that $0 \leq \sigma(F_k(X_k)) < 1$, $k\in V$.

Consider the summands in~\eqref{eq:ext-1} where $h\in A_{jk}$. We distinguish two cases.
When the ratio $c = \beta_{h\to k} / \beta_{h\to j} > 0$,
we can bound the summand 
  \begin{align*}
  \P\left( \beta_{h\to j}\eps_h > x \right) \geq &\ \E \left[\sigma(F_k(X_k)) \mathbf{1}\left\{ \beta_{h\to j}\eps_h > x \right\}\right] \\
  \geq &\  \E \left[\sigma(F_k(X_k)) \mathbf{1}\left\{ \beta_{h\to k}\eps_h > cx, X_k > cx\right\}\right] \\
  \geq &\ \sigma(F_k(cx)) \P \left( \beta_{h\to k}\eps_h > cx, X_k >  cx\right)\\
  = &\ \sigma(F_k(cx)) \left[\P \left(\beta_{h\to k}\eps_h > cx\right) - \P \left(\beta_{h\to k}\eps_h > cx, X_k \leq cx\right)\right]\\
  = &\ \P \left( \beta_{h\to j}\eps_h > x \right) + o\{\P(X_j > x)\}.
  \end{align*}
The last equality follows from Lemma~\ref{lemma:regvar1} and since $\sigma(F_k(cx)) \to 1$ as $x\to \infty$.
When the ratio $c = \beta_{h\to k} / \beta_{h\to j} < 0$, we obtain
  \begin{align*}
  \P\left( \beta_{h\to j}\eps_h > x \right)
  \geq &\ \E \left[\sigma(F_k(X_k)) \mathbf{1}\left\{ \beta_{h\to j}\eps_h > x \right\}\right]\\
  \geq &\  \E \left[\sigma(F_k(X_k)) \mathbf{1}\left\{ \beta_{h\to k}\eps_h  < cx, X_k < cx\right\}\right] \\
  \geq &\ \sigma(F_k(cx)) \P \left( \beta_{h\to k}\eps_h < cx, X_k <  cx\right)\\
  = &\ \sigma(F_k(cx)) \left[\P \left(\beta_{h\to k}\eps_h < cx\right) - \P \left(\beta_{h\to k}\eps_h < cx, X_k \geq cx\right)\right]\\
  = &\ \P \left( \beta_{h\to j}\eps_h > x \right) + o\{\P(X_j > x)\},
  \end{align*}
where the third inequality holds because, as $x \to \infty$,
$F_k(X_k) < F_k(cx)$ implies $\sigma(F_k(X_k)) > \sigma(F_k(cx))$.

On the other hand, for all the summands in~\eqref{eq:ext-1} where $h \in A_{jk^{*}}$ we have that $X_k$ and $\eps_h$ are independent. Therefore,
  \begin{align*}
    \E \left[\sigma(F_k(X_k)) \mathbf{1}\left\{ \beta_{h\to j}\eps_h > x \right\}\right]
    = &\ \E \left[|2F_k(X_k) - 1|\right] \P \left(\beta_{h\to j}\eps_h > x\right)\\
    = &\ \frac12  \P\left( \beta_{h\to j}\eps_h > x \right).
  \end{align*}
Consequently,
  \begin{align*}
    \Psi_{jk}^{+} = &\lim_{x\to \infty}\frac{1}{2}\E\left[\sigma(F_k(X_k)) \mid X_j > x\right]\\
    = &\  \lim_{x\to \infty} \frac{1}{2} \sum_{h\in A_{jk}} \frac{\P\left(\beta_{h\to j}\eps_h > x \right)}{\P(X_j > x)} + \lim_{x\to \infty} \frac{1}{4} \sum_{h\in A_{jk^{*}}} \frac{\P\left(\beta_{h\to j}\eps_h > x \right)}{\P(X_j > x)}\\
    = &\ \frac{1}{4} + \frac{1}{4} \sum_{h\in A_{jk}}\lim_{x\to\infty}\frac{\P(\beta_{h\to j}\eps_h > x)}{\P(X_j > x)}
    =  \frac{1}{4} + \frac{1}{4} \frac{\sum_{h\in A_{jk}} |\beta_{h\to j}|^{\alpha} c_{hj}^{+}\ell(x)x^{-\alpha}}{\sum_{h\in\An(j, G)} |\beta_{h\to j}|^{\alpha}c_{hj}^{+}\ell(x)x^{-\alpha}}\\
    = &\ \frac{1}{4} + \frac{1}{4} \frac{\sum_{h\in A_{jk}} c_{hj}^{+} |\beta_{h\to j}|^{\alpha}}{\sum_{h\in\An(j, G)} c_{hj}^{+} |\beta_{h\to j}|^{\alpha}}.
  \end{align*}

Similarly, the second term can be shown to be
  \begin{align*}
  \Psi_{jk}^{-} =  \frac{1}{4} + \frac{1}{4} \frac{\sum_{h\in A_{jk}} c_{hj}^{-} |\beta_{h\to j}|^{\alpha}}{\sum_{h\in\An(j, G)} c_{hj}^{-} |\beta_{h\to j}|^{\alpha}}.
  \end{align*}

Putting everything together yields the desired form of $\Psi_{jk}=  \Psi_{jk}^{+} + \Psi_{jk}^{-}$.
\end{proof}

\section{Example of the EASE algorithm}\label{app:minimax_search}
Consider the DAG $G$ in Figure~\ref{fig:tree_search_dag}, with vertex set $V = \{1, 2, 3, 4\}$. The set of causal orders of $G$ is $\Pi_{G} = \{ (2, 1, 4, 3), (2, 1, 3, 4), (3, 1, 2, 4)\}$.
In Figure~\ref{fig:tree_search}, we display the state-space tree of the EASE algorithm, i.e., the set of states that the algorithm can visit to find a causal order $\pi$ of $G$.
Each state represents the status of the vector $\pi^{-1}$ during the algorithm evaluation.
A state is red if all the paths below it lead to wrong causal order.
The green states represent the causal orders of $G$.
\begin{figure}[!ht]
\centering
\includegraphics{tree_search_dag.tex}
\caption{DAG $G$.}
\label{fig:tree_search_dag}
\end{figure}
\begin{figure}[h!]
\centering
\includegraphics{tree_search.tex}
\caption{Extremal ancestral search (EASE) for the DAG shown in Figure~\ref{fig:tree_search_dag}.}
\label{fig:tree_search}
\end{figure}

\newpage
\section{Experimental settings for the simulation study}\label{app:simulation_experiments}

The parameters of the simulation are the following.
\begin{itemize}
\item[--] Distribution: Student's $t$, with degrees of freedom $\alpha \in \{1.5, 2.5, 3.5\}$.
\item[--] Number of observations: $n \in \{500, 1000, 10000\}$.
\item[--] Number of variables: $p \in \{4, 7, 10, 15, 20, 30, 50\}$.
\end{itemize}

The settings that we consider are,
\begin{itemize}
\item[--] Linear SCM,
\item[--] Linear SCM with hidden confounders,
\item[--] Nonlinear SCM,
\item[--] Linear SCM and uniform transformation of each variable.
\end{itemize}

For each combination of $n$, $p$, and $\alpha$ and each setting, we generate $n_{exp} = 50$ random SCMs. Each SCM is built as follows.
\begin{enumerate}
\item Generate a random DAG.
  \begin{enumerate}
  \item Take a random permutation $\pi$ of the nodes $V = \oneto{p}$ that defines the causal order.
  \item For each $i\in V$ such that $\pi(i) > 1$, sample the number of parents
  $$n_{\pa} \sim \text{Bin}(\pi(i) - 1, q),$$ from a binomial distribution. We set $q = \min\{5/(p-1), 1/2\}$ so that,  on average, there are 2.5 edges per node, when $p > 10$.
  \item Sample without replacement $n_{\pa}$ from $\{j\in V: \pi(j) < \pi(i)\}$ and name the resulting set $\pa(i)$.
  \end{enumerate}
\item Sample uniformly from $\{-0.9, -0.1\}\cup\{0.1, 0.9\}$ the coefficients $\beta_{ij}$, where $i\in V$ and $j\in \pa(i, G)$.
\item In the case of hidden confounders,
  \begin{enumerate}
  \item Sample the number of confounding variables,
  $$n_{\text{conf}} \sim \text{Bin}\left(\frac{p(p-1)}{2}, q\right),$$
  from a binomial distribution. We set $q = 2/(3p - 3)$ so that, on average, there is one hidden confounder for every three nodes.
  \item Sample without replacement $n_{\text{conf}}$ unordered pairs from $\left\{\{i, j\}: i, j \in V\right\}$ and name the resulting set $C$.
  \item  Update the parents of each node $i$ as $\pa(i) := \pa(i) \cup C_i$, where $C_i\subseteq C$ is the set of hidden confounders affecting node $i\in V$. Similarly, for each hidden confounder $c \in C$, set $\pa(c) = \varnothing$.
  \item Sample uniformly from $\{-0.9, -0.1\}\cup\{0.1, 0.9\}$ the coefficients $\beta_{ic}$, where $i\in V$ and $c\in C_i$.
  \end{enumerate}
\item Let $\widetilde V = V \cup C$\footnote{If there are no hidden confounders, then $C = \varnothing$.}. For all $i\in \widetilde V$, sample $n$ i.i.d. copies of noise $\eps_i \sim \text{Student's } t\ \text{with df} = \alpha$.
\item For each node $i\in \widetilde V$, generate
      $$X_i := \sum_{j\in\pa(i)} \beta_{ij}\ f(X_j) + \eps_i,\quad\text{where}$$
  \begin{enumerate}
  \item in case of linear SCM, $f(X_j) = X_j$,

  \item in case of nonlinear SCM, $f(X_j) = X_j\ \mathbf{1}\{\widehat F_j(X_j) > 0.95\}$, where $\widehat F_j$ is the empirical cdf of $X_j$.

  \end{enumerate}

\item In case of uniform margins, transform each variable by $X_i := \widehat F_i(X_i)$, where $\widehat F_i$ is the empirical cdf of $X_i$, $i\in \widetilde V$.
\end{enumerate}
	\newpage
	\section{Additional Figures and Tables}\label{app:additional_figs}
	~
	
	\newpage
	~
	
\begin{figure}[H]
\centering
\includegraphics[scale=0.55]{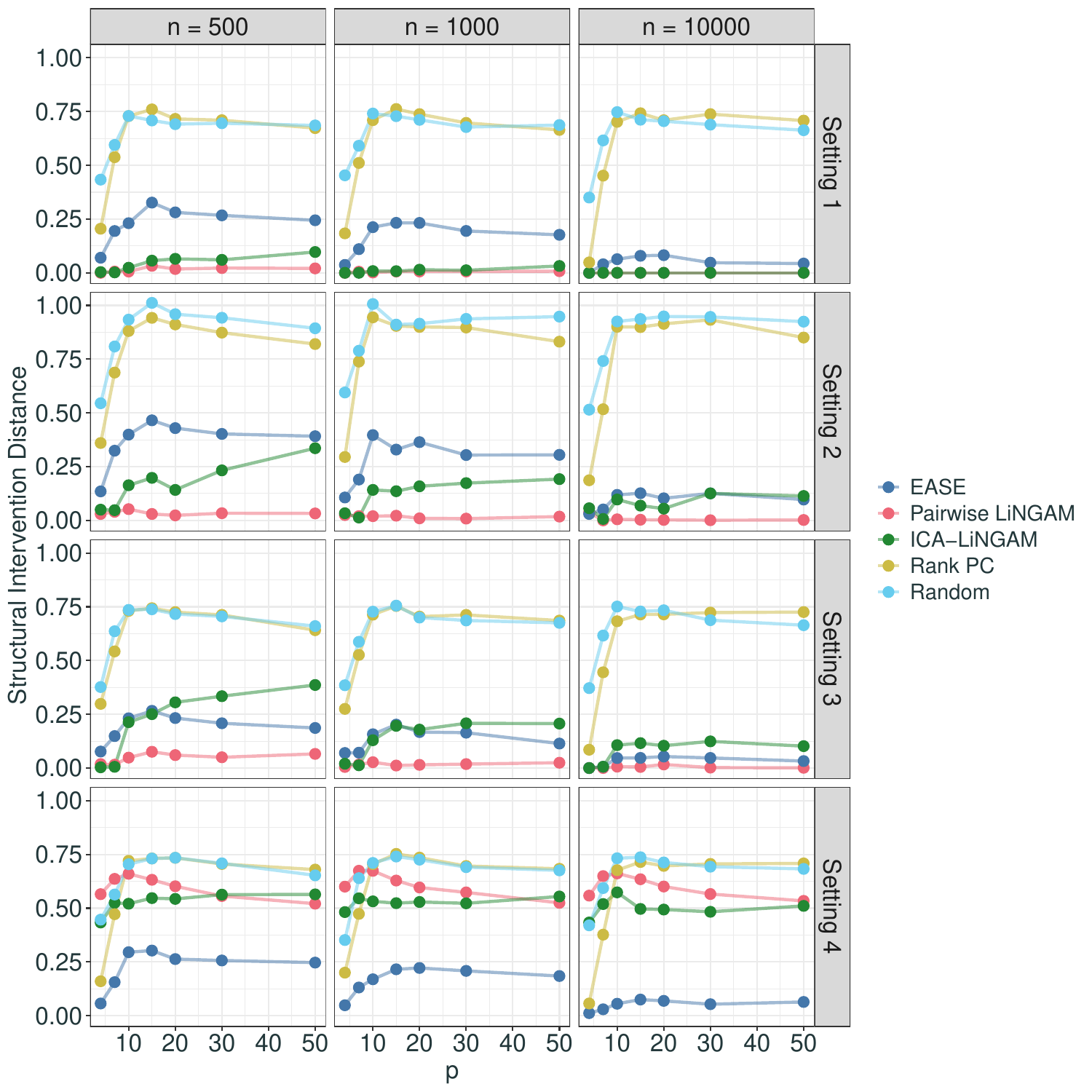}
\caption{The figure refers to Section~\ref{subsec:sim_results}. It shows the SID averaged over 50 simulations, for each method, setting, sample size $n$ and dimension $p$, when the tail index is \boldmath$\alpha = 2.5$.
Each row of the figure corresponds to one setting. In order, the settings are: (1)~Linear SCM; (2)~Linear SCM with hidden confounders; (3)~Nonlinear SCM; (4)~Linear SCM where each variable is transformed to a uniform margin.}
\label{fig:sid_plot_alpha_25}
\end{figure}

\begin{figure}[H]
\centering
\includegraphics[scale=0.55]{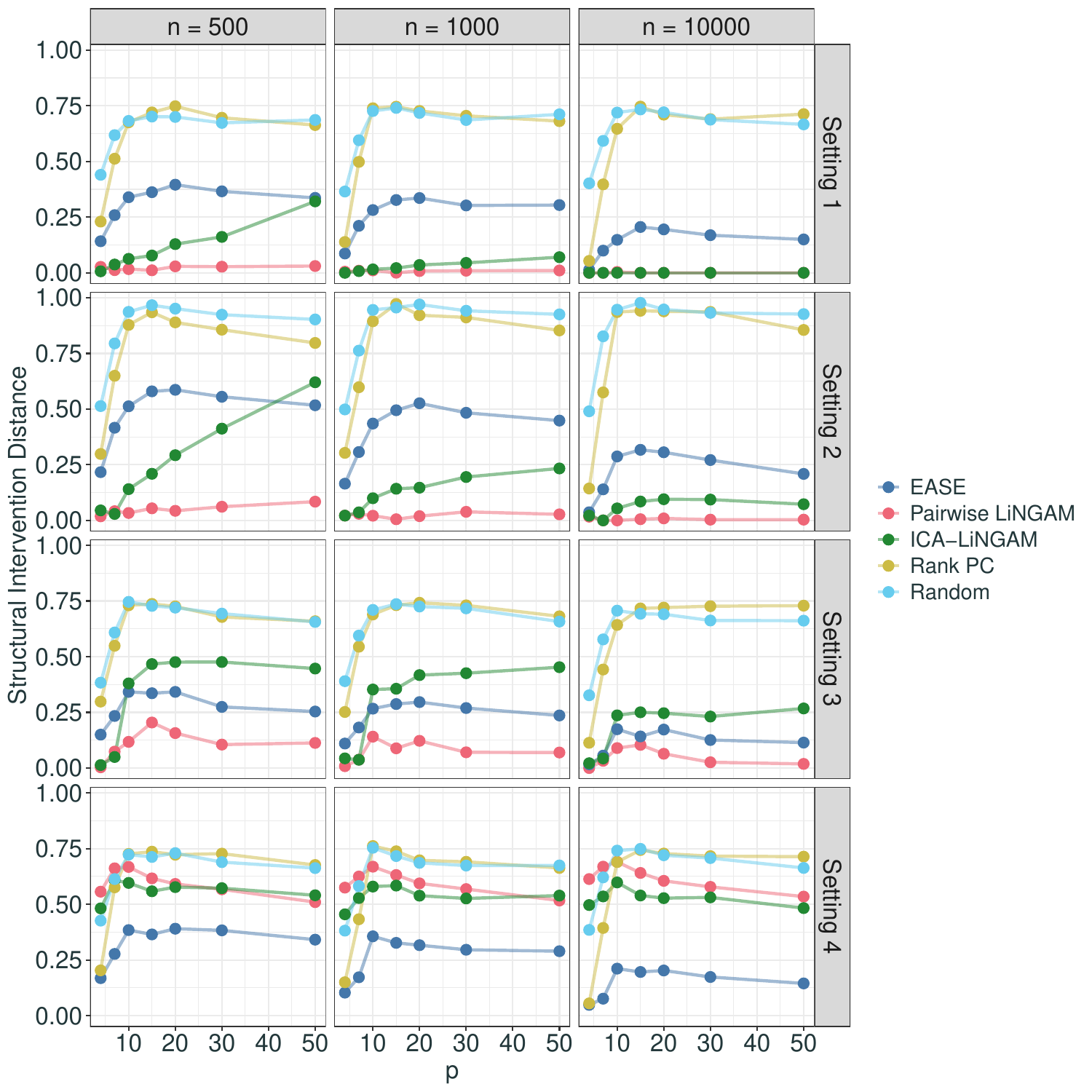}
\caption{The figure refers to Section~\ref{subsec:sim_results}. It shows the SID averaged over 50 simulations, for each method, setting, sample size $n$ and dimension $p$, when the tail index is \boldmath$\alpha = 3.5$.
Each row of the figure corresponds to one setting. In order, the settings are: (1)~Linear SCM; (2)~Linear SCM with hidden confounders; (3)~Nonlinear SCM; (4)~Linear SCM where each variable is transformed to a uniform margin.}
\label{fig:sid_plot_alpha_35}
\end{figure}

\begin{figure}[H]
\centering
\includegraphics[scale=0.6]{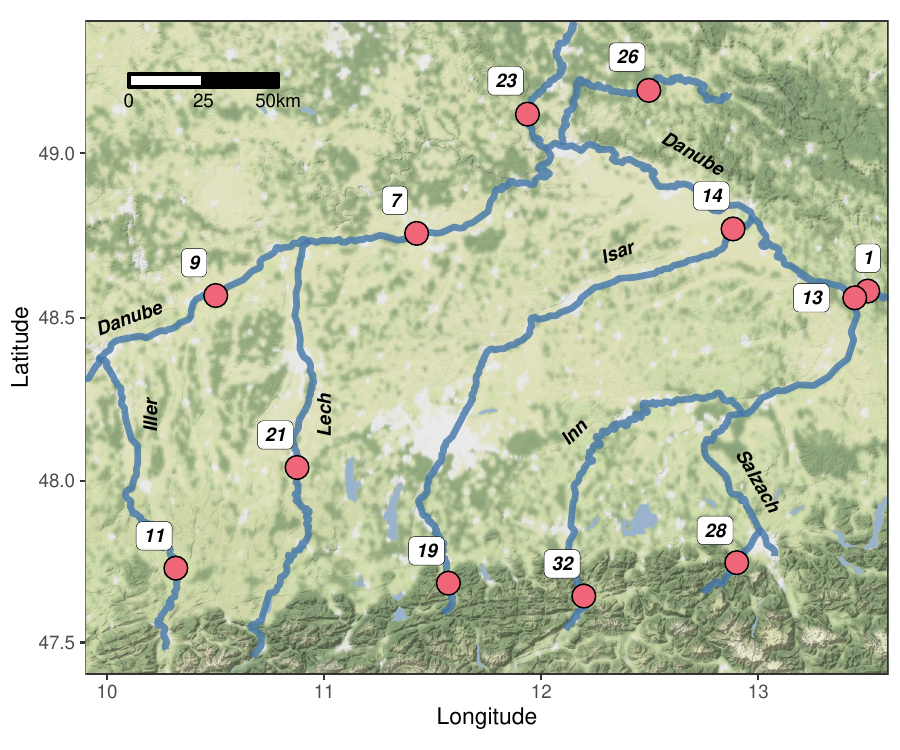}
\caption{The figure refers to Section~\ref{sec:river}. It represents the map of the upper Danube basin. The plot is created with the \href{https://cran.r-project.org/web/packages/ggmap/index.html}{ggmap} \texttt{R} package developed by~\citet{wickhamggmap}. The background is taken from \href{maps.stamen.com}{maps.stamen.com}.}
\label{fig:river_map}
\end{figure}

\begin{figure}[H]
\ifjournal{\hspace*{-.8cm}}\fi
\centering
\includegraphics[scale=0.6]{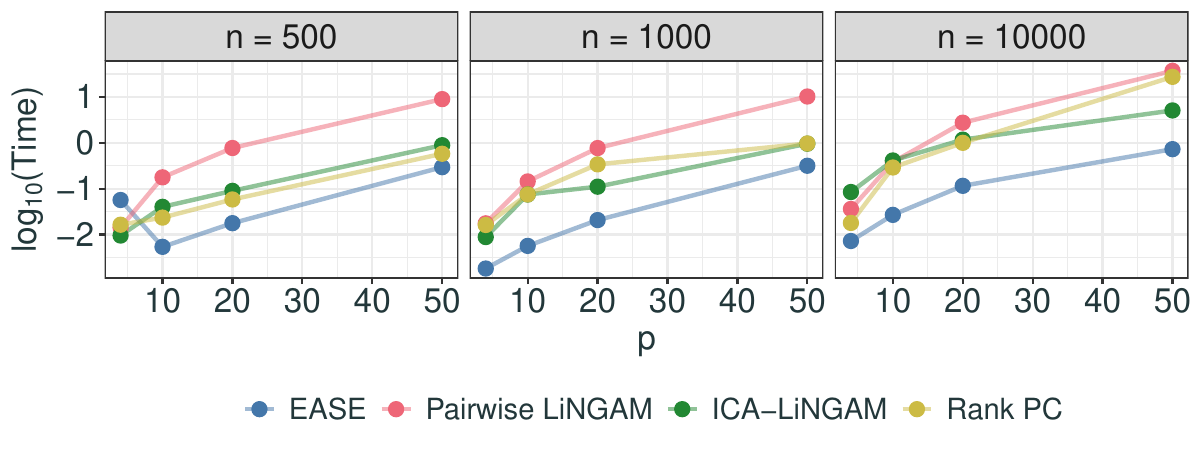}
\caption{The figure refers to Section~\ref{subsec:sim_results}. It shows the base-10 logarithm
of the computational time (in seconds) averaged over 10 simulations, for each method, sample size $n$ and dimension $p$.}
\label{fig:time}
\end{figure}

\begin{table}[H]
\caption{\label{tab:lingam_experiment}
The table refers to Section~\ref{subsec:sim_results}.
It displays the average SID and the standard error
(SE) over 50 simulations, for ICA-LiNGAM and Pairwise LiNGAM.
For each dimension $p$ we simulate $n = 10000$ observations from a non-linear SCM with Student's-$t$ noise with $\alpha = 3.5$ degrees of freedom
(see Section~\ref{sec:simulation_study}).
We run each method both on the full dataset (`Full dataset') and on the
partial dataset, where we keep only the observations in the tails of the
distribution (`Keep tails').
As dimension $p$ increases, the number of extreme observations
in all their coordinates decreases exponentially.
Therefore, for a given dimension $p$,
we say that an observation $x\in\mathbb{R}^p$ lies
in the tails of the distribution
if at least $\lfloor\sqrt{p}\rfloor$ of its coordinates are
below (or above) the 10\% (or 90\%) quantile.
}
\centering
\begin{tabular}{llcccc}
\toprule
\multicolumn{2}{c}{ } & \multicolumn{2}{c}{\emph{Full dataset}} & \multicolumn{2}{c}{\emph{Keep tails}} \\
\cmidrule(l{3pt}r{3pt}){3-4} \cmidrule(l{3pt}r{3pt}){5-6}
 &  & \emph{SID} & \emph{SE} & \emph{SID} & \emph{SE}\\
\midrule
 & $p = 10$ & 0.236 & 0.029 & 0.110 & 0.021\\
 & $p = 20$ & 0.246 & 0.022 & 0.150 & 0.017\\
 & $p = 30$ & 0.231 & 0.017 & 0.148 & 0.014\\
\multirow{-4}{*}{\raggedright\arraybackslash ICA-LiNGAM} & $p = 50$ & 0.268 & 0.014 & 0.216 & 0.012\\
\\
 & $p = 10$ & 0.090 & 0.024 & 0.003 & 0.001\\
 & $p = 20$ & 0.064 & 0.013 & 0.003 & 0.001\\
 & $p = 30$ & 0.026 & 0.007 & 0.005 & 0.003\\
\multirow{-4}{*}{\raggedright\arraybackslash Pairwise LiNGAM} & $p = 50$ & 0.019 & 0.004 & 0.006 & 0.002\\
\bottomrule
\end{tabular}
\end{table}

\begin{figure}[H]
\ifjournal{\hspace*{-.8cm}}\fi
\centering
\includegraphics[scale=0.6]{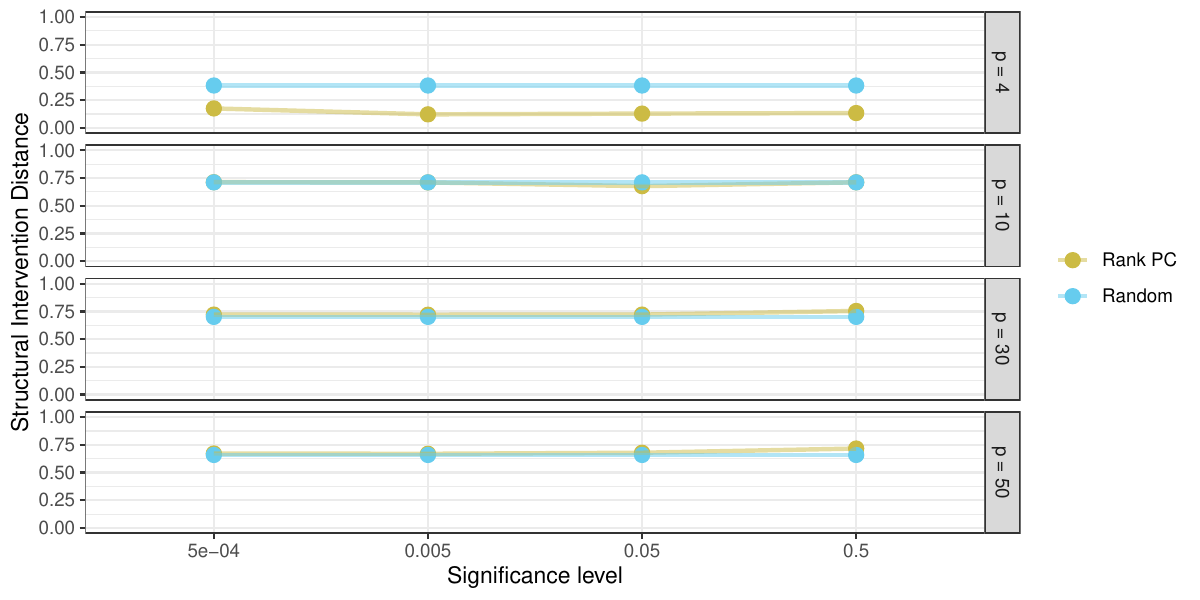}
\caption{The figure refers to Section~\ref{subsec:sim_results}. It shows the average SID of the Rank PC algorithm over 50 simulations for different significance levels of the independent test.  For each dimension $p$ we simulate $n = 1000$ observations from a linear SCM with Student's-$t$ noise with $\alpha = 3.5$ degrees of freedom (see Section~\ref{sec:simulation_study}). For comparison, we also display the SID associated with random guessing.}
\label{fig:rankpc}
\end{figure}

\section{Financial application}\label{app:financial_dynamics}
Starting from the financial application of Section~\ref{sec:financial}, we study the dynamic evolution of the $\widehat\Psi$ coefficient across time.
By looking at the time series of the EURCHF in Figure~\ref{fig:eurchf_plot}, we notice two periods of extremely high positive and negative returns, namely, August 2011 and January 2015. The second event was due to an unexpected decision of the Swiss National Bank (SNB) to remove the peg of 1.20 Swiss francs per Euro.

\begin{figure}[H]
\centering
\includegraphics[scale=0.4]{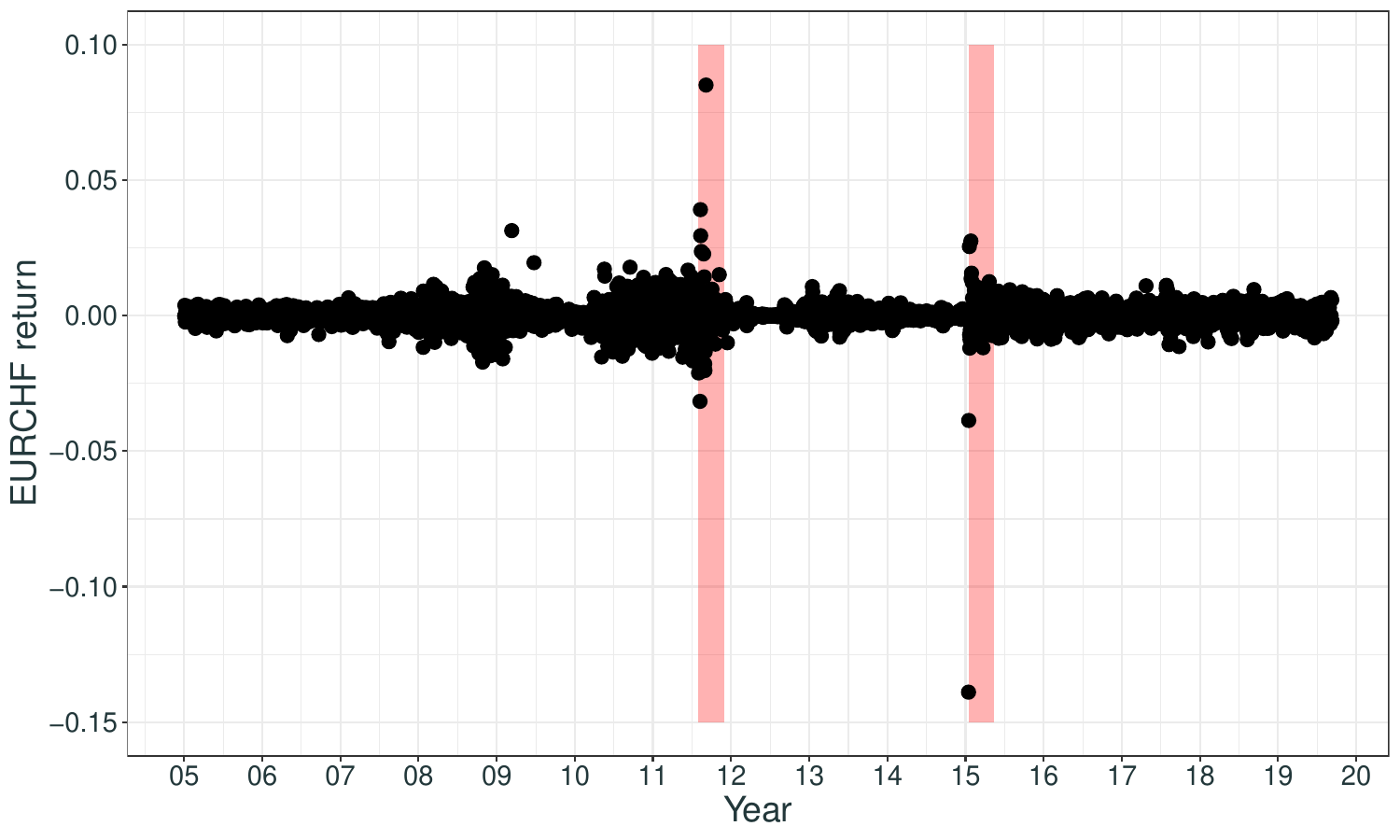}
\caption{Daily returns of the EURCHF exchange rate. The red sections represent the two major turmoil events. They occurred, respectively, in the month of August 2011 and in January 2015. The latter event was caused by the unexpected decision of the Swiss National Bank (SNB) to remove the peg of 1.20 Swiss francs per Euro.}
\label{fig:eurchf_plot}
\end{figure}

The idea is to estimate the $\widehat\Psi$ coefficients between the EURCHF and the three stocks on a rolling window of 1500 days. In this case, we use a threshold $k = 10$ which corresponds approximately to a fractional exponent $\nu = 0.3$, where we define $k = \lfloor n ^ \nu\rfloor$ and $n$ is the number of observations in the sample. In Figure~\ref{fig:rolling_eurchf}, we notice that during turmoil periods, highlighted in red, the $\widehat\Psi$ coefficient is higher in the direction that goes from the EURCHF to the stocks. This is an example where the causal structure becomes clearer during extreme events. It is also interesting to observe that the $\widehat\Psi$ coefficient is quite stable during calmer market periods. For example, the stable black lines between 2012 and 2015 correspond to the currency peg maintained by the SNB. As a last note, the drop in the black lines in mid 2017 is due to the fact that the extreme event of 2011 `exits' the rolling window.

\begin{figure}[H]
\centering
\includegraphics[scale=0.6]{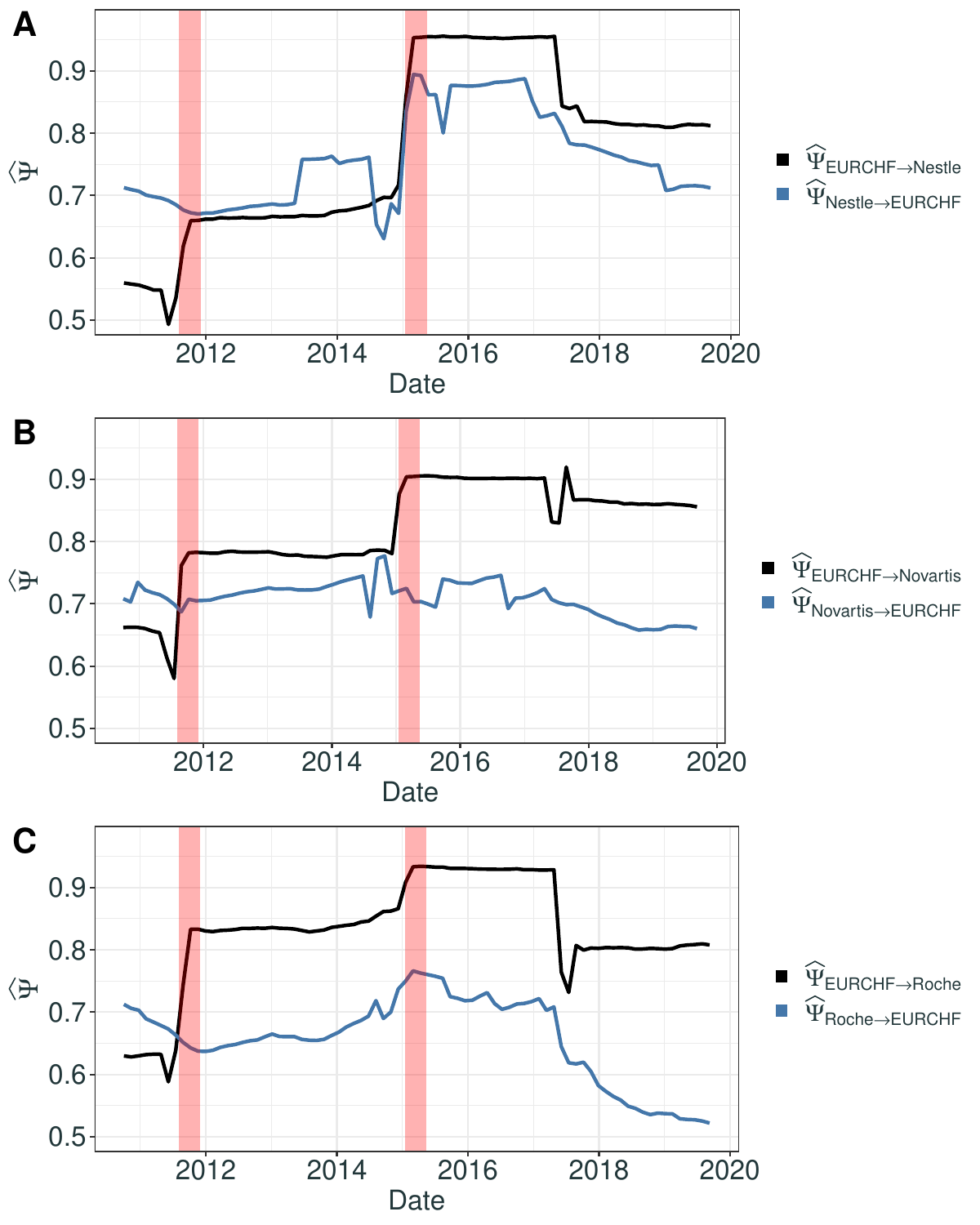}
\caption{Estimated coefficients $\widehat\Psi$ on a rolling window of 1500 days. The threshold used to estimate the coefficient is $k = 10$.}
\label{fig:rolling_eurchf}
\end{figure}

}\fi

\end{appendix}

\clearpage
\ifjournal{
\bibliographystyle{imsart/imsart-nameyear}}
\else{
\bibliographystyle{plainnat}
}\fi
\bibliography{ref.bib}

\begin{thebibliography}{54}
\providecommand{\natexlab}[1]{#1}
\providecommand{\url}[1]{\texttt{#1}}
\expandafter\ifx\csname urlstyle\endcsname\relax
  \providecommand{\doi}[1]{doi: #1}\else
  \providecommand{\doi}{doi: \begingroup \urlstyle{rm}\Url}\fi

\bibitem[Asadi et~al.(2015)Asadi, Davison, and Engelke]{asadi2015extremes}
P.~Asadi, A.~C. Davison, and S.~Engelke.
\newblock Extremes on river networks.
\newblock \emph{Annals of Applied Statistics}, 9\penalty0 (4):\penalty0
  2023--2050, 2015.

\bibitem[Basrak and Segers(2009)]{basrak2009regularly}
B.~Basrak and J.~Segers.
\newblock Regularly varying multivariate time series.
\newblock \emph{Stochastic processes and their applications}, 119\penalty0
  (4):\penalty0 1055--1080, 2009.

\bibitem[Bollen(1989)]{Bollen1989}
K.~A. Bollen.
\newblock \emph{Structural Equations with Latent Variables}.
\newblock John Wiley \& Sons Inc., New York, NY, USA, 1989.

\bibitem[B{\"u}hlmann et~al.(2014)B{\"u}hlmann, Peters, and
  Ernest]{Buehlmann2014annals}
P.~B{\"u}hlmann, J.~Peters, and J.~Ernest.
\newblock {CAM}: Causal additive models, high-dimensional order search and
  penalized regression.
\newblock \emph{Annals of Statistics}, 42:\penalty0 2526--2556, 2014.

\bibitem[Cai et~al.(2015)Cai, Einmahl, de~Haan, and Zhou]{cai2015}
J.~J. Cai, J.~H.~J. Einmahl, L.~de~Haan, and C.~Zhou.
\newblock Estimation of the marginal expected shortfall: the mean when a
  related variable is extreme.
\newblock \emph{Journal of the Royal Statistical Society: Series B (Statistical
  Methodology)}, 77:\penalty0 417--442, 2015.

\bibitem[Claassen et~al.(2013)Claassen, Mooij, and
  Heskes]{claassen2013learning}
Tom Claassen, Joris Mooij, and Tom Heskes.
\newblock Learning sparse causal models is not np-hard.
\newblock In \emph{Proceedings of the Twenty-Ninth Conference on Uncertainty in
  Artificial Intelligence ({UAI})}, pages 172--181. ACM Press, 2013.

\bibitem[Coles(2001)]{coles2001introduction}
Stuart Coles.
\newblock \emph{An Introduction to Statistical Modeling of Extreme Values}.
\newblock Springer, New York, NY, USA, 2001.

\bibitem[Comon(1994)]{comon1994independent}
P.~Comon.
\newblock Independent component analysis, a new concept?
\newblock \emph{Signal processing}, 36\penalty0 (3):\penalty0 287--314, 1994.

\bibitem[Cox and Wermuth(1996)]{cox1996multivariate}
David~Roxbee Cox and Nanny Wermuth.
\newblock \emph{Multivariate Dependencies: Models, Analysis and
  Interpretation}, volume~67 of \emph{Monographs on Statistics and Applied
  Probability}.
\newblock CRC Press, Boca Raton, FL, USA, 1996.

\bibitem[de~Haan and Ferreira(2006)]{deh2006a}
L.~de~Haan and A.~Ferreira.
\newblock \emph{Extreme Value Theory}.
\newblock Springer, New York, NY, USA, 2006.

\bibitem[Embrechts et~al.(1997)Embrechts, Kl{\"u}ppelberg, and
  Mikosch]{embrechts2013modelling}
P.~Embrechts, C.~Kl{\"u}ppelberg, and T.~Mikosch.
\newblock \emph{Modelling Extremal Events: for Insurance and Finance}.
\newblock Springer, New York, NY, USA, 1997.

\bibitem[Engelke and Hitz(2020)]{eng2018a}
S.~Engelke and A.S. Hitz.
\newblock Graphical models for extremes.
\newblock \emph{Journal of the Royal Statistical Society: Series B (Statistical
  Methodology)}, 82:\penalty0 871--932, 2020.
\newblock Discussion paper.

\bibitem[Engelke and Ivanovs(2021)]{eng2021}
S.~Engelke and J.~Ivanovs.
\newblock Sparse structures for multivariate extremes.
\newblock \emph{Annual Review of Statistics and Its Application}, 8, 2021.
\newblock To appear.

\bibitem[Entner and Hoyer(2010)]{entner2010discovering}
Doris Entner and Patrik~O Hoyer.
\newblock Discovering unconfounded causal relationships using linear
  non-gaussian models.
\newblock In \emph{Proceedings of the 2010 international conference on New
  Frontiers in Artificial Intelligence}, pages 181--195. Springer, 2010.

\bibitem[Fawcett and Walshaw(2007)]{fawcett2007improved}
Lee Fawcett and David Walshaw.
\newblock Improved estimation for temporally clustered extremes.
\newblock \emph{Environmetrics}, 18\penalty0 (2):\penalty0 173--188, 2007.

\bibitem[Feller(1971)]{feller1971}
W.~Feller.
\newblock \emph{An Introduction to Probability Theory and Its Applications,
  Volume II}.
\newblock John Wiley \& Sons Inc., New York, NY, USA, 1971.

\bibitem[Forbes and Rigobon(2002)]{forbes2002no}
K.~J. Forbes and R.~Rigobon.
\newblock No contagion, only interdependence: Measuring stock market
  comovements.
\newblock \emph{The Journal of Finance}, 57\penalty0 (5):\penalty0 2223--2261,
  2002.

\bibitem[Gissibl and Kl{\"u}ppelberg(2018)]{gissibl2018max}
N.~Gissibl and C.~Kl{\"u}ppelberg.
\newblock Max-linear models on directed acyclic graphs.
\newblock \emph{Bernoulli}, 24\penalty0 (4A):\penalty0 2693--2720, 2018.

\bibitem[Gissibl et~al.(2020)Gissibl, Kl{\"u}ppelberg, and
  Lauritzen]{gissibl2019identifiability}
Nadine Gissibl, Claudia Kl{\"u}ppelberg, and Steffen Lauritzen.
\newblock Identifiability and estimation of recursive max-linear models.
\newblock \emph{Scandinavian Journal of Statistics}, 2020.
\newblock To appear.

\bibitem[Granger(1969)]{granger1969investigating}
Clive~WJ Granger.
\newblock Investigating causal relations by econometric models and
  cross-spectral methods.
\newblock \emph{Econometrica}, 37\penalty0 (3):\penalty0 424--438, 1969.

\bibitem[Harris and Drton(2013)]{harris2013pc}
N.~Harris and M.~Drton.
\newblock {PC} algorithm for nonparanormal graphical models.
\newblock \emph{Journal of Machine Learning Research}, 14\penalty0
  (1):\penalty0 3365--3383, 2013.

\bibitem[Hill(1975)]{hill1975simple}
Bruce~M Hill.
\newblock A simple general approach to inference about the tail of a
  distribution.
\newblock \emph{Annals of Statistics}, 3\penalty0 (5):\penalty0 1163--1174,
  1975.

\bibitem[Hoyer et~al.(2008)Hoyer, Shimizu, Kerminen, and
  Palviainen]{hoyer2008estimation}
P.~O. Hoyer, S.~Shimizu, A.~J. Kerminen, and M.~Palviainen.
\newblock Estimation of causal effects using linear non-gaussian causal models
  with hidden variables.
\newblock \emph{International Journal of Approximate Reasoning}, 49\penalty0
  (2):\penalty0 362--378, 2008.

\bibitem[Hyv{\"a}rinen and Smith(2013)]{hyvarinen2013pairwise}
Aapo Hyv{\"a}rinen and Stephen~M Smith.
\newblock Pairwise likelihood ratios for estimation of non-gaussian structural
  equation models.
\newblock \emph{Journal of Machine Learning Research}, 14\penalty0
  (Jan):\penalty0 111--152, 2013.

\bibitem[Imbens and Rubin(2015)]{Imbens2015}
G.~W. Imbens and D.~B. Rubin.
\newblock \emph{Causal Inference for Statistics, Social, and Biomedical
  Sciences: An Introduction}.
\newblock Cambridge University Press, New York, NY, USA, 2015.

\bibitem[Kahle and Wickham(2013)]{wickhamggmap}
David Kahle and Hadley Wickham.
\newblock ggmap: Spatial visualization with ggplot2.
\newblock \emph{The R Journal}, 5\penalty0 (1):\penalty0 144--161, 2013.
\newblock URL
  \url{https://journal.r-project.org/archive/2013-1/kahle-wickham.pdf}.

\bibitem[Kalisch et~al.(2012)Kalisch, M{\"a}chler, Colombo, Maathuis, and
  B{\"u}hlmann]{kalisch2012causal}
M.~Kalisch, M.~M{\"a}chler, D.~Colombo, M.~H. Maathuis, and P.~B{\"u}hlmann.
\newblock Causal inference using graphical models with the r package pcalg.
\newblock \emph{Journal of Statistical Software}, 47\penalty0 (11):\penalty0
  1--26, 2012.

\bibitem[Kolmogorov(1968)]{kolmogorov1965three}
Andrei~Nikolaevich Kolmogorov.
\newblock Three approaches to the quantitative definition of information.
\newblock \emph{International journal of computer mathematics}, 2\penalty0
  (1-4):\penalty0 157--168, 1968.

\bibitem[Lauritzen(1996)]{lau1996}
S.~L. Lauritzen.
\newblock \emph{Graphical models}.
\newblock Oxford Statistical Science Series. A Clarendon Press Publication.
  Oxford University Press, New York, NY, USA, 1996.

\bibitem[Lauritzen et~al.(1990)Lauritzen, Dawid, Larsen, and
  Leimer]{Lauritzen1990}
S.~L. Lauritzen, A.~P. Dawid, B.~N. Larsen, and H.~G. Leimer.
\newblock Independence properties of directed markov fields.
\newblock \emph{Networks}, 20\penalty0 (5):\penalty0 491--505, 1990.

\bibitem[Mhalla et~al.(2020)Mhalla, Chavez-Demoulin, and
  Dupuis]{mhalla2019causal}
Linda Mhalla, Valérie Chavez-Demoulin, and Debbie~J. Dupuis.
\newblock Causal mechanism of extreme river discharges in the upper danube
  basin network.
\newblock \emph{Journal of the Royal Statistical Society: Series C (Applied
  Statistics)}, 69\penalty0 (4):\penalty0 741--764, 2020.

\bibitem[Misra and Kuruoglu(2016)]{mis2016}
N.~Misra and E.~E. Kuruoglu.
\newblock Stable graphical models.
\newblock \emph{Journal of Machine Learning Research}, 17\penalty0
  (168):\penalty0 1--36, 2016.

\bibitem[Naveau et~al.(2018)Naveau, Ribes, Zwiers, Hannart, Tuel, and
  Yiou]{naveau2018revising}
P.~Naveau, A.~Ribes, F.~Zwiers, A.~Hannart, A.~Tuel, and P.~Yiou.
\newblock Revising return periods for record events in a climate event
  attribution context.
\newblock \emph{Journal of Climate}, 31\penalty0 (9):\penalty0 3411--3422,
  2018.

\bibitem[{Nestl\'{e}}(2019)]{nestle2019}
{Nestl\'{e}}.
\newblock Financial statements 2019.
\newblock
  \emph{\href{https://www.nestle.com/sites/default/files/2020-02/2019-financial-statements-en.pdf}{Online
  release}}, page~89, 2019.

\bibitem[{Novartis}(2019)]{novartis2019}
{Novartis}.
\newblock Annual report 2019.
\newblock
  \emph{\href{https://www.novartis.com/sites/www.novartis.com/files/novartis-annual-report-2019.pdf}{Online
  release}}, pages F--25--26, 2019.

\bibitem[Pearl(2009)]{Pearl2009}
J.~Pearl.
\newblock \emph{Causality: Models, Reasoning, and Inference}.
\newblock Cambridge University Press, New York, NY, USA, 2nd edition, 2009.

\bibitem[Peters and B{\"u}hlmann(2015)]{peters2015structural}
J.~Peters and P.~B{\"u}hlmann.
\newblock Structural intervention distance for evaluating causal graphs.
\newblock \emph{Neural computation}, 27\penalty0 (3):\penalty0 771--799, 2015.

\bibitem[Peters et~al.(2017)Peters, Janzing, and Sch\"olkopf]{Peters2017book}
J.~Peters, D.~Janzing, and B.~Sch\"olkopf.
\newblock \emph{Elements of Causal Inference: Foundations and Learning
  Algorithms}.
\newblock MIT Press, Cambridge, MA, USA, 2017.

\bibitem[Peters and B{\"u}hlmann(2014)]{peters2014identifiability}
Jonas Peters and Peter B{\"u}hlmann.
\newblock Identifiability of gaussian structural equation models with equal
  error variances.
\newblock \emph{Biometrika}, 101\penalty0 (1):\penalty0 219--228, 2014.

\bibitem[Peters et~al.(2009)Peters, Janzing, Gretton, and
  Sch{\"o}lkopf]{Peters2009icml}
Jonas Peters, Dominik Janzing, Arthur Gretton, and Bernhard Sch{\"o}lkopf.
\newblock Detecting the direction of causal time series.
\newblock In \emph{Proceedings of the 26th International Conference on Machine
  Learning ({ICML})}, pages 801--808. ACM Press, 2009.

\bibitem[Peters et~al.(2014)Peters, Mooij, Janzing, and
  Sch{\"o}lkopf]{peters2014causal}
Jonas Peters, Joris~M Mooij, Dominik Janzing, and Bernhard Sch{\"o}lkopf.
\newblock Causal discovery with continuous additive noise models.
\newblock \emph{Journal of Machine Learning Research}, 15\penalty0
  (1):\penalty0 2009--2053, 2014.

\bibitem[Resnick(1987)]{res2008}
S.~I. Resnick.
\newblock \emph{Extreme Values, Regular Variation and Point Processes}.
\newblock Springer, New York, NY, USA, 1987.

\bibitem[{Roche}(2019)]{roche2019}
{Roche}.
\newblock Finance report 2019.
\newblock
  \emph{\href{https://www.roche.com/dam/jcr:1e6cfce4-2333-4ed6-b98a-f6b62809221d/en/fb19e.pdf}{Online
  release}}, page 160, 2019.

\bibitem[Rodriguez(2007)]{rodriguez2007measuring}
J.~C. Rodriguez.
\newblock Measuring financial contagion: A copula approach.
\newblock \emph{Journal of Empirical Finance}, 14\penalty0 (3):\penalty0
  401--423, 2007.

\bibitem[Ros\'{e}n(1965)]{ros1965}
Bengt Ros\'{e}n.
\newblock Limit theorems for sampling from finite populations.
\newblock \emph{Arkiv f{\"o}r Matematik}, 5:\penalty0 383--424, 1965.

\bibitem[Seneviratne et~al.(2010)Seneviratne, Corti, Davin, Hirschi, Jaeger,
  Lehner, Orlowsky, and Teuling]{seneviratne2010investigating}
S.~I. Seneviratne, T.~Corti, E.~L. Davin, M.~Hirschi, E.~B. Jaeger, I.~Lehner,
  B.~Orlowsky, and A.~J. Teuling.
\newblock Investigating soil moisture--climate interactions in a changing
  climate: A review.
\newblock \emph{Earth-Science Reviews}, 99\penalty0 (3-4):\penalty0 125--161,
  2010.

\bibitem[Shimizu et~al.(2006)Shimizu, Hoyer, Hyv{\"a}rinen, and
  Kerminen]{shimizu2006linear}
S.~Shimizu, P.~O. Hoyer, A.~Hyv{\"a}rinen, and A.~Kerminen.
\newblock A linear non-gaussian acyclic model for causal discovery.
\newblock \emph{Journal of Machine Learning Research}, 7\penalty0
  (Oct):\penalty0 2003--2030, 2006.

\bibitem[Shimizu et~al.(2011)Shimizu, Inazumi, Sogawa, Hyv{\"a}rinen, Kawahara,
  Washio, Hoyer, and Bollen]{shimizu2011directlingam}
S.~Shimizu, T.~Inazumi, Y.~Sogawa, A.~Hyv{\"a}rinen, Y.~Kawahara, T.~Washio,
  P.~O. Hoyer, and K.~Bollen.
\newblock Directlingam: A direct method for learning a linear non-gaussian
  structural equation model.
\newblock \emph{Journal of Machine Learning Research}, 12\penalty0
  (Apr):\penalty0 1225--1248, 2011.

\bibitem[Smith(1990)]{smith1990regional}
Richard~L Smith.
\newblock Regional estimation from spatially dependent data.
\newblock 1990.
\newblock Unpublished.

\bibitem[Spirtes et~al.(2000)Spirtes, Glymour, and
  Scheines]{spirtes2000causation}
P.~Spirtes, C.~N. Glymour, and Scheines.
\newblock \emph{Causation, Prediction, and Search}.
\newblock MIT press, Cambridge, MA, USA, 2000.

\bibitem[Tashiro et~al.(2014)Tashiro, Shimizu, Hyv{\"a}rinen, and
  Washio]{tashiro2014parcelingam}
T.~Tashiro, S.~Shimizu, A.~Hyv{\"a}rinen, and T.~Washio.
\newblock Parcelingam: A causal ordering method robust against latent
  confounders.
\newblock \emph{Neural Computation}, 26\penalty0 (1):\penalty0 57--83, 2014.

\bibitem[Wang and Drton(2020)]{wang2020causal}
Y~Samuel Wang and Mathias Drton.
\newblock Causal discovery with unobserved confounding and non-gaussian data.
\newblock \emph{arXiv preprint arXiv:2007.11131}, 2020.

\bibitem[Zhang(2008)]{zhang2008causal}
J.~Zhang.
\newblock Causal reasoning with ancestral graphs.
\newblock \emph{Journal of Machine Learning Research}, 9\penalty0
  (Jul):\penalty0 1437--1474, 2008.

\bibitem[Zou et~al.()Zou, Volgushev, and B{\"u}cher]{zou2019}
N.~Zou, S.~Volgushev, and A.~B{\"u}cher.
\newblock Multiple block sizes and overlapping blocks for multivariate time
  series extremes.
\newblock \emph{Annals of Statistics}.
\newblock To appear.

\end{thebibliography}

\end{document}